\documentclass[11pt,draft]{article}

\addtolength{\textwidth}{2cm}
\addtolength{\oddsidemargin}{-1cm}

\usepackage[T1]{fontenc}

\usepackage{amsmath,amssymb,amsthm}
\usepackage[active]{srcltx}

\newcommand{\R}{\mathbb{R}}
\newcommand{\C}{\mathbb{C}}

\newcommand{\A}{{\cal A}}
\newcommand{\D}{{\cal D}}
\newcommand{\Abar}{\overline{\cal A}}
\newcommand{\K}{{\cal K}}
\newcommand{\F}{{\cal F}}
\newcommand{\pr}{{\rm pr}}
\newcommand{\la}{\lambda}

\newcommand{\h}{{\cal H}}

\newcommand{\eps}{\epsilon}
\newcommand{\veps}{\varepsilon}
\newcommand{\Cyl}{{\rm Cyl}}
\newcommand{\ot}{\otimes}

\newcommand{\bld}[1]{\boldsymbol{#1}}

\newcommand{\diff}{{\rm Diff}}

\newcommand{\we}{\wedge}

\newcommand{\lr}{\lrcorner\,}

\DeclareMathOperator{\tr}{{\rm tr}}
\DeclareMathOperator{\sgn}{{\rm sgn}}
\DeclareMathOperator{\spn}{{\rm span}}
\newcounter{mnotecount}[section]

\newtheorem{thr}{Theorem}
\newtheorem{lm}[thr]{Lemma}
\newtheorem{df}[thr]{Definition}
\newtheorem{cor}[thr]{Corollary}
\newtheorem{pro}[thr]{Proposition}

\numberwithin{equation}{section}
\numberwithin{thr}{section}

\begin{document}

\title{Kinematic quantum states for the Teleparallel Equivalent of General Relativity\footnote{This is an author-created version of a paper published as {\em Gen. Rel. Grav.} {\bf 46} 1653 (2014) DOI 10.1007/s10714-013-1653-3}}
\author{Andrzej Oko{\l}\'ow}
\date{July 14, 2014}

\maketitle
\begin{center}
{\it  Institute of Theoretical Physics, Warsaw University\\ ul. Pasteura 5, 02-093 Warsaw, Poland\smallskip\\
oko@fuw.edu.pl}
\end{center}
\medskip

\begin{abstract}
A space of kinematic quantum states for the Teleparallel Equivalent of General Relativity is constructed by means of projective techniques. The states are kinematic in this sense that their construction bases merely on the structure of the phase space of the theory and does not take into account constraints on it. The space of quantum states is meant to serve as an element of a canonical background independent quantization of the theory.
\end{abstract}

\section{Introduction}

Nowadays there are many approaches \cite{app,carlip} to quantum gravity but so far no one is fully successful. Therefore it is still worth to take a risk to develop a new approach. It seems that the Teleparallel Equivalent of General Relativity (TEGR) was never used as a point of departure for a construction of a model of quantum gravity and therefore we would like to check whether it is possible to quantize gravity in this formulation (for the latest review of TEGR see \cite{mal-rev}). More precisely, we would like to check whether it is possible to quantize TEGR using the method of canonical quantization or, if it is needed, a modification of the method. Since TEGR is a background independent theory we would like to quantize it in a background independent manner.

TEGR in its canonical formulation is a constrained system (see e.g. \cite{maluf-1,maluf,oko-tegr,bl}). Therefore  it is quite natural to attempt to apply the Dirac strategy of canonical quantization of such systems which requires two steps to be carried out: $(i)$ first one neglects constraints and constructs a space of kinematic quantum states, that is, quantum states corresponding to all classical states constituting the whole phase space $(ii)$ then among the kinematic quantum states  one distinguishes physical quantum states as those corresponding to classical states satisfying all the constraints. The space of kinematic quantum states is usually a Hilbert space and to carry out the second step one tries to find operators on the Hilbert space corresponding to the constraints and singles out physical quantum states as those annihilated by the operators (this procedure is valid if all the constraints are of the first class).    

In this paper we construct a space of kinematic quantum states for TEGR treated as a theory of cotetrad fields on a four-dimensional manifold. More precisely, the construction is valid for any theory of cotetrad fields the phase space of which coincides with that of TEGR---an example of such a theory is the Yang-Mills-type Teleparallel Model (YMTM) considered in \cite{itin,os}. 

The space of quantum states for TEGR, which since now will be denoted by $\D$, will be constructed according to a method presented in \cite{q-stat} combined with some Loop Quantum Gravity (LQG) techniques \cite{acz,cq-diff,rev,rev-1}. This method being a generalization of a construction by Kijowski \cite{kpt} provides us with a space of quantum states which is not a Hilbert space but rather {\em a convex set of quantum states}---these states can be seen as algebraic states (i.e. linear positive normed functionals) on a $C^*$-algebra which can be thought of as an algebra of some quantum observables.    

We will also show that spatial diffeomorphisms act naturally on the space $\D$ which allows to hope that $\D$ can be used as an element of a background independent quantization of TEGR.  

The construction of $\D$ is similar to a construction of a space of quantum states for the degenerate Pleba\'nski gravity (DPG) \cite{q-stat} and the descriptions of both constructions follow the same pattern. It may be helpful to study first the construction in \cite{q-stat} since it is simpler than that of $\D$.

Let us mention that except the space $\D$ it is possible to construct other spaces of kinematical quantum states for TEGR---in this paper we will briefly describe the other spaces and comment on their possible application to quantization of TEGR.

To proceed further with quantization of TEGR it is necessary to single out physical quantum states in the space $\D$, that is, to carry out the second step of the Dirac strategy. Since $\D$ is not a Hilbert space the standard procedure mentioned above by means of which one distinguishes physical quantum states has to be modified in a way. At this moment we are not able to present a satisfactory and workable modification of the procedure (some remarks on this very important issue can be found in \cite{q-stat}), but we hope that this problem will be solved in the future.

The paper is organized as follows: Section 2 contains preliminaries, in Section 3 the space of quantum states for TEGR is constructed, in Section 4 we define an action of spatial diffeomorphisms on $\D$, Section 5 contains a short description of the other spaces of quantum states, in Section 6 we discuss the results. Finally, in Appendix A we show that the space $\D$ is identical to one of the other spaces.

\section{Preliminaries}

\subsection{Cotetrad fields}

Let $\mathbb{M}$ be a real four-dimensional oriented vector space equipped with a scalar product $\eta$ of signature $(-,+,+,+)$. We fix an orthonormal basis $(v_A)$ $(A=0,1,2,3)$ of $\mathbb{M}$ such that the components $(\eta_{AB})$ of $\eta$ given by the basis form the matrix ${\rm diag}(-1,1,1,1)$. The matrix $(\eta_{AB})$ and its inverse $(\eta^{AB})$ will be used to, respectively, lower and raise capital Latin letter indeces $A,B,C,D\in\{0,1,2,3\}$. 

Denote by $\mathbb{E}$ the subspace of $\mathbb{M}$ spanned by the vectors $\{v_1,v_2,v_3\}$. The scalar product $\eta$ induces on $\mathbb{E}$ a positive definite scalar product $\delta$. Its components $(\delta_{IJ})$ in the basis $(v_1,v_2,v_3)$ form a matrix ${\rm diag}(1,1,1)$. The matrix $(\delta_{IJ})$ and its inverse $(\delta^{IJ})$ will be used to, respectively, lower and raise capital Latin letter indeces $I,J,K,L,M\in\{1,2,3\}$. In some formulae we will use the three-dimensional permutation symbol which will be denoted by $\veps_{IJK}$.

\subsection{Phase space \label{phsp}}

The goal of this paper is to construct a space of quantum states for theories of a particular phase space consisting of some fields defined on a three-dimensional {\em oriented} manifold $\Sigma$---a point in the phase space consists of:
\begin{enumerate}
\item a quadruplet of one-forms $(\theta^{A})$, $A=0,1,2,3$,  on $\Sigma$ such that the metric 
\begin{equation}
q:=\eta_{AB}\theta^A\ot\theta^B
\label{q}
\end{equation}
is Riemannian (positive definite); 
\item a quadruplet of two-forms $(p_B)$, $B=0,1,2,3$,  on $\Sigma$.  
\end{enumerate}         
$p_A$ is the momentum conjugate to $\theta^A$. The set of all $(\theta^A)$ satisfying the assumption above will be called a {\em Hamiltonian configuration space} and denoted by $\Theta$, while the set of all $(p_A)$ will be called a {\em momentum space} and denoted by $P$. Thus the phase space is the Cartesian product $P\times\Theta$. The Poisson bracket between two functions $f_1$ and $f_2$ on the phase space is given by the following formula
\begin{equation}
\{f_1,f_2\}=\int_\Sigma\Big(\frac{\delta f_1}{\delta {\theta}^A}\we\frac{\delta f_2}{\delta p_A}-\frac{\delta f_2}{\delta {\theta}^A}\we\frac{\delta f_1}{\delta p_A}\Big)
\label{poiss-0}
\end{equation}
---a definition of the variational derivative with respect to a differential form can be found in \cite{os}.      

As shown in, respectively, \cite{oko-tegr} and \cite{os} both TEGR and YMTM  possess such a phase space.  

It turns out \cite{q-suit} that it is possible to construct quantum states via the method presented in \cite{q-stat} starting from the phase space description above (which in a sense is a natural description)---see Section \ref{other}. However, as it was argued in \cite{q-suit}, a space of these quantum states possesses an undesired property. Therefore the space of quantum states $\D$ will be constructed starting from another description \cite{ham-nv} of the phase space. 

Let $\iota$ be a function defined on a space of all global coframes on $\Sigma$ valued in $\{-1,1\}$. Since for every $(\theta^A)=(\theta^0,\theta^I)\in\Theta$ the triplet $(\theta^I)$ is a global coframe on the manifold \cite{q-suit} $\iota$ can be regarded as a function on $\Theta$. Every function $\iota$ which is a constant function on every path-connected subset of $\Theta$  defines new variables on the phase space \cite{ham-nv} which provide new description of the space. According to it a point in the phase space consists of:
\begin{enumerate}
\item a collection $(\xi^I_\iota,\theta^J)\equiv \theta$, where $\xi_\iota^I$, $I=1,2,3$, is a real function (a zero-form) on $\Sigma$ and $(\theta^J)$, $J=1,2,3$, are one-forms on $\Sigma$ constituting a global coframe;
\item a collection $(\zeta_{\iota I},r_J)\equiv p$, where $\zeta_{\iota I}$, $I=1,2,3$, is a three-form on $\Sigma$ and $r_J$, $J=1,2,3$, is a two-form on the manifold. 
\end{enumerate}                  
$\zeta_{\iota I}$ is the momentum conjugate to $\xi^I_\iota$ and $r_J$ is the momentum conjugate to $\theta^J$. Thus all the $(\xi^I_\iota,\theta^J)$ constitute the Hamiltonian configuration space $\Theta$ while all the $(\zeta_{\iota I},r_J)$ constitute the momentum space $P$. The Poisson \eqref{poiss-0} reads now as follows 
\begin{equation}
\{f_1,f_2\}=\int_\Sigma\Big(\frac{\delta f_1}{\delta {\xi}_\iota^I}\we\frac{\delta f_2}{\delta \zeta_{\iota I}}+\frac{\delta f_1}{\delta {\theta}^I}\we\frac{\delta f_2}{\delta r_I}-\frac{\delta f_2}{\delta {\xi}_\iota^I}\we\frac{\delta f_1}{\delta \zeta_{\iota I}}-\frac{\delta f_2}{\delta {\theta}^I}\we\frac{\delta f_1}{\delta r_I}\Big).
\label{poiss}
\end{equation}

Regarding a relation of the latter description to the former let us first express the dependence of $(p_A,\theta^B)$ on $(\zeta_{\iota I},r_J,\xi_\iota^K,\theta^L)$ \cite{ham-nv}:
\begin{equation}
\begin{aligned}
&p_0=\iota(\theta^K)\sqrt{1+\xi_{\iota J}\xi_\iota^J}\,\vec{\theta}^I\lr\zeta_{\iota I},&& p_I=r_I-\xi_{\iota I}\,\vec{\theta}^J\lr\zeta_{\iota J},\\
&\theta^0=\iota(\theta^J)\frac{\xi_{\iota I}}{\sqrt{1+\xi_{\iota K}\xi_\iota^K}}\,\theta^I, & & \theta^I=\theta^I.
\end{aligned} 
\label{old-new}
\end{equation}
Here $\vec{\theta}^I$ is a vector field on $\Sigma$ obtained from $\theta^I$ by raising its index by a metric inverse to the metric $q$---in a local coordinate frame $(x^i)$ on $\Sigma$       
\[
\vec{\theta}^I:=q^{ij}\theta^I_j\partial_{x^i}.
\]
Since \cite{ham-nv}
\begin{equation}
q=\Big(\delta_{IJ}-\frac{\xi_{\iota I}\xi_{\iota J}}{1+\xi_{\iota K}\xi_\iota^K}\Big)\theta^I\ot\theta^J
\label{q-xi}
\end{equation}
the vector field $\vec{\theta}^I$ is a function of both $\xi_\iota^J$ and $\theta^L$.

The inverse dependence, that is, the dependence of $(\zeta_{\iota I},r_J,\xi_\iota^K,\theta^L)$ on $(p_A,\theta^B)$ reads \cite{ham-nv}
\begin{equation}
\begin{aligned}
&\zeta_{\iota I}=\iota(\theta^K)\sqrt{\det (q_{MN})}\,q_{IJ}\,\theta^J\we p_0, \\
&r_I= \frac{\sqrt{\det (q_{MN})}}{2}\sgn(\theta^L)*(\theta^0\we\theta^J\we\theta^K)\,\veps_{IJK}\,p_0+p_I,\\
&\xi^I_\iota=\frac{1}{2}\frac{\iota(\theta^L)}{\sgn(\theta^L)}*(\theta^0\we\theta_J\we\theta_K)\,\veps^{IJK},\\
&\theta^I=\theta^I.
\end{aligned}
\label{new-old}
\end{equation}
Here $*$ is a Hodge operator defined by the metric $q$, and $(q_{IJ})$ are components of $q$ in the basis $(\theta^J)$. Let us emphasize that in \eqref{new-old} $q$ is treated as a function of $(\theta^A)$ (see \eqref{q}). Moreover,  
\begin{equation}
\sgn(\theta^I):=
\begin{cases}
1 & \text{if $(\theta^I)$ is compatible with the orientation of $\Sigma$}\\
-1 & \text{otherwise}
\end{cases}.
\label{sgn-th}
\end{equation}

\section{Construction of quantum states for a theory of the phase space $P\times\Theta$}

\subsection{Choice of variables}

The construction of a space of quantum states for TEGR we are going to present in this section can be successfully carried out starting from any variables $(\zeta_{\iota I},r_J,\xi_\iota^K,\theta^L)$. However, as proved in \cite{ham-nv} unless $\iota=\sgn$ or $\iota=-\sgn$, where $\sgn$ is given by \eqref{sgn-th}, the constraints of TEGR found in \cite{oko-tegr} and the constraints of YMTM found in \cite{os} cannot be imposed on the resulting space of quantum states. Therefore it is reasonable to restrict ourselves to variables
\begin{align*}
&(\zeta_{s I},r_J,\xi_s^K,\theta^L), &&(\zeta_{-s I},r_J,\xi_{-s}^K,\theta^L)
\end{align*}   
defined by, respectively, $\iota=\sgn$ or $\iota=-\sgn$. Actually, we will construct the space $\D$ using the variables $(\zeta_{s I},r_J,\xi_s^K,\theta^L)$, and then we  will show that a space $\D_{-s}$ built from the variables $(\zeta_{-s I},r_J,\xi_{-s}^K,\theta^L)$  coincides with $\D$.

Since now we will use a simplified notation according to which 
\begin{equation}
(\zeta_{I},r_J,\xi^K,\theta^L)\equiv(\zeta_{s I},r_J,\xi_{s}^K,\theta^L).
\label{simp-n}
\end{equation}

\subsection{Outline of the construction \label{out}}

Following \cite{q-stat} we will first choose $(i)$ a special set $\cal K$ of real functions on $\Theta$ and call the functions {\em configurational elementary degrees of freedom} and $(ii)$ a special set $\cal F$ of real functions on $P$ and call the functions  {\em momentum elementary degrees of freedom}. The configurational d.o.f. will be then used to define functions on $\Theta$ of a special sort called {\em cylindrical functions}. Next, each momentum d.o.f. will define via the Poisson bracket \eqref{poiss} or its regularization a linear operator on the space of cylindrical functions. Thus we will obtain a {\em linear space} $\hat{\cal F}$ spanned by operators associated with elements of $\cal F$.

In the next step of the construction we will choose a set $\Lambda$ such that each element of it is a pair $(\hat{F},K)$, where $\hat{F}$ is a finite dimensional linear subspace of $\hat{\cal F}$ and $K$ is a finite set of configurational elementary d.o.f.. Then we will define on $\Lambda$ a relation $\geq$ equipping it with the structure of a directed set and show that $(\Lambda,\geq)$ satisfies some special assumptions. This will finish the construction since at this moment we will refer to \cite{q-stat} where it was shown that from each directed set satisfying the assumptions one can build a space of quantum states. 

The construction of the space of quantum states from such a directed set $(\Lambda,\geq)$ proceeds as follows. Given $(\hat{F},K)\equiv\lambda\in \Lambda$, one uses elements of $K$ to reduce the ``infinite-dimensional'' space $\Theta$ to a space $\Theta_K$ of finite dimension. Next, one defines $(i)$ a Hilbert space $\h_\lambda$  as a space of functions on $\Theta_K$ square integrable with respect to a natural measure on $\Theta_K$ and $(ii)$ a space $\D_\lambda$ of all density operators on the Hilbert space (i.e. positive operators of trace equal $1$). It turns out that assumed properties of the set $(\Lambda,\geq)$ unambiguously induce on a set $\{\D_\lambda\}_{\lambda\in\Lambda}$ the structure of a projective family. The space of quantum states is then defined as the projective limit of the family.        

Let us emphasize that our choice of elementary d.o.f. as well as application of graphs, cylindrical functions  and the operators defined on them by the Poisson bracket is motivated by LQG methods---see \cite{acz,cq-diff,rev,rev-1} and references therein. 

\subsection{Submanifolds of $\Sigma$}

Each elementary d.o.f. we are going to use will be associated with a submanifold of $\Sigma$. 

Following the LQG methods since now till the end of this paper we will assume that the manifold $\Sigma$ is {\em real analytic}\footnote{Equally well we could assume that the manifold is {\em semi-analytic}---see e.g. \cite{lost,fl} for the definition of semi-analyticity.}.  

An {\em analytic edge} is a one-dimensional connected analytic embedded submanifold of $\Sigma$ with two-point boundary. An {\em oriented} one-dimensional connected $C^0$ submanifold of $\Sigma$ given by a finite union of analytic edges will be called an {\em edge}. The set of all edges in $\Sigma$ will be denoted by $\cal E$.  

Given an edge $e$ of two-point boundary, its orientation  allows to call one of its endpoints {\em a source} and the other {\em a target} of the edge; if an edge is a loop then we distinguish one of its points and treat it simultaneously as the source and the target of the edge.

An edge $e^{-1}$ is called an {\em inverse} of an edge $e$ if $e^{-1}$ and $e$ coincide as un-oriented submanifolds of $\Sigma$ and differ by their orientations.  We say that an edge $e$ is a composition of the edges $e_1$ and $e_2$, $e=e_2\circ e_1$, if $(i)$ $e$ as an oriented manifold is a union of $e_1$ and $e_2$, $(ii)$ the target of $e_1$ coincides with the source of $e_2$ and $(iii)$ $e_1\cap e_2$ consists solely of some (or all) endpoints of $e_1$ and $e_2$.               

We say that two edges are {\em independent} if the set of their common points is either empty or consists solely of some (or all) endpoints of the edges. A {\em graph} in $\Sigma$ is a finite set of pairwise independent edges. Any finite set of edges can be described in terms of edges of a graph \cite{al-hoop}: 
\begin{lm}
For every finite set $E=\{e_1,\ldots,e_N\}$ of edges there exists a graph $\gamma$ in $\Sigma$ such that every $e_j\in E$ is a composition of some edges of $\gamma$ and the inverses of some edges of the graph.  
\label{E-gamma}
\end{lm}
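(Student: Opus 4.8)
The plan is to reduce everything to the constituent analytic edges and then to exploit real analyticity to cut them into pieces that overlap ``cleanly''. Since each $e_j$ is by definition a finite union of analytic edges, the family $E$ determines a finite collection $A=\{a_1,\dots,a_M\}$ of analytic edges such that every $e_j$ is a composition of members of $A$ and their inverses. It therefore suffices to produce a graph $\gamma$ refining $A$, that is, one for which each $a_i$ is a composition of edges of $\gamma$ and their inverses; the statement for the original $e_j$ then follows by transitivity of composition.

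The key input, and the step I expect to be the main obstacle, is the following analyticity fact: for any two analytic edges $a,a'$ the intersection $a\cap a'$ is a disjoint union of finitely many points and finitely many analytic subarcs. I would prove this by parametrising $a$ and $a'$ by analytic embeddings of intervals and examining the analytic set $S=\{(s,t)\mid a(s)=a'(t)\}$; injectivity of the parametrisations forces $S$ to be the graph of a partial function, and the identity theorem for real-analytic functions forbids the overlap locus from accumulating without the two arcs actually coinciding along a whole subarc. This is precisely the place where real analyticity, as opposed to mere smoothness, is indispensable, and it is the content cited from \cite{al-hoop}. The same argument, applied to a single $a_i$ against itself, shows that each $a_i$ has only finitely many self-intersection points.

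With this in hand I would assemble a finite set $V\subset\Sigma$ of breakpoints containing: all endpoints of all $a_i$; all isolated points of every pairwise intersection $a_i\cap a_j$ and of every self-intersection; and the endpoints of every maximal overlap subarc of every pair. Cutting each $a_i$ at the points of $V$ lying on it yields finitely many subarcs, each an analytic edge whose endpoints lie in $V$ and whose interior meets $V$ nowhere. I would then verify the crucial separation property: if two such subarcs $b,b'$ share an interior point $x$, then $x$ is not an isolated intersection point (those lie in $V$), so $b$ and $b'$ coincide on a neighbourhood of $x$; letting $O$ be the maximal overlap arc through $x$, whose endpoints lie in $V$, both $b$ and $b'$ are forced to be subarcs of $O$ with interiors avoiding $V$, whence $b=b'$ as unoriented submanifolds. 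Consequently any two distinct subarcs meet at most in their endpoints, i.e.\ are independent.

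Finally I would form $\gamma$ by choosing one oriented representative from each class of coinciding subarcs; by the separation property $\gamma$ is a finite set of pairwise independent edges, hence a graph. Each subarc of each $a_i$ equals an element of $\gamma$ or its inverse, so $a_i$---and therefore every $e_j$---is a composition of edges of $\gamma$ and inverses of such edges, the composition conditions being guaranteed because consecutive subarcs meet only at points of $V$. This completes the argument.
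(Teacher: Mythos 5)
Your proof is correct, and it follows essentially the same route as the paper's source for this statement: the paper itself gives no proof of Lemma \ref{E-gamma}, quoting it from \cite{al-hoop}, where precisely your strategy is used --- decompose each $e_j$ into constituent analytic edges, invoke the identity theorem for real-analytic functions to show that two analytic edges intersect in finitely many isolated points together with finitely many closed overlap arcs, cut everything at the resulting finite set of breakpoints, and reassemble the pieces into a finite set of pairwise independent edges. The only superfluous step in your write-up is the treatment of self-intersections of the $a_i$: an analytic edge is by definition an \emph{embedded} submanifold, hence has no self-intersections, though including this case costs nothing.
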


The set of all graphs in $\Sigma$ is naturally a directed set: $\gamma'\geq\gamma$ if each edge of the graph $\gamma$ is a composition of some edges of the graph $\gamma'$ and the inverses of some edges of $\gamma'$. 
 
Let $S$ be a two-dimensional embedded submanifold of $\Sigma$. Assume that $S$ is $(i)$ analytic, $(ii)$ oriented and $(iii)$ of a compact closure. We moreover require $S$ to be such that every edge $e\in{\cal E}$ can be {\em adapted} to $S$ in the following sense \cite{area}: $e$ can be divided into a finite number of edges $\{e_1,\ldots,e_N\}$, i.e.
\[
e=e_N\circ e_{N-1}\circ\ldots\circ e_2\circ e_1,
\] 
each of them either
\begin{enumerate}
\item is contained in the closure $\overline{S}$; 
\item has no common points with $S$; 
\item has exactly one common point with $S$ being one of its two distinct endpoints.    
\end{enumerate}
We will call such a submanifold a {\em face}. A set of all faces in $\Sigma$ will be denoted by $\cal S$.   

A three-dimensional submanifold $V$ of $\Sigma$ of a compact closure and of an orientation inherited from $\Sigma$ will be called a {\em region}. A set of all regions in $\Sigma$ will be denoted by $\cal V$.

\subsection{Elementary degrees of freedom}

Note that the variables $(\xi^I,\theta^J)$ and $(\zeta_K,r_L)$ parameterizing the phase space $P\times\Theta$ are respectively, zero-forms (functions), one-forms, three-forms and two-forms which can be naturally integrated over submanifolds of $\Sigma$ of appropriate dimensions.     

Thus every point $y\in\Sigma$ defines naturally a function on $\Theta$:
\begin{equation}
\Theta\ni \theta\mapsto \kappa^I_y(\theta):=\xi^I(y)\in\R.
\label{k-y}
\end{equation}  
Similarly, every edge $e$ defines a function on $\Theta$: 
\begin{equation}
\Theta\ni \theta\mapsto \kappa^J_e(\theta):=\int_e\theta^J\in\R.
\label{k-e}
\end{equation}
We choose the set $\K$ of configurational elementary d.o.f  as follows 
\[
\K:=\{\ \kappa^I_y, \kappa^J_e \ | \ I,J=1,2,3, \, y\in\Sigma, \, e\in{\cal E} \ \}.
\]
It is easy to realize that the functions in $\K$ separate points in $\Theta$. 

Note that for every $I=1,2,3$, every $e\in{\cal E}$ and every pair of edges $e_1,e_2\in{\cal E}$ for which the composition  $e_2\circ e_1$ makes sense    
\begin{align}
&\kappa^I_{e^{-1}}=-\kappa^I_{e}, && \kappa^I_{e_2\circ e_1}=\kappa^I_{e_2}+\kappa^I_{e_1}.
\label{keke}
\end{align}
  
Every region $V$ defines a function on $P$:
\begin{equation}
P\ni p\mapsto \varphi^V_{I}(p):=\int_V\zeta_I\in\R.
\label{phi-V}
\end{equation} 
Similarly, every face $S$ defines a function on $P$:
\begin{equation}
P\ni p\mapsto \varphi^S_{J}(p):=\int_S r_J\in\R.
\label{phi-S}
\end{equation}      
We choose the set $\F$ of momentum elementary d.o.f  as follows
\[
\F:=\{ \ \varphi^V_{I}, \varphi^S_{J}\ | \ I,J=1,2,3, \, V\in{\cal V}, \, S\in{\cal S} \ \}.
\]
It is not difficult to check that the functions in $\F$ separate points in $P$. 

\subsection{Finite sets of configurational elementary d.o.f. \label{fin-sets}}

Let $K=\{\kappa_{1},\ldots,\kappa_N\}\subset{\cal K}$ be a finite set of elementary d.o.f.. We say that $\theta\in \Theta$ is $K$-related to $\theta'\in \Theta$,
\[
\theta\sim_{K} \theta',
\]
if for every $\kappa_{\alpha}\in {K}$      
\[
\kappa_{\alpha}(\theta)=\kappa_{\alpha}(\theta').
\]
Clearly, the relation $\sim_{K}$ is an equivalence one. Therefore it defines a quotient space
\begin{equation}
\Theta_{K}:=\Theta/\sim_{K}.
\label{quot}
\end{equation}
Note now that there exist $(i)$ a canonical projection from $\Theta$ onto $\Theta_K$:  
\begin{equation}
\Theta\ni \theta\mapsto\pr_{K}(\theta)=[\theta]\in \Theta_{K}
\label{pr-K}
\end{equation}
and $(ii)$ an {\em injective} map\footnote{Note that each set $K$ is unordered, thus to define the map $\tilde{K}$ one has to order elements of $K$. However, every choice of the ordering is equally well suited for our purposes and nothing essential depends on the choice. Therefore we will neglect this subtlety in what follows.} from $\Theta_{K}$ into $\R^N$:
\begin{equation}
\Theta_{K}\ni[\theta]\mapsto\tilde{K}([\theta]):=(\kappa_{1}(\theta),\ldots,\kappa_{N}(\theta))\in\R^N,
\label{k-inj}
\end{equation}
where $N$ is the number of elementary d.o.f. constituting $K$ and $[\theta]$ denotes the equivalence class of $\theta$ defined by the relation $\sim_{K}$.       

We will say that elementary d.o.f. in ${K}=\{\kappa_{1},\ldots,\kappa_{N}\}$ are {\em independent} if the image of $\tilde{K}$ is an $N$-dimensional submanifold of $\R^N$. A quotient space $\Theta_K$ given by a set $K$ of independent d.o.f. will be called a {\em reduced configuration space}.

\begin{lm}
Let $u=\{y_1,\ldots,y_M\}$ be a finite collection of points in $\Sigma$ and $\gamma=\{e_1,\ldots,e_N\}$ be a graph such that either $u$ or $\gamma$ is not an empty set ($N,M\geq0$ but $N+M>0$). Then for every $(z^I_{i},x^J_{j})\in\R^{3M}\times \R^{3N}$ there exists $\theta\in\Theta$ such that  
\begin{align*}
&\kappa^I_{y_{i}}(\theta)=z^I_{i},&&\kappa^J_{e_{j}}(\theta)=x^J_{j}
\end{align*}
for every $I,J=1,2,3$, $i=1,\ldots,M$ and $j=1,2,\ldots,N$.
\label{lm-Kug-xi}
\end{lm}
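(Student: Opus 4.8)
The plan is to exploit the fact that the two families of degrees of freedom decouple. The function $\kappa^I_{y_i}$ depends on $\theta=(\xi^I,\theta^J)$ only through the zero-forms $\xi^I$, while $\kappa^J_{e_j}$ depends only on the coframe $(\theta^J)$; moreover, in the definition of $\Theta$ the $\xi^I$ and the $(\theta^J)$ are unconstrained relative to one another (the former are arbitrary real functions, the latter an arbitrary global coframe). Hence I would construct the two parts separately. For the functions, since $u=\{y_1,\dots,y_M\}$ is a set of \emph{distinct} points, I pick pairwise disjoint coordinate balls $B_i\ni y_i$ and bump functions $\chi_i$ with $\chi_i(y_i)=1$ and $\mathrm{supp}\,\chi_i\subset B_i$, and set $\xi^I:=\sum_{i}z^I_i\,\chi_i$, so that $\kappa^I_{y_i}(\theta)=\xi^I(y_i)=z^I_i$. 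This part is routine; the whole content of the lemma lies in the coframe.

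For the coframe I would fix a reference global coframe $\vartheta^J$ (available since $\Theta\neq\emptyset$, i.e.\ $\Sigma$ is parallelizable) and adjust it by compactly supported corrections. The key geometric input is that $\gamma$ is a graph, so its edges are pairwise independent: any common point of two edges is an endpoint of both, whence the interior of each $e_j$ is disjoint from all other edges and from every vertex. This lets me choose, for each pair $(e_j,J)$ with $j=1,\dots,N$ and $J=1,2,3$, a short sub-arc in the interior of $e_j$ together with a tubular neighbourhood $U^J_j$, in such a way that all $3N$ neighbourhoods are mutually disjoint and disjoint from the other edges. On each $U^J_j$ I build a one-form $\beta^J_j$ supported there with $\int_{e_j}\beta^J_j=1$; by disjointness $\int_{e_k}\beta^J_j=0$ for $k\neq j$. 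Setting $c^J_j:=x^J_j-\int_{e_j}\vartheta^J$ and defining $\theta^J:=\vartheta^J+\sum_{j}c^J_j\,\beta^J_j$ then yields $\int_{e_j}\theta^J=x^J_j$ for all $j,J$.

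The hard part is to guarantee that the resulting $(\theta^J)$ is still a coframe, i.e.\ pointwise linearly independent, since the corrections $c^J_j\beta^J_j$ may be arbitrarily large. Outside all supports one has $\theta^J=\vartheta^J$, and on $U^J_j$ only the single component $\theta^J$ is altered; writing $\beta^J_j=b_L\vartheta^L$, the matrix expressing $(\theta^K)$ in the frame $(\vartheta^K)$ is the identity with row $J$ replaced by $\delta^J{}_L+c^J_j b_L$, so its determinant equals $1+c^J_j b_J$. Thus non-degeneracy reduces to the single scalar condition $1+c^J_j b_J(p)\neq 0$ on each $U^J_j$, and I expect this to be the main obstacle. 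The clean way to secure it is to control the \emph{direction} of the correction rather than its size: generically one chooses $\beta^J_j$ with vanishing $\vartheta^J$-component, $b_J\equiv 0$, making the determinant identically $1$; this is possible whenever the sub-arc is chosen where $e_j$ is not tangent to the $\vartheta^J$-direction, and a generic choice of the reference coframe $\vartheta$ excludes such tangency along the finitely many analytic pieces of the edges, so admissible sub-arcs exist. With $(\xi^I,\theta^J)\in\Theta$ built in this way, all prescribed values $z^I_i$ and $x^J_j$ are attained, which proves the lemma; the degenerate cases $N=0$ or $M=0$ are covered by simply omitting the corresponding part of the construction.
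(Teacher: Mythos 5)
Your argument is correct, but one caveat on the comparison: this paper does not actually prove Lemma \ref{lm-Kug-xi} --- it is imported, with a citation, from the companion paper \cite{q-suit} (as is the related Lemma \ref{theta-x3}), so there is no in-paper proof to measure your attempt against. On its own merits your construction is sound and self-contained: the splitting into the $\xi$-part and the coframe part is legitimate, because in the variables $(\xi^I,\theta^J)$ the space $\Theta$ is a genuine product (arbitrary real functions times arbitrary global coframes, with no mutual constraint); the bump-function step for the values $z^I_i$ is routine; the additive corrections $c^J_j\beta^J_j$ supported in $3N$ mutually disjoint tubes around sub-arcs interior to the edges yield the prescribed integrals, since edges of a graph meet each other only at endpoints; and the computation $\det=1+c^J_j b_J$ correctly reduces pointwise non-degeneracy of $(\theta^J)$ to a scalar condition in each tube.

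The one step you must spell out is the final genericity claim, since it carries the real weight of the non-degeneracy argument. It can be made precise as follows: it suffices to replace the reference coframe $\vartheta$ by $A\vartheta$ for a suitable \emph{constant} $A\in GL(3,\R)$. Let $v_j(p)\in\R^3\setminus\{0\}$ denote the components $\big(\vartheta^K(\dot e_j(p))\big)$ of the tangent of $e_j$ in the frame dual to $\vartheta$. The pair $(e_j,J)$ obstructs your choice of $\beta^J_j$ exactly when $\dot e_j$ is everywhere proportional to the $J$-th vector of the frame dual to $A\vartheta$, i.e.\ when the projective direction $[v_j(p)]$ is constant along $e_j$ and equal to the direction of the $J$-th column of $A^{-1}$. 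Hence each edge forbids at most one projective direction (none if $[v_j]$ is non-constant along it), so with finitely many edges you can choose three linearly independent columns of $A^{-1}$ avoiding all forbidden directions. Note also that this tilting of the reference coframe cannot be replaced by a smallness argument: if $e_j$ is an integral curve of the frame vector dual to $\vartheta^J$, then only the $\vartheta^J$-component of $\beta^J_j$ contributes to $\int_{e_j}\beta^J_j$, so for large $|c^J_j|$ no correction with pointwise small $\vartheta^J$-component can meet the normalization $\int_{e_j}\beta^J_j=1$; you were therefore right that one must control the \emph{direction} of the correction rather than its size, and with the above refinement your proof is complete.
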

\noindent This lemma proven in \cite{q-suit} guarantees that if  
\begin{equation}
{K}_{u,\gamma}:=\{ \ {\kappa}^I_{y_1},\ldots,{\kappa}^I_{y_M},\kappa^J_{e_1},\ldots,\kappa^J_{e_N} \ | \ I,J=1,2,3 \ \}.
\label{K-ug}
\end{equation}
then
\begin{equation}
\Theta_{K_{u,\gamma}}\cong\R^{3M}\times\R^{3N},
\label{TKug-RN}
\end{equation}
under the map $\tilde{K}_{u,\gamma}$, i.e. $\tilde{K}_{u,\gamma}$ is a bijection. It means in particular that the d.o.f. constituting $K_{u,\gamma}$ are independent and $\Theta_{K_{u,\gamma}}$ is a reduced configuration space. We are also allowed to conclude that if $K$ is a one-element subset of $\K$ then $\tilde{K}$ is a bijection and consequently $K$ is a set of independent d.o.f. and $\Theta_K$ is a reduced configuration space.

Consider now a finite set $K$ of configurational elementary d.o.f. containing some (possibly none) d.o.f. \eqref{k-y} and some (possibly none) d.o.f. \eqref{k-e}. Let $u$ be a set of points defining elements of $K$ of the type \eqref{k-y} and let $E$ be a set of edges defining elements of $K$ of the type \eqref{k-e}. Let $\gamma$ be a graph related to $E$ as stated in Lemma \ref{E-gamma}. Since every $e\in E$ is a combination of edges of $\gamma$ and their inverses we can apply Equations \eqref{keke} to each $\kappa^I_e\in K$ to conclude that $\kappa^I_e$ is a linear combination of d.o.f. in $\K_{u,\gamma}$.  

\begin{cor}
For every finite set $K$ of configurational elementary d.o.f. there exists a finite set $u$ of points of $\Sigma$ and a graph $\gamma$ such that every d.o.f. in $K$ is a linear combination of d.o.f. in $K_{u,\gamma}$.
\label{K-lc-Kug}   
\end{cor}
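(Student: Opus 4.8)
The plan is to reduce everything to the two identities recorded in \eqref{keke} together with Lemma \ref{E-gamma}. First I would split the finite set $K$ according to the two types of configurational d.o.f.\ it contains: let $u=\{y_1,\ldots,y_M\}$ collect all points $y$ for which some $\kappa^I_y$ lies in $K$, and let $E$ collect all edges $e$ for which some $\kappa^J_e$ lies in $K$. Both $u$ and $E$ are finite because $K$ is finite and each of its elements is built from a single point or a single edge. If $K$ contains no d.o.f.\ of one of the two types, the corresponding set is empty and the argument for that type is vacuous.

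Next I would apply Lemma \ref{E-gamma} to the finite set $E$ to obtain a graph $\gamma=\{e_1,\ldots,e_N\}$ such that every edge $e\in E$ is a composition of some edges of $\gamma$ and the inverses of some edges of $\gamma$. With this $u$ and this $\gamma$ the set $K_{u,\gamma}$ of \eqref{K-ug} is fixed, and it remains to express every element of $K$ as a linear combination of its members. A point-type d.o.f.\ $\kappa^I_y\in K$ requires no work: by construction $y\in u$, so $\kappa^I_y$ already belongs to $K_{u,\gamma}$ and is a trivial linear combination of itself.

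The substance lies in the edge-type d.o.f. Fix $\kappa^I_e\in K$ with $e\in E$. By the choice of $\gamma$ I can write $e=c_k\circ\cdots\circ c_1$, where each factor $c_m$ is either an edge $e_j$ of $\gamma$ or its inverse $e_j^{-1}$, and where every successive composition makes sense. Applying the additivity identity $\kappa^I_{e_2\circ e_1}=\kappa^I_{e_2}+\kappa^I_{e_1}$ inductively on the number of factors yields $\kappa^I_e=\sum_{m=1}^{k}\kappa^I_{c_m}$, and then applying the orientation-reversal identity $\kappa^I_{e^{-1}}=-\kappa^I_e$ to those factors that are inverses rewrites each summand as $\pm\kappa^I_{e_j}$ with $e_j\in\gamma$. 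Collecting equal terms, $\kappa^I_e$ is exhibited as an integer---hence real---linear combination of the d.o.f.\ $\kappa^I_{e_1},\ldots,\kappa^I_{e_N}$ belonging to $K_{u,\gamma}$, which is precisely the assertion.

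I expect the only delicate point to be the inductive use of \eqref{keke}: the additivity identity there is stated for the composition of two edges only, so I would need to check that it extends to a composition of arbitrarily many factors. This is routine---writing $e=c_k\circ(c_{k-1}\circ\cdots\circ c_1)$ and noting that each partial composition ``makes sense'' in the sense of the composition definition, so that the two-edge rule applies at every step of the induction. Apart from this bookkeeping the corollary is an immediate consequence of Lemma \ref{E-gamma} and the linearity relations of \eqref{keke}.
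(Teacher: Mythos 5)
Your proposal is correct and follows essentially the same route as the paper: split $K$ into point-type and edge-type d.o.f., take $u$ to be the set of underlying points, apply Lemma \ref{E-gamma} to the set $E$ of underlying edges to obtain $\gamma$, and then use the identities \eqref{keke} to expand each $\kappa^I_e$ as a signed sum of the $\kappa^I_{e_j}$ with $e_j\in\gamma$. The only difference is that you spell out the induction on the number of composition factors, which the paper leaves implicit.
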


Note now that if $\Theta_K$ is a reduced configuration space then the map $\tilde{K}$ can be used to define a differential structure on the space. It may happen that a set $K'$ of independent d.o.f. distinct from $K$ defines the same space: $\Theta_K=\Theta_{K'}$ i.e. $[\theta]=[\theta]'$ for every $\theta\in\Theta$, where $[\theta]'$ denotes the equivalence class of $\theta$ defined by the relation $\sim_{K'}$. Assume that then the differential structures on $\Theta_K=\Theta_{K'}$ given by $\tilde{K}$ and $\tilde{K}'$ coincide (we will prove soon that this {\em is} the case). Then following \cite{al-hoop} we can introduce the notion of cylindrical functions:
\begin{df}
We say that a function $\Psi:\Theta\to\C$ is a cylindrical function compatible with the set ${K}$ of independent d.o.f. if 
\begin{equation}
\Psi=\pr^*_{K}\,\psi
\label{Psi-cyl}
\end{equation}
for some smooth function $\psi:\Theta_{K}\to\C$. 
\end{df}
\noindent Note that each configurational elementary d.o.f. $\kappa$ is a cylindrical function compatible with $K=\{\kappa\}$. 

Denote by $\Cyl$ a complex linear space spanned by all cylindrical functions on $\Theta$.  

Let $\mathbf{K}$ be a set of all sets of independent d.o.f.. There holds the following important proposition \cite{q-stat}:
\begin{pro}
Suppose that there exists a subset $\mathbf{K}'$ of $\mathbf{K}$ such that  for every finite set $K_0$ of configurational elementary d.o.f.  there exists $K'_0\in\mathbf{K}'$ satisfying the following conditions: 
\begin{enumerate}
\item the map $\tilde{K}'_0$ is a bijection; 
\item each d.o.f. in $K_0$ is a linear combination of d.o.f. in $K'_0$.  
\end{enumerate} 
Then
\begin{enumerate}
\item for every set $K\in\mathbf{K}$ the map $\tilde{K}$ is a bijection. Consequently, $\Theta_K\cong \R^N$ with $N$ being the number of elements of $K$ and the map $\tilde{K}$ defines a linear structure on $\Theta_K$ being the pull-back of the linear structure on $\R^N$; if $\Theta_{K}=\Theta_{K'}$ for some other set $K'\in\mathbf{K}$ then the linear structures defined on the space by $\tilde{K}$ and $\tilde{K}'$ coincide.
\item if a cylindrical function $\Psi$ compatible with a set $K\in\mathbf{K}$ can be expressed as
\[
\Psi=\pr_{K'}\psi',
\]  
where $K'\in\mathbf{K}$ and $\psi'$ is a complex function on $\Theta_{K'}$ then $\psi'$ is smooth and consequently $\Psi$ is compatible with $K'$;     
\item for every element $\Psi\in\Cyl$ there exists a set $K\in\mathbf{K}'$ such that $\Psi$ is compatible with $K$.    
\end{enumerate}
\label{big-pro}
\end{pro}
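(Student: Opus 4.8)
The plan is to prove the three assertions in order, exploiting the hypothesis that a sufficiently rich subfamily $\mathbf{K}'$ of ``good'' sets (on which $\tilde K$ is a bijection and which linearly dominate every finite set) exists. The key structural observation is that linear bijectivity is inherited under passage to larger sets and under linear change of coordinates, so the global claim (1) should follow by sandwiching an arbitrary $K\in\mathbf{K}$ between sets of the privileged family.

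First I would prove part 1. Fix an arbitrary $K\in\mathbf{K}$, i.e. a set of independent d.o.f., so that by definition $\im\tilde K$ is already an $N$-dimensional submanifold of $\R^N$; it remains to upgrade injectivity-with-full-image to surjectivity, that is, to show $\im\tilde K=\R^N$. I would apply the hypothesis to $K_0:=K$ to obtain $K'_0\in\mathbf{K}'$ with $\tilde K'_0$ a bijection and every $\kappa\in K$ a linear combination of the d.o.f.\ in $K'_0$. These linear relations induce a linear map $L:\R^{N'}\to\R^N$ (where $N'=\#K'_0$) such that $\tilde K\circ\pr_K=L\circ(\tilde K'_0\circ\pr_{K'_0})$ as maps on $\Theta$; since $\tilde K'_0\circ\pr_{K'_0}$ is surjective onto $\R^{N'}$ (because $\tilde K'_0$ is a bijection and $\pr_{K'_0}$ is onto) and $\tilde K\circ\pr_K$ has image $\im\tilde K$, we get $\im\tilde K=L(\R^{N'})$, a linear subspace of $\R^N$. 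But $\im\tilde K$ is an $N$-dimensional submanifold of $\R^N$, forcing $L(\R^{N'})=\R^N$ and hence $\tilde K$ surjective; combined with the injectivity of $\tilde K$ guaranteed by definition of the quotient, $\tilde K$ is a bijection. The induced linear structure is then just the pullback of that on $\R^N$. For the compatibility of linear structures when $\Theta_K=\Theta_{K'}$, I would note that the identity map on this common space reads in the two coordinate charts $\tilde K,\tilde K'$ as a bijection $\tilde K'\circ\tilde K^{-1}:\R^N\to\R^{N'}$, and I would show this transition map is linear by dominating $K\cup K'$ by a single $K''\in\mathbf{K}'$ and expressing both $\tilde K$ and $\tilde K'$ as linear images of $\tilde K''$; equal images of $\Theta$ under $\pr_K$ and $\pr_{K'}$ then force $N=N'$ and make the transition map linear.

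Next, part 2: given $\Psi$ compatible with $K\in\mathbf{K}$ and a second expression $\Psi=\pr_{K'}^*\psi'$ with $K'\in\mathbf{K}$, I would dominate $K\cup K'$ by some $K''\in\mathbf{K}'$ as above, so that by part 1 both $\tilde K$ and $\tilde K'$ factor as linear surjections of $\tilde K''$. Pulling $\Psi$ back to the chart $\R^{N''}$ of $\Theta_{K''}$, the smoothness hypothesis on the $K$-representative $\psi$ shows $\Psi$ descends to a smooth function there; the $K'$-representative $\psi'$ is then obtained by restricting this smooth function along the linear splitting induced by the surjection $\R^{N''}\to\R^{N'}$, hence is smooth. (One uses that a surjective linear map admits a linear section, so ``pushing down'' a smooth function along it preserves smoothness.) Thus $\Psi$ is genuinely compatible with $K'$.

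Finally part 3: an arbitrary $\Psi\in\Cyl$ is a finite linear combination of cylindrical functions, each compatible with some $K_j\in\mathbf{K}$; collecting the finitely many elementary d.o.f.\ appearing in all the $K_j$ into one finite set $K_0$ and applying the hypothesis, I obtain $K'_0\in\mathbf{K}'$ dominating $K_0$ linearly, and by part 2 each summand is compatible with $K'_0$, hence so is $\Psi$. The main obstacle I anticipate is part 2: controlling the smoothness of the representative function across the linear reductions, in particular making rigorous the claim that a smooth function on $\R^{N''}$ descends to a smooth function on $\R^{N'}$ along the linear quotient. The clean way around this is precisely the existence of a linear section of any surjective linear map between finite-dimensional spaces, which turns ``descent'' into ``restriction to an affine subspace'' and keeps everything manifestly smooth; the rest of the argument is linear algebra combined with the density/domination property supplied by $\mathbf{K}'$.
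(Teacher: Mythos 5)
Your proof is correct. Note that this paper never proves Proposition \ref{big-pro} itself---it is imported from \cite{q-stat} with a citation---so there is no in-paper argument to compare against; judged on its own terms, your argument is sound and self-contained. Its two pillars both hold up: (i) for $K\in\mathbf{K}$ the image $\im\tilde{K}$ is simultaneously a linear subspace of $\R^N$ (being $L(\R^{N'})$ for the linear map $L$ induced by domination by some $K'_0\in\mathbf{K}'$) and an $N$-dimensional submanifold of $\R^N$, hence equals $\R^N$; (ii) linearity of transition maps and smoothness of second representatives follow by composing with a linear section of the surjective linear map onto $\R^{N'}$ induced by a dominating $K''\in\mathbf{K}'$, which turns ``descent'' into restriction and keeps smoothness manifest. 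Only one small step should be made explicit in your part 3: before Assertion 2 can be invoked, you must first observe that each summand $\Psi_j$ factors set-theoretically through $\pr_{K'_0}$ at all, i.e.\ that $\Psi_j=\pr^*_{K'_0}\psi'_j$ for \emph{some} function $\psi'_j$ on $\Theta_{K'_0}$; this is immediate, since linear domination implies that the relation $\sim_{K'_0}$ refines $\sim_{K_j}$, so $\Psi_j$ is constant on the fibres of $\pr_{K'_0}$, but it is a hypothesis of Assertion 2 rather than a consequence of it.
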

\noindent It follows from Lemmas \ref{lm-Kug-xi} and Corollary \ref{K-lc-Kug} that a subset of $\mathbf{K}$ consisting of all sets $K_{u,\gamma}$, where $u$ runs through all finite subsets of $\Sigma$ and $\gamma$ runs through all graphs in $\Sigma$ satisfies the requirement imposed on the set $\mathbf{K}'$ by the proposition. Thus, according to Assertion 1 of the proposition, on every reduced configuration space $\Theta_K$ there exists a natural {\em linear structure} and, consequently, a natural differential structure. This means that the space $\Cyl$ introduced just above the proposition is well defined, Assertions 2 holds and by virtue of Assertion 3 for every element $\Psi\in \Cyl$ there exist a finite set $u$ of points in $\Sigma$ and a graph $\gamma$ such that $\Psi$ is compatible with $K_{u,\gamma}$.      

A simple but useful consequence of the results above is that on every reduced configuration space $\Theta_K$, where $K=\{\kappa_1,\ldots,\kappa_N\}$,  one can define a linear coordinate frame $(x_1,\ldots,x_N)$: 
\begin{equation}
\Theta\ni [\theta]\mapsto x_{\alpha}([\theta]):=\kappa_{\alpha}(\theta)\in \R, 
\label{lin-coor-0}
\end{equation}
in other words,
\begin{equation}
(\,x_1[\theta],\ldots,x_N([\theta])\,)=\tilde{K}([\theta]).
\label{x-tilK}
\end{equation}
The frame \eqref{lin-coor-0} will be called {\em natural coordinate frame on} $\Theta_K$. 

\subsection{Operators corresponding to momentum d.o.f.}

Consider a finite collection $u=\{y_1,\ldots,y_M\}$ of points in $\Sigma$ and a graph $\gamma=\{e_1,\ldots,e_N\}$ such that either $u$ or $\gamma$ is not an empty set ($N,M\geq0$ but $N+M>0$). Let us introduce a special notation for natural coordinates \eqref{lin-coor-0} defined on a reduced configuration space $\Theta_{K_{u,\gamma}}$: we will denote the coordinates by $(z^I_{i},x^J_{j})$, $I,J=1,2,3$, $i=1,\ldots,M$ if $M>0$ and $j=1,\ldots,N$ if $N>0$, where
\begin{align}
&z^I_{i}([\theta]):=\kappa^I_{y_{i}}(\theta), && x^J_{j}([\theta]):=\kappa^J_{e_{j}}(\theta),
\label{lin-coor}
\end{align}      
(here $[\theta]\in\Theta_{K_{u,\gamma}}$). The coordinates define vector fields 
\[
\{\ \partial_{z^I_{i}}, \partial_{x^J_{j}}\ \}
\]
on $\Theta_{K_{u,\gamma}}$---these vector fields will be used to express operators  defined on $\Cyl$ by the momentum d.o.f. \eqref{phi-V} and \eqref{phi-S}.
 
\subsubsection{Operators corresponding to d.o.f. \eqref{phi-V}}

Using the Poisson bracket \eqref{poiss} we define an operator
\begin{equation}
\Cyl\ni\Psi\mapsto\hat{\varphi}^V_I\Psi:=\{\varphi^V_I,\Psi\}\in\Cyl.
\label{hat-zeta-0}
\end{equation}
We know already (see the discussion just below Proposition \ref{big-pro}) that $\Psi$ is compatible with a set $K_{u,\gamma}$, i.e. $\Psi=\pr^*_{K_{u,\gamma}}\psi$ for a function $\psi$ defined on $\Theta_{K_{u,\gamma}}$. Assume that $u=\{y_1,\ldots,y_M\}$. Then  
\begin{equation}
\hat{\varphi}^V_I\Psi=\sum_{L=1}^3\sum_{l=1}^M \pr^*_{K_{u,\gamma}}(\partial_{z^L_{l}}\psi)\{\varphi^V_I,\kappa^L_{y_{l}}\},
\label{hat-zeta-1}
\end{equation}
where $\{\partial_{z^L_{l}}\}$ are vector fields on $\Theta_{K_{u,\gamma}}$ defined by the natural coordinates \eqref{lin-coor}. To find an explicite expression for $\{\varphi^V_I,\kappa^L_{y_{l}}\}$ note that
\begin{align*}
&\varphi^V_I(p)=\int_\Sigma {\cal I}_V \zeta_I, && \kappa^L_y(\theta)=\int_\Sigma\delta_y\xi^L,
\end{align*}
 where ${\cal I}_V$ is the characteristic function of the region $V$ and $\delta_y$ is the Dirac distribution supported at $y\in\Sigma$. Hence
\[
\{\varphi^V_I,\kappa^L_{y}\}=-\delta^L{}_I\int_\Sigma {\cal I}_V\delta_y.
\]
Let
\begin{equation}
\varepsilon(V,y):=
\begin{cases}
-1& \text{if $y\in V$}\\
0 & \text{otherwise}
\end{cases}.
\label{Vy}
\end{equation}
Then
\begin{equation}
\{\varphi^V_I,\kappa^L_{y}\}=\hat{\varphi}^V_I\kappa^L_{y}=\delta^L{}_I\,\varepsilon(V,y).
\label{ze-xi-const}
\end{equation}
Let us emphasize that  $\hat{\varphi}^V_I\kappa^L_{y}$ is a {\em constant} real cylindrical functions which since now will be treated as a real number. Thus finally
\begin{equation}
\hat{\varphi}^V_I\Psi=\sum_{l=1}^N \varepsilon(V,y_{l})\,\pr^*_{K_{u,\gamma}}(\partial_{z^I_{l}}\psi),
\label{hat-zeta}
\end{equation}
which means that $\hat{\varphi}^V_I$ preserves the space $\Cyl$. Thus $\hat{\varphi}^V_I$ is a linear operator on the space.
  
\subsubsection{Operators corresponding to d.o.f. \eqref{phi-S}}
    
With every elementary d.o.f. $\varphi^S_J\in\F$ we associate a flux operator $\hat{\varphi}^S_J$ \cite{acz}---it is a linear operator on $\Cyl$ defined via a suitable regularization of the Poisson bracket $\{{\varphi}^S_J,\Psi\}$, where $\Psi\in\Cyl$. Again, we express the cylindrical function as $\Psi=\pr^*_{K_{u,\gamma}}\psi$ for some set $K_{u,\gamma}$ and a function $\psi$ on $\Theta_{K_{u,\gamma}}$. Assume that $\gamma=\{e_1,\ldots,e_N\}$. Then the operator $\hat{\varphi}^J_S$ acts on $\Psi$ as follows:   
\begin{equation}
\hat{\varphi}^S_J\Psi:=\sum_{j=1}^N\eps(S,e_{j})\, \pr_{K_{u,\gamma}}^*(\partial_{x^J_{j}}\psi)\in\Cyl,
\label{hphi_S}
\end{equation}
where $\{\partial_{x^J_{j}}\}$ are vector fields on $\Theta_{K_{u,\gamma}}$ given by the coordinate frame \eqref{lin-coor} and  each $\eps(S,e_{j})$ is a certain real number.

To define the number $\eps(S,e_j)$ we adapt the edge $e_{j}$ to $S$ obtaining thereby a set of edges $\{e_{j1},\ldots,e_{jn}\}$ and define a function $\eps$ on this set: $\eps(e_{ja})=0$ if $e_{ja}$ is contained in $\bar{S}$ or has no common points with $S$; in remaining cases           
\begin{enumerate}
\item $\eps(e_{ja}):=\frac{1}{2}$ if $e_{ja}$ is either 'outgoing' from $S$ and placed 'below' the face  or is 'incoming' to $S$ and placed 'above' the face;
\item $\eps(e_{ja}):=-\frac{1}{2}$ if $e_{ja}$ is either 'outgoing' from $S$ and placed 'above' the face  or is 'incoming' to $S$ and placed 'below' the face. 
\end{enumerate}
Here the terms 'outgoing' and 'ingoing' refer to the orientation of the edges (which is inherited from the orientation of $e_{j}$) while the terms 'below' and 'above' refer to the orientation of the normal bundle of $S$ defined naturally by the orientations of $S$ and $\Sigma$. Then we define
\[
\eps(S,e_{j}):=\sum_{a=1}^n \eps(e_{ja}).
\]       

It is not difficult to realize that for every edge $e\in{\cal E}$ 
\begin{equation}
\hat{\varphi}^S_J\kappa^L_{e}=\delta^L{}_J\,\eps(S,e)
\label{hphiS-ke}
\end{equation}
which means that $\hat{\varphi}^S_J\kappa^L_{e}$ is a {\em constant} real cylindrical function which since now will be regarded as a real number. 

\subsubsection{Linear space of the operators}

Let us introduce a space $\hat{\F}$ as a real linear space spanned by all operators \eqref{hat-zeta-0} and \eqref{hphi_S}:  
\[
\hat{\cal F}:=\spn_{\R}\{\ \hat{\varphi}\ | \ \varphi\in\F\ \}.
\] 
Thus an element $\hat{\varphi}$ of $\hat{\F}$ is of the following form 
\[
\hat{\varphi}=\sum_{Ii} A^I_{i}\hat{\varphi}^{V_{i}}_I+\sum_{Jj} B^J_{j}\hat{\varphi}^{S_{j}}_J,
\]
where $A^I_{i},B^J_{j}$ are real numbers and both sums are finite.

Let $\Psi=\pr^*_{K_{u,\gamma}}\psi$ be a cylindrical function compatible with $K_{u,\gamma}$. Then 
\begin{multline}
\hat{\varphi}\Psi=\pr^*_{K_{u,\gamma}}\Big(\sum_{Iil} A^I_{i}\,\varepsilon(V_{i},y_{l})\,\partial_{z^I_{l}}\psi+\sum_{Jjn} B^J_{j}\,\epsilon(S_{j},e_{n})\,\partial_{x^J_{n}}\psi\Big)=\\=\sum_{I l} \Big(\pr^*_{K_{u,\gamma}}\partial_{z^I_{l}}\psi\Big)\,\hat{\varphi}\kappa^I_{y_{l}}+\sum_{Jn} \Big(\pr^*_{K_{u,\gamma}}\partial_{x^J_{n}}\psi\Big)\,\hat{\varphi}\kappa^J_{e_{n}},
\label{hphi-Psi}
\end{multline}
where in the first step we used \eqref{hat-zeta} and \eqref{hphi_S} and in the second one we applied \eqref{ze-xi-const} and \eqref{hphiS-ke}.
 
\subsection{A directed set $(\Lambda,\geq)$}

All considerations above were preparatory steps to the crucial one which is a choice of a directed set $(\Lambda,\geq)$---once such a set is {\em chosen properly} the prescription described in \cite{q-stat} can be used to build from it a unique space of quantum states. 

\subsubsection{General assumptions imposed on $(\Lambda,\geq)$ \label{ad-as}}

Recall that $\mathbf{K}$ denotes a set of all sets of independent d.o.f.. Let $\hat{\mathbf{F}}$ be a set of all finite dimensional linear subspaces of $\hat{\F}$. A directed set $(\Lambda,\geq)$, where $\Lambda\subset\hat{\mathbf{F}}\times \mathbf{K}$,  is chosen properly if it satisfies the following {\bf Assumptions} \cite{q-stat}:
\begin{enumerate}
\item 
\begin{enumerate}
\item for each finite set $K_0$ of configurational elementary d.o.f.  there exists $(\hat{F},K)\in\Lambda$ such that each $\kappa\in K_0$ is a cylindrical function compatible with $K$; \label{k-Lambda}
\item for each finite set $F_0$ of momentum elementary d.o.f. there exists $(\hat{F},K)\in\Lambda$ such that $\hat{\varphi}\in\hat{F}$ for every $\varphi\in F_0$; \label{f-Lambda}
\end{enumerate}
\item \label{RN} 
if $(\hat{F},K)\in\Lambda$ then the image of the map $\tilde{K}$ given by \eqref{k-inj} is $\R^N$ (where $N$ is the number of elements of $K$)---in other words, $\tilde{K}$ is a bijection and consequently  
\[
\Theta_K\cong\R^N.
\] 
\item 
if $(\hat{F},K)\in\Lambda$, then 
\begin{enumerate}
\item for every $\hat{\varphi}\in \hat{\F}$ and for every cylindrical function $\Psi=\pr_K^*\psi$ compatible with $K=\{\kappa_1,\ldots,\kappa_N\}$ 
\[
\hat{\varphi}\Psi=\sum_{\alpha=1}^N\Big(\pr^*_K\partial_{x_{\alpha}}\psi\Big)\hat{\varphi}\kappa_{\alpha},
\]   
where $\{\partial_{x_{\alpha}}\}$ are vector fields on $\Theta_K$ given by the natural coordinate frame \eqref{lin-coor-0}; \label{comp-f} 
\item for every $\hat{\varphi}\in \hat{\F}$ and for every $\kappa\in K$ the cylindrical function $\hat{\varphi}\kappa$ is a real {\em constant} function on $\Theta$; \label{const}
\end{enumerate}
\item if $(\hat{F},K)\in\Lambda$ and $K=\{\kappa_{1},\ldots,\kappa_{N}\}$ then $\dim\hat{F}=N$; moreover, if $(\hat{\varphi}_1,\ldots,\hat{\varphi}_N)$ is a basis of $\hat{F}$ then an $N\times N$ matrix $G=(G_{\beta\alpha})$ of components
\[
G_{\beta\alpha}:=\hat{\varphi}_{\beta}\kappa_{\alpha}
\]    
is {\em non-degenerate}. \label{non-deg}
\item  if $(\hat{F},K'),(\hat{F},K)\in\Lambda$ and $\Theta_{K'}=\Theta_{K}$ then  $(\hat{F},K')\geq(\hat{F},K)$; \label{Q'=Q} 
\item if $(\hat{F}',K')\geq(\hat{F},K)$ then 
\begin{enumerate}
\item each d.o.f. $K$ is {\em a linear combination} of d.o.f. in $K'$; \label{lin-comb}
\item $\hat{F}\subset\hat{F}'$. \label{FF'}
\end{enumerate} 
\end{enumerate} 

\subsubsection{Speckled graphs \label{speckl}}

In the considerations above an important role was played by sets $\{K_{u,\gamma}\}$. Therefore one may try to use these sets to define a set $\Lambda$ as one consisting of pairs $(\hat{F},K_{u,\gamma})$ which satisfy all Assumptions listed in the previous section. However, we will not use all sets $K_{u,\gamma}$ to define $\Lambda$ but will restrict ourselves to some of them. 

To justify our decision let us refer to the general construction presented in \cite{q-stat} (see its outline in Section \ref{out}). According to it for every $(\hat{F},K_{u,\gamma})\equiv\lambda\in\Lambda$ we will have to associate a Hilbert space $\h_\lambda$ of some square integrable functions on $\Theta_{K_{u,\gamma}}$ (then density operators on all such Hilbert spaces will be used to build the space $\D$). It seems to us that it would be highly desirable if one could define on each Hilbert space $\h_\lambda$ a sort of quantum geometry related to the Riemannian geometry of $\Sigma$. But to achieve this we have to guarantee that the d.o.f. in $K_{u,\gamma}$ can be used to extract some consistent information about the geometry. Since the geometry is given by the metric $q$ we have to require that the d.o.f. provide some consistent information about the metric. Let us now analyze this issue more carefully.

The metric $q$ is defined by \eqref{q} in terms of the variables $(\theta^0,\theta^J)$,
\begin{equation}
q=-\theta^0\ot\theta^0+\delta_{IJ}\theta^I\ot\theta^J,
\label{q-0J}
\end{equation}
Thus we should require the d.o.f. constituting $K_{u,\gamma}$ to give us information about both the one-form $\theta^0$ and the other forms $(\theta^J)$. Of course, information about $(\theta^J)$ is given by d.o.f. $\{\kappa^J_e\}$ defined as integrals \eqref{k-e} of the forms along edges of the graph $\gamma$. Therefore, to achieve a consistency, we should be able to approximate integrals of $\theta^0$ along the edges by means of the d.o.f..       

Consider $K_{u,\gamma}$ defined by a set $u=\{y\}$ and a graph $\gamma=\{e\}$. Thus $K_{u,\gamma}=\{\kappa^I_y,\kappa^J_e\}$. Suppose now that $y\in e$. Because 
\[
\theta^0=\sgn(\theta^J)\frac{\xi_I}{\sqrt{1+\xi_K\xi^K}}\,\theta^I
\]
(see \eqref{old-new}) the integral 
\begin{equation}
\int_e\theta^0
\label{e-th0}
\end{equation}
can be approximated {\em modulo the factor} $\sgn(\theta^I)$ by
\begin{equation}
\frac{\xi_I(y)}{\sqrt{1+\xi_L(y)\xi^L(y)}}\int_e\theta^I=\frac{\kappa_{Iy}(\theta)\kappa^I_e(\theta)}{\sqrt{1+\kappa_{Ly}(\theta)\kappa^L_y(\theta)}},
\label{t0-app}
\end{equation}
where $\theta=(\theta^0,\theta^J)=(\xi^I,\theta^J)\in\Theta$. If, however, $y\not\in e$ then in general we cannot obtain from the d.o.f. $\{\kappa^I_{y},\kappa^J_e\}$ a good approximation of the integral \eqref{e-th0}. 

Thus we conclude that to define the set $\Lambda$ we should use sets $\{K_{u,\gamma}\}$ such that each point $y\in u$ belongs to an edge of $\gamma$ and for each edge $e$ of $\gamma$ the intersection $e\cap u$ consists of exactly one point. 

However, this conclusion may seem to be a bit premature because while drawing it we neglected the lack of the factor $\sgn(\theta^I)$ in the formula \eqref{t0-app}. It turns out that, given $K_{u,\gamma}$, the d.o.f. in it do not contain any information about the factor---this fact is a consequence of the following lemma \cite{q-suit}:
\begin{lm}
Let $\gamma=\{e_1,\ldots,e_N\}$ be a graph. Then for every $(x^I_{i})\in\R^{3N}$ there exists a global coframe $(\theta^I)$ on $\Sigma$ compatible (incompatible) with the orientation of the manifold such that  
\[
\int_{e_{i}}\theta^I=x^I_{i}
\]      
for every $I=1,2,3$ and $i=1,2,\ldots,N$. 
\label{theta-x3}
\end{lm}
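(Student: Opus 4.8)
The plan is to reduce the statement to a purely local construction inside tubular neighbourhoods of the edges, and to control the orientation by a suitable global reference coframe. If $\gamma=\emptyset$ there is nothing to integrate and the claim reduces to the existence of a global coframe of each orientation, which holds because $\Theta\neq\emptyset$ makes $\Sigma$ parallelizable and because negating one form of a coframe reverses $\sgn$. So assume $N\ge 1$. First I would fix a global coframe $(\eta^I)$ on $\Sigma$ and, replacing $\eta^1$ by $-\eta^1$ if necessary, arrange $\sgn(\eta^I)$ to equal the prescribed value ($+1$ for the compatible case, $-1$ for the incompatible one); the two cases are then treated identically. Writing the sought coframe as $\theta^I=M^I{}_J\,\eta^J$ for a map $M\colon\Sigma\to GL(3,\R)$, note that $\theta^1\we\theta^2\we\theta^3=(\det M)\,\eta^1\we\eta^2\we\eta^3$, so $(\theta^I)$ is a coframe with $\sgn(\theta^I)=\sgn(\eta^I)$ precisely when $\det M>0$ everywhere. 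Thus it suffices to produce $M$ with values in $GL^+(3,\R)$ realizing the integrals $\int_{e_i}\theta^I=x^I_i$.

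Because the edges of a graph are pairwise independent (they meet only at endpoints), I would choose for each $e_i$ a thin tubular neighbourhood $T_i\cong(0,L_i)\times D^2$, with $e_i=\{(t,0,0)\}$ and $\partial_t$ tangent to $e_i$, such that distinct tubes are disjoint away from the common vertices. On the complement of all the $T_i$ I set $M=\id$ (so $\theta=\eta$ there), and I require $M=\id$ near the lateral boundary of each tube and near the endpoints $t=0,L_i$, which guarantees a smooth global coframe and makes the tubes interact consistently at shared vertices. On a single tube the integral condition reads $\int_0^{L_i}M(t)\,\vec b(t)\,dt=\vec x_i$, where $\vec b(t)$ collects the components $\eta^J(\partial_t)$ along $e_i$ and is nowhere zero since $dt\neq0$. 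The key observation is that $GL^+(3,\R)$ acts transitively on $\R^3\setminus\{0\}$, so $M(t)\vec b(t)$ may be prescribed to equal any continuous nowhere-vanishing path $\vec w(t)$; I would choose such a $\vec w$ with $\vec w=\vec b$ near the endpoints and with $\int_0^{L_i}\vec w\,dt=\vec x_i$ (possible for any $\vec x_i$: run $\vec w$ out along the desired direction and back while staying nonzero). Lifting $\vec w(t)$ to a path $M(t)\in GL^+(3,\R)$ with $M=\id$ near the ends, and then extending $M$ radially over the disk $D^2$ so that it tapers to $\id$ at the lateral boundary, is possible because $GL^+(3,\R)$ is connected.

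The step I expect to be the main obstacle is exactly this last one: keeping $M$ inside $GL^+(3,\R)$---i.e. never degenerating the triple nor flipping its orientation---while the prescribed integrals $x^I_i$, and hence the excursions of $\vec w$, may be arbitrarily large. This is what rules out the naive attempt of adding a small multiple of $dt$ to each $\eta^I$, whose wedge acquires a factor $1+\langle\vec c,\vec b\rangle$ that can vanish for large corrections; the resolution is to use the full three-dimensional room in the tube together with the connectedness of $GL^+(3,\R)$ and its transitivity on $\R^3\setminus\{0\}$, which let one accumulate any integral along $e_i$ without the volume form ever vanishing. The endpoint tapering makes the short vertex segments contribute only a fixed amount, which is absorbed by adjusting $\int\vec w$ over the middle of the edge, so the integrals are matched exactly. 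Carrying this out on every tube yields a global coframe of the prescribed orientation with $\int_{e_i}\theta^I=x^I_i$; performing the construction for $\sgn(\eta^I)=+1$ and $\sgn(\eta^I)=-1$ gives the compatible and incompatible cases respectively. In particular this reproves the existence already contained in Lemma \ref{lm-Kug-xi}, now with full control of the orientation.
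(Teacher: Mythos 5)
First, a remark on the comparison itself: the paper does not prove Lemma \ref{theta-x3} at all---it is quoted from \cite{q-suit}---so there is no in-paper argument to measure your proposal against, and I have assessed it on its own merits. Your overall strategy (reduction to tubes around the edges, writing $\theta^I=M^I{}_J\eta^J$, the nowhere-vanishing of $\vec b(t)=\bigl(\eta^J(\partial_t)\bigr)$, the transitivity of $GL^+(3,\R)$ on $\R^3\setminus\{0\}$, and the existence of a nowhere-vanishing $\vec w$ with prescribed integral) is sound. The genuine gap is the very last step, which you justify by saying that the radial tapering of $M$ to $\id$ over $D^2$ ``is possible because $GL^+(3,\R)$ is connected.'' That justification is insufficient. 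Since $M(t)=\id$ near both ends of the tube, $t\mapsto M(t)$ is a loop at $\id$, and tapering it radially to $\id$ on the lateral boundary is precisely a null-homotopy of that loop rel its endpoints. Connectedness only gives, for each fixed $t$, \emph{some} path from $M(t)$ to $\id$; it does not give a family of such paths continuous in $t$ and constant near the tube ends. The obstruction lives in $\pi_1\bigl(GL^+(3,\R)\bigr)\cong\pi_1\bigl(SO(3)\bigr)\cong\Z_2$, which is nontrivial, and an arbitrary lift $M(t)$ of $\vec w(t)$ may well represent the nontrivial class. (Nor can one cheat with linear interpolation $(1-s)M(t)+s\,\id$: that can pass through singular matrices.)

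The gap is fixable, because you still have freedom in the choice of lift. The stabilizer in $GL^+(3,\R)$ of a nonzero vector deformation retracts onto the circle of rotations about that vector, and the induced map $\pi_1(SO(2))\cong\Z\to\pi_1\bigl(GL^+(3,\R)\bigr)\cong\Z_2$ is onto. Hence, if your lift $M(t)$ represents the nontrivial class, replace it by $M'(t)=M(t)\,R_{\vec b(t)}\bigl(2\pi\sigma(t)\bigr)$, where $R_{\vec b}(\alpha)$ is the rotation by angle $\alpha$ about the axis $\vec b$ and $\sigma$ increases smoothly from $0$ to $1$ while constant near the ends; this leaves $M'(t)\vec b(t)=\vec w(t)$ (hence the integrals) unchanged, flips the $\pi_1$ class, and the corrected loop then does extend over the disk rel boundary. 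Alternatively, choose $\vec w(t)$ never antiparallel to $\vec b(t)$ (possible after a small perturbation compensated elsewhere in the integral) and take $M(t)$ to be the minimal rotation carrying $\vec b(t)$ to the direction of $\vec w(t)$ composed with a scaling; such a loop is canonically null-homotopic. With either repair your construction goes through; as written, however, the proof is incomplete at exactly the step you yourself flagged as the main obstacle, and the reason it fails is $\pi_1$, not $\pi_0$, of $GL^+(3,\R)$.
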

\noindent The lemma means that for every $\theta\equiv(\xi^I,\theta^J)\in\Theta$ the equivalence class $[\theta]$ defined by $K_{u,\gamma}$ contains points of $\Theta$ given by global coframes compatible as well as global coframes incompatible with the orientation of $\Sigma$. Hence no function on $K_{u,\gamma}$ can be an approximation of $\sgn(\theta^I)$.  

Is the impossibility to approximate $\sgn(\theta^I)$ by functions on $K_{u,\gamma}$ a problem? It could be if relevant quantities like ones describing the geometry of $\Sigma$ as well as  constraints and Hamiltonians depended on $\sgn(\theta^I)$. 

Note that the metric $q$ is a quadratic function of $\theta^0$. Therefore the sign $\sgn(\theta^I)$ is irrelevant for the metric---see the formula \eqref{q-xi} expressing the metric in terms of $(\xi^I,\theta^J)$. Consequently, geometric quantities on $\Sigma$ including the Hodge operator $*$ given by $q$ and the orientation of $\Sigma$ do not depend on $\sgn(\theta^I)$. 

Regarding the constraints (and the Hamiltonians\footnote{The Hamiltonians of TEGR and YMTM are sums of constraints.}) of TEGR and YMTM, an important observation is that they are {\em quite specific} functions of $(\theta^A,p_B)$ and a variable $\xi^A$ defined as a solution of the following equation system\footnote{Clearly, $\xi^A$ is a function of $\theta^A$---see \cite{os}.} \cite{nester}:
\begin{align*}
&\xi_A\theta^A=0&& \xi_A\xi^A=-1.
\end{align*}
Namely, these three variables appear in the constraints exclusively in the form of either $(i)$ a contraction with respect to the index $A$ e.g. $\xi^Adp_A$ or $(ii)$ scalar products defined by $\eta$ (or its inverse) e.g. $\eta_{AB}d\theta^A\we*d\theta^B$ (or $\eta^{AB}p_A\we*p_B)$. Since the matrix $(\eta_{AB})$ (and its inverse) is diagonal two time-like components (that is, components with $A=0$) of the variables always multiply each other e.g. 
\[
\xi^Adp_A=\xi^0dp_0+\xi^Idp_I\quad\quad\text{or} \quad\quad \eta_{AB}d\theta^A\we*d\theta^B=-d\theta^0\we*d\theta^0+d\theta^I\we*d\theta^I.
\]   
On the other hand every time-like component of the variables under consideration is proportional to $\sgn(\theta^I)$ \cite{ham-nv}: 
\begin{align*}
&p_0=\sgn(\theta^L)\sqrt{1+\xi_J\xi^J}\,\vec{\theta}^I\lr\zeta_I,&&\theta^0=\sgn(\theta^J)\frac{\xi_I}{\sqrt{1+\xi_L\xi^L}}\,\theta^I, && \xi^0=\sgn(\theta^I)\sqrt{1+\xi_J\xi^J},
\end{align*}
and the space-like components (that is, components with $A\in\{1,2,3\}$) of the variables are independent of the factor. Thus the factor appears in the constraints exclusively as $[\sgn(\theta^I)]^2\equiv 1$ and, consequently, the constraints expressed in terms of $(\zeta_I,r_J,\xi^K,\theta^L)$ are independent of $\sgn(\theta^I)$---see \cite{ham-nv} for explicite expressions of them.

Thus (at least at this stage) the impossibility to express $\sgn(\theta^I)$ by functions on $K_{u,\gamma}$ does not seem to cause any problem. 

Let us turn back to the conclusion placed just below the formula \eqref{t0-app}. It motivates us to introduce a special kind of graphs:
\begin{df}
A speckled graph $\dot{\gamma}$ in $\Sigma$ is a pair $(u,\gamma)$, where $u$ is a finite set of points in $\Sigma$ and $\gamma$ is a graph such that there exists a surjective map $\chi:\gamma\to u$ such that $\chi(e)\in e$ for every $e\in\gamma$.         
\end{df}

Let $\dot{\gamma}=(u,\gamma)$ be a speckled graph. We will denote
\[
K_{u,\gamma}\equiv K_{\dot{\gamma}}.
\] 

Now the conclusion mentioned above can be reformulated: to define a set $\Lambda$ for TEGR we should use sets $\{K_{\dot{\gamma}}\}$ given by all speckled graphs in $\Sigma$. Let us now take a closer look at these graphs.   

\subsubsection{Properties of speckled graphs}

Consider now a pair $(u,\gamma)$, where $u$ is a finite set of points and $\gamma$ is a graph. Of course, $(u,\gamma)$ may not be a speckled graph because it may happen that $(i)$ there exist elements of $u$ which do not belong to any edge of $\gamma$; $(ii)$ there are edges of $\gamma$ such that no point in $u$ belongs to the edges or $(iii)$ there are edges of $\gamma$ such that  two or more distinct points of $u$ belong to each of the edges. Note however that $(u,\gamma)$ can be easily transformed to a speckled graph as follows: in a case of a point $y$ in $u$ of the sort $(i)$ one can choose an edge $e$ such that $y$ is the only point of $u$ which belongs to $e$ and $\gamma\cup \{e\}$ is a graph; in a case of an edge of the sort $(ii)$ one can single out a point in $e$ and add it to $u$; in a case of an edge of the sort $(iii)$ one can divide it into smaller edges such that each smaller one contains exactly one point belonging to $u$.            

\begin{cor}
 For every pair $(u,\gamma)$, where $u$ is a finite set of points and $\gamma$ is a graph, there exists a speckled graph $\dot{\gamma}'=(u',\gamma')$ such that $u'\supset u$ and $\gamma'\geq\gamma$.
\label{ug-dg}     
\end{cor}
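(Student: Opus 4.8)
The plan is to realize, in rigorous form, the three local modifications sketched in the paragraph preceding the statement: starting from $(u,\gamma)$ I successively remove the three obstructions $(i)$--$(iii)$ to being a speckled graph, each time enlarging the point set (so that $u'\supset u$ is automatic) and only subdividing existing edges or adjoining new ones (so that the order relation $\gamma'\geq\gamma$ on graphs is preserved). Once a pair $(u',\gamma')$ is reached in which every edge of $\gamma'$ carries exactly one point of $u'$ and every point of $u'$ lies on some edge of $\gamma'$, the required surjection $\chi$ is produced by sending each edge to the unique point of $u'$ lying on it.

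First I would deal with obstruction $(iii)$. For each edge $e\in\gamma$ meeting $u$ in two or more points I choose, between every pair of points of $e\cap u$ consecutive along $e$, an auxiliary division point, and subdivide $e$ at these auxiliary points. Each resulting sub-edge then contains exactly one point of $u$ in its interior, the original $e$ is the composition of its sub-edges, and---since sub-edges of independent edges meet only at endpoints---the collection of all sub-edges is again a graph $\gamma_1\geq\gamma$. Next, for obstruction $(ii)$, to every edge of $\gamma_1$ disjoint from $u$ I adjoin one interior point; this leaves the graph untouched and, the added point being interior to a single edge, it preserves the property that each edge now carries exactly one marked point. Finally, for obstruction $(i)$, to each point $y\in u$ lying on no edge I attach a short analytic edge $e_y$ through $y$; since $y$ avoids the closed finite union of the existing edges and avoids the finitely many other marked points, $e_y$ can be chosen independent of all existing edges and carrying $y$ as its only marked point. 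Performing this for every such $y$ yields the final pair $(u',\gamma')$.

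It then remains to read off that $(u',\gamma')$ is a speckled graph. By construction $u'\supset u$ and $\gamma'\geq\gamma$, since every original edge is a composition of edges of $\gamma'$, the only further edges of $\gamma'$ being the freshly attached $e_y$. Moreover every edge of $\gamma'$ contains exactly one point of $u'$, and every point of $u'$ lies on at least one edge. Defining $\chi(e)$ to be that unique point gives a map $\chi:\gamma'\to u'$ with $\chi(e)\in e$; it is surjective because any $y\in u'$ is the unique marked point of every edge on which it lies, and hence equals $\chi(e)$ for such an $e$. Thus $(u',\gamma')$ is a speckled graph with the asserted properties.

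The only genuinely delicate point is the simultaneous preservation, through all three surgeries, of the two structural requirements---pairwise independence of the edges (so that $\gamma'$ is a bona fide graph) and decomposability of the old edges into the new ones (so that $\gamma'\geq\gamma$)---together with the existence of the auxiliary division points and of the independent attaching edges $e_y$. All of these rest on the finiteness of $u$ and of $\gamma$ and on the analytic manifold structure of $\Sigma$: a finite union of edges is closed, so near any point off it, and between any two points of a single edge, there is ample room to place the required points and edges. Granting this, the verification is routine geometric bookkeeping rather than a substantive difficulty, the corollary being essentially a formalization of the construction already described informally above.
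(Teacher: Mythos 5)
Your proposal is correct and follows essentially the same route as the paper: the corollary there is drawn directly from the informal three-case construction (attach an independent edge through any stranded point, mark an interior point of any unmarked edge, subdivide any edge carrying several marked points), which is exactly what you formalize, down to reading off the surjection $\chi$ from the resulting ``one marked point per edge'' configuration. The extra care you take with edge independence, decomposability (so that $\gamma'\geq\gamma$), and the room-to-place arguments is just the bookkeeping the paper leaves implicit.
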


We will say that a speckled graph $\dot{\gamma}'$ is {\em greater or equal} to a speckled graph $\dot{\gamma}$,
\begin{equation}
\dot{\gamma}'=(u',\gamma')\geq\dot{\gamma}=(u,\gamma),
\label{spg->}
\end{equation}
if $u'\supset u$ and $\gamma'\geq\gamma$. 

\begin{lm}
The set of all speckled graphs in $\Sigma$ equipped with the relation \eqref{spg->} is a directed set. 
\end{lm}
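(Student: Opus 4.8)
The plan is to verify directly the three defining properties of a directed set for the relation \eqref{spg->}: reflexivity, transitivity, and the existence of a common upper bound for any pair of elements. The key observation organizing the whole argument is that the relation on speckled graphs factorizes into two independent conditions on its components, namely inclusion $\supseteq$ of the finite point sets and the already-established order $\geq$ on graphs. Hence reflexivity and transitivity will reduce to the corresponding (elementary) properties of these two component relations, while the only genuinely structural point will be the construction of upper bounds.

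First I would dispose of reflexivity and transitivity. For reflexivity, given a speckled graph $\dot{\gamma}=(u,\gamma)$ one has $u\supseteq u$ trivially, and $\gamma\geq\gamma$ because the set of graphs is itself a directed set (so its order is in particular reflexive); therefore $\dot{\gamma}\geq\dot{\gamma}$. For transitivity, suppose $\dot{\gamma}''=(u'',\gamma'')\geq\dot{\gamma}'=(u',\gamma')$ and $\dot{\gamma}'\geq\dot{\gamma}=(u,\gamma)$. Then $u''\supseteq u'\supseteq u$ gives $u''\supseteq u$ by transitivity of inclusion, and $\gamma''\geq\gamma'\geq\gamma$ gives $\gamma''\geq\gamma$ by transitivity of the order on graphs; hence $\dot{\gamma}''\geq\dot{\gamma}$. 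These steps are purely formal and require no new geometric input.

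The substantive step is directedness, i.e.\ producing for arbitrary speckled graphs $\dot{\gamma}_1=(u_1,\gamma_1)$ and $\dot{\gamma}_2=(u_2,\gamma_2)$ a speckled graph that dominates both. I would first form the candidate pair $(u,\gamma)$ by setting $u:=u_1\cup u_2$, which is again a finite set of points, and by choosing a graph $\gamma$ with $\gamma\geq\gamma_1$ and $\gamma\geq\gamma_2$; such a $\gamma$ exists precisely because the set of graphs is a directed set. The delicate point, and the place where one must be careful, is that this pair $(u,\gamma)$ need not itself be a speckled graph: a point of $u$ may fail to lie on any edge of $\gamma$, an edge of $\gamma$ may contain no point of $u$, or an edge may contain several points of $u$. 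This is exactly the obstruction resolved by Corollary \ref{ug-dg}, which I would now invoke to obtain a genuine speckled graph $\dot{\gamma}'=(u',\gamma')$ with $u'\supseteq u$ and $\gamma'\geq\gamma$.

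Finally I would check that $\dot{\gamma}'$ dominates both original speckled graphs. Indeed $u'\supseteq u=u_1\cup u_2\supseteq u_i$ and $\gamma'\geq\gamma\geq\gamma_i$, so by transitivity of $\geq$ on graphs $\gamma'\geq\gamma_i$; thus $\dot{\gamma}'\geq\dot{\gamma}_i$ for $i=1,2$. Together with reflexivity and transitivity this establishes that the speckled graphs form a directed set. I expect the only real obstacle to be the one just flagged, namely that upper bounds must be taken within the class of \emph{speckled} graphs rather than arbitrary pairs $(u,\gamma)$; Corollary \ref{ug-dg} is precisely the tool that bridges this gap, so once it is brought in the argument closes immediately.
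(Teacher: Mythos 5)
Your proposal is correct and follows essentially the same route as the paper's proof: form the union $u_1\cup u_2$, take a common refinement $\gamma\geq\gamma_1,\gamma_2$ using directedness of the set of graphs, and then invoke Corollary \ref{ug-dg} to repair the pair $(u,\gamma)$ into a genuine speckled graph dominating both. The only difference is that you spell out reflexivity, transitivity, and the final domination check explicitly, which the paper dismisses as obvious.
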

\begin{proof}
The relation \eqref{spg->} is obviously transitive. Let us then show that for every two speckled graphs $\dot{\gamma}'=(u',\gamma'),\dot{\gamma}=(u,\gamma)$ there exists a speckled graph $\dot{\gamma}''$ such that $\dot{\gamma}''\geq\dot{\gamma}'$ and $\dot{\gamma}''\geq\dot{\gamma}$.

Let $u_0:=u'\cup u$ and let $\gamma_0$ be a graph such that $\gamma_0\geq\gamma',\gamma$. Due to Corollary \ref{ug-dg} there exists a speckled graph $\dot{\gamma}''=(u'',\gamma'')$ such that $u''\supset u_0$ and $\gamma''\geq\gamma_0$. Thus $\dot{\gamma}''$ is the desired speckled graph.   
\end{proof}

\begin{lm}
For every finite set $K$ of configurational elementary d.o.f. there exists a speckled graph ${\dot{\gamma}}$ such that each d.o.f. in $K$ is a linear combination of d.o.f. in $K_{\dot{\gamma}}$.    
\label{K-Kdg}
\end{lm}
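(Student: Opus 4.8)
The plan is to reduce the statement to the two corollaries already established, namely Corollary \ref{K-lc-Kug} and Corollary \ref{ug-dg}, and then to verify a simple monotonicity property of the relation $\geq$ on speckled graphs.

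First I would apply Corollary \ref{K-lc-Kug} to the finite set $K$: this yields a finite set $u$ of points of $\Sigma$ and a graph $\gamma$ such that every d.o.f. in $K$ is a linear combination of d.o.f. in $K_{u,\gamma}$. The pair $(u,\gamma)$ need not be a speckled graph, so I next invoke Corollary \ref{ug-dg} to obtain a speckled graph $\dot{\gamma}=(u',\gamma')$ satisfying $u'\supset u$ and $\gamma'\geq\gamma$. This $\dot{\gamma}$ will be the speckled graph claimed in the lemma.

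The remaining---and only nontrivial---step is to show that the linear combination property is inherited, i.e. that each d.o.f. in $K_{u,\gamma}$ is itself a linear combination of d.o.f. in $K_{u',\gamma'}$. I would treat the two types of d.o.f. separately. For a point d.o.f. $\kappa^I_{y}$ with $y\in u$, the inclusion $u'\supset u$ gives $\kappa^I_y\in K_{u',\gamma'}$ directly, so it is trivially such a linear combination. For an edge d.o.f. $\kappa^J_{e}$ with $e\in\gamma$, the relation $\gamma'\geq\gamma$ means precisely that $e$ is a composition of some edges of $\gamma'$ and inverses of some edges of $\gamma'$; applying Equations \eqref{keke} then expresses $\kappa^J_e$ as a linear combination (with coefficients $\pm 1$) of the d.o.f. $\kappa^J_{e'}$ for edges $e'$ of $\gamma'$, all of which lie in $K_{u',\gamma'}$.

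Composing the two reductions by transitivity of the linear combination relation yields that every d.o.f. in $K$ is a linear combination of d.o.f. in $K_{\dot{\gamma}}=K_{u',\gamma'}$, which is the assertion. I do not expect any genuine obstacle: the substantive content is carried entirely by the previously proved corollaries, and the inheritance step is a routine consequence of the definition of $\geq$ on graphs together with the additivity and sign rules \eqref{keke} for the configurational d.o.f.
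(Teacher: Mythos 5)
Your proposal is correct and follows essentially the same route as the paper's own proof: Corollary \ref{K-lc-Kug} to pass to some $K_{u,\gamma}$, Corollary \ref{ug-dg} to upgrade $(u,\gamma)$ to a speckled graph, and Equations \eqref{keke} to transfer the linear-combination property. The paper states the second step more tersely; your explicit split into point d.o.f. (handled by the inclusion $u'\supset u$) and edge d.o.f. (handled by $\gamma'\geq\gamma$ and \eqref{keke}) just fills in the same reasoning.
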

\begin{proof}
By virtue of Corollary \ref{K-lc-Kug} each d.o.f. in $K$ is a linear combination of d.o.f. in $K_{u,\gamma}$ given by a pair $(u,\gamma)$ consisting of a finite set $u$ of points in $\Sigma$ and a graph $\gamma$. On the other hand, Equations \eqref{keke} and Corollary \ref{ug-dg} allow us to conclude that there exists a speckled graph $\dot{\gamma}'$ such that each d.o.f. in $K_{u,\gamma}$ is a linear combination of d.o.f. in $K_{\dot{\gamma}'}$.    
\end{proof}

Due to Lemmas \ref{lm-Kug-xi} and \ref{K-Kdg} a set of all sets $K_{\dot{\gamma}}$, where $\dot{\gamma}$ runs through all speckled graphs in $\Sigma$, meets the requirement satisfied by the set $\mathbf{K}'$ in Proposition \ref{big-pro}. This means that for every $\Psi\in\Cyl$ there exists a speckled graph $\dot{\gamma}$ such that $\Psi$ is a cylindrical function compatible with $K_{\dot{\gamma}}$.         

\begin{lm}
$\dot{\gamma}'\geq\dot{\gamma}$ if and only if each d.o.f. in $K_{\dot{\gamma}}$  is linear combination of d.o.f. in $K_{\dot{\gamma}'}$. 
\label{g'g-lin}
\end{lm}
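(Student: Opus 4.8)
The plan is to prove the two implications separately, the forward one being routine and the converse carrying the content. For the forward direction, assume $\dot{\gamma}'=(u',\gamma')\geq\dot{\gamma}=(u,\gamma)$, i.e. $u'\supset u$ and $\gamma'\geq\gamma$ by the definition \eqref{spg->}. A point d.o.f. $\kappa^I_y$ with $y\in u\subset u'$ is literally one of the generators of $K_{\dot{\gamma}'}$, hence trivially a linear combination of d.o.f. in $K_{\dot{\gamma}'}$. For an edge d.o.f. $\kappa^J_e$ with $e\in\gamma$, the relation $\gamma'\geq\gamma$ says $e$ is a composition of edges of $\gamma'$ and their inverses, so applying the identities \eqref{keke} repeatedly writes $\kappa^J_e$ as a $\{-1,0,1\}$-combination of the $\kappa^J_{e'}$, $e'\in\gamma'$, which all lie in $K_{\dot{\gamma}'}$.

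For the converse I would first separate the two kinds of dependence. A point d.o.f. $\kappa^I_y(\theta)=\xi^I(y)$ depends only on the zero-form part of $\theta=(\xi^I,\theta^J)$, whereas an edge d.o.f. $\kappa^J_e(\theta)=\int_e\theta^J$ depends only on the one-form part, and by Lemma \ref{lm-Kug-xi} these may be prescribed independently. Hence, given that a $\kappa^I_{y_i}\in K_{\dot{\gamma}}$ equals a linear combination of generators of $K_{\dot{\gamma}'}$, freezing $\xi$ and letting the coframe vary (the edge integrals over a graph are free, by Lemma \ref{theta-x3}) forces the edge coefficients to vanish, so $\kappa^I_{y_i}$ is a linear combination of the point d.o.f. of $\dot{\gamma}'$ alone. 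If $y_i\notin u'$, Lemma \ref{lm-Kug-xi} lets me choose $\theta$ with $\xi^I(y_i)=1$ but $\xi^{I'}(y'_k)=0$ for all $y'_k\in u'$ and all $I'$, contradicting the relation; therefore $u'\supset u$. The same freezing argument applied to an edge d.o.f. $\kappa^J_{e}$, $e\in\gamma$, shows it is a linear combination of the edge d.o.f. of $\dot{\gamma}'$ only, and varying the three one-forms independently (again Lemma \ref{theta-x3}) kills the cross-index terms, leaving $\kappa^J_e=\sum_l c_l\,\kappa^J_{e'_l}$ with real $c_l$.

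It remains to deduce $\gamma'\geq\gamma$ from such a relation. Here I would pass to a common refinement: by Lemma \ref{E-gamma}, applied to the finite edge set $\{e,e'_1,\ldots\}$, there is a graph $\gamma''=\{\tilde{e}_1,\ldots,\tilde{e}_P\}$ such that $e$ and every $e'_l$ are compositions of edges of $\gamma''$ and their inverses; since these edges are embedded, each $\gamma''$-edge is used at most once, giving expansions $\kappa^J_e=\sum_m\alpha_m\kappa^J_{\tilde{e}_m}$ and $\kappa^J_{e'_l}=\sum_m\beta_{lm}\kappa^J_{\tilde{e}_m}$ with $\alpha_m,\beta_{lm}\in\{-1,0,1\}$. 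Linear independence of $\{\kappa^J_{\tilde{e}_m}\}$ (Lemma \ref{theta-x3}) yields $\alpha_m=\sum_l c_l\beta_{lm}$ for every $m$. The decisive geometric fact is that the edges of $\gamma'$ are pairwise independent, so a segment $\tilde{e}_m$ lies in at most one $e'_l$; thus for each $m$ at most one $\beta_{lm}$ is nonzero. Consequently, if $\tilde{e}_m$ is a piece of $e$ ($\alpha_m\neq0$) then the unique $e'_l$ containing it has $c_l=\pm1$, and for any other piece $\tilde{e}_{m'}$ of that same $e'_l$ one gets $\alpha_{m'}=c_l\beta_{lm'}=\pm1\neq0$, so $\tilde{e}_{m'}$ is also a piece of $e$. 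Hence every edge of $\gamma'$ meeting the interior of $e$ is contained in $e$ entirely; the pieces of $e$ are then tiled by whole edges of $\gamma'$, and since $e$ is connected and oriented this exhibits $e$ as a composition of edges of $\gamma'$ and their inverses. With $u'\supset u$ this gives $\dot{\gamma}'\geq\dot{\gamma}$.

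The main obstacle, and the step I would treat most carefully, is this last one: excluding an edge $e'_l$ of $\gamma'$ that overlaps $e$ only partially. Everything hinges on the pairwise independence of the edges of a graph (they meet only at endpoints), which forbids a single refined segment from belonging to two edges of $\gamma'$ and lets the coefficient equation $\alpha_m=\sum_l c_l\beta_{lm}$ propagate the property ``$\tilde{e}_m\subset e$'' from one piece of $e'_l$ to all of them. A secondary point needing care is merely bookkeeping: that $\gamma''$ may be chosen so that $e$ and each $e'_l$ traverse every $\gamma''$-edge at most once, which follows from embeddedness once one subdivides at all relevant intersection points and endpoints, so that the coefficients genuinely lie in $\{-1,0,1\}$.
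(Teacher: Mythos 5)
Your proof is correct, and its decisive second half takes a genuinely different route from the paper's. The forward direction (via \eqref{keke}) and the first reductions of the converse agree in substance with the paper: the paper likewise isolates the point d.o.f.\ to obtain $u'\supset u$, and likewise reduces an edge d.o.f.\ to a combination of \emph{same-index} edge d.o.f.\ of $\gamma'$ --- though it does so by differentiating along the one-parameter family $\theta_t=(\xi^I,\theta^1,t\theta^2,\theta^3)$ rather than by your appeal to joint surjectivity (Lemma \ref{lm-Kug-xi}, Lemma \ref{theta-x3}); both devices work. The real divergence is the last step: at the point where you pass to a common refinement, the paper instead invokes, without proof, a lemma quoted from \cite{q-stat} asserting that (given surjectivity of the edge integrals) $\gamma'\geq\gamma$ holds if and only if each $\kappa_{e_i}$ is a linear combination of the $\kappa_{e'_j}$. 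You prove exactly this equivalence from scratch: refine via Lemma \ref{E-gamma}, observe that embeddedness forces expansion coefficients in $\{-1,0,1\}$, use linear independence of the refined d.o.f.\ to get $\alpha_m=\sum_l c_l\beta_{lm}$, and use pairwise independence of the edges of $\gamma'$ (at most one nonzero $\beta_{lm}$ per $m$) to propagate the property ``is a piece of $e$'' across all pieces of any $e'_l$ with $c_l\neq0$. This tiling argument is sound, and it buys self-containedness where the paper outsources the combinatorial core; the paper's citation buys brevity. One phrasing slip worth fixing: ``every edge of $\gamma'$ meeting the interior of $e$ is contained in $e$ entirely'' is too strong, since an edge of $\gamma'$ may cross $e$ transversally at an interior point without sharing a segment; what your argument actually establishes, and all that the tiling needs, is that every edge of $\gamma'$ sharing a one-dimensional piece with $e$ lies entirely in $e$.
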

\begin{proof}
If $\dot{\gamma}'\geq\dot{\gamma}$ then using Equations \eqref{keke} we can easily conclude that each d.o.f. in $K_{\dot{\gamma}}$  is linear combination of d.o.f. in $K_{\dot{\gamma}'}$. 

Let $\dot{\gamma}'=(u',\gamma')$ and  $\dot{\gamma}=(u,\gamma)$, where $u=\{y_1,\ldots,y_{M}\}$. Suppose now that each d.o.f. in $K_{\dot{\gamma}}$  is a linear combination of d.o.f. in $K_{\dot{\gamma}'}$. Taking into account the formula \eqref{k-y} we see that then each $\kappa^I_{y_{i}}$ belonging to $\K_{\dot{\gamma}}$ belongs to $\K_{\dot{\gamma}'}$. Thus $u'\supset u$. Now let us show that $\gamma'\geq \gamma$. 

To this end consider a set $\Omega$ of one-forms on $\Sigma$ defined as follows: a one-form $\varpi$ belongs to $\Omega$ if there exists one-forms $\theta^2,\theta^3$ such that $(\varpi,\theta^2,\theta^3)$ form a global coframe on $\Sigma$. Then for any real functions $\{\xi^I\}$, $I=1,2,3$,  on $\Sigma$ the collection $\theta=(\xi^I,\varpi,\theta^2,\theta^3)$ is an element of $\Theta$. Given $e\in{\cal E}$, we define a real function $\kappa_e$ on $\Omega$ 
\[
\kappa_{e}(\varpi):=\kappa^1_{e}(\theta)=\int_e\varpi
\]        
and apply Lemma \ref{lm-Kug-xi} to conclude that for every graph $\gamma_0=\{e_1,\ldots,e_{N_0}\}$ and for each $(x_1,\ldots,x_{N_0})\in\R^{N_0}$ there exists $\varpi\in\Omega$ such that $\kappa_{e_{i}}(\varpi)=x_{i}$. 

Suppose now that each d.o.f. in $K_{\dot{\gamma}}$  is a linear combination of d.o.f. in $K_{\dot{\gamma}'}$, where $\dot{\gamma}=(u,\gamma)$ and $\dot{\gamma}'=(u',\gamma')$. Obviously, a combination describing $\kappa^1_e$ defined by an edge $e$ of $\gamma$ cannot contain d.o.f. $\{\kappa^J_{y'}\}$ given by points $\{y'\}=u'$. Thus
\[
\kappa^1_e=A^i\kappa^1_{e'_i}+B^i\kappa^2_{e'_i}+C^i\kappa^3_{e'_i},
\]
where $A^i,B^j,C^k$ are constant coefficients and $\gamma'=\{e'_1,\ldots,e'_{N'}\}$. Given $\theta=(\xi^I,\theta^J)\in\Theta$, consider a family $\{\theta_t=(\xi^I,\theta^1,t\theta^2,\theta^3)\}$ of elements of $\Theta$, where the number $t>0$. Differentiating with respect to $t$ both sides of the following equations   
\[
\kappa^1_e(\theta_t)=A^i\kappa^1_{e'_i}(\theta_t)+B^i\kappa^2_{e'_i}(\theta_t)+C^i\kappa^3_{e'_i}(\theta_t)
\]
we obtain
\[
0=B^i\kappa^2_{e'_i}(\theta),
\]
hence $B^i=0$ by virtue of Lemma \ref{lm-Kug-xi}. Similarly we show that $C^i=0$. We conclude that each $\kappa^1_{e}\in K_{\dot{\gamma}}$ is a linear combinations of d.o.f $\{\kappa^1_{e'_{j}}\}\subset K_{\dot{\gamma}'}$ only. Thus each function in $\{\kappa_{e_1},\ldots,\kappa_{e_N}\}$ associated with edges of the graph $\gamma$ is a linear combination of functions $\{\kappa_{e'_1},\ldots,\kappa_{e'_{N'}}\}$ associated with edges of $\gamma'$.

Now, to conclude that $\gamma'\geq\gamma$ it is enough to apply the following lemma \cite{q-stat}: 
\begin{lm}
Let $\Omega$ be a set of one-forms on $\Sigma$ such that for every graph $\gamma_0=\{e_1,\ldots,e_{N_0}\}$ and for each $(x_1,\ldots,x_{N_0})\in\R^{N_0}$ there exists $\varpi\in\Omega$ such that
\[
\kappa_{e_{i}}(\varpi)=x_{i}.
\]    
Then $\gamma'=\{e'_1,\ldots,e'_{N'}\}\geq\gamma=\{e_1,\ldots,e_N\}$ if and only if each function in $\{\kappa_{e_1},\ldots,\kappa_{e_N}\}$ is a linear combination of functions $\{\kappa_{e'_1},\ldots,\kappa_{e'_{N'}}\}$.
\end{lm}

Thus $\gamma'\geq\gamma$ and, taking into account the previous result $u'\supset u$, we see that $\dot{\gamma}'\geq\dot{\gamma}$.    
\end{proof}

\subsubsection{Choice of  a directed set $\Lambda$}

Consider an element $\hat{F}$ of $\hat{\mathbf{F}}$ and an element $K=\{\kappa_1,\ldots,\kappa_{N}\}$ of $\mathbf{K}$. We say that a pair $(\hat{F},K)$ is {\em non-degenerate} if $\dim\hat{F}=N$ and an $(N\times N)$-matrix $G=(G_{\beta\alpha})$ of components          
\begin{equation}
G_{\beta\alpha}:=\hat{\varphi}_{\beta}\kappa_{\alpha},
\label{matr-G}
\end{equation}
where $(\hat{\varphi}_1,\ldots,\hat{\varphi}_{N})$ is a basis of $\hat{F}$, is non-degenerate.

\begin{df}
The set $\Lambda$ is a set of all non-degenerate pairs $(\hat{F},K_{\dot{\gamma}})\in\hat{\mathbf{F}}\times \mathbf{K}$, where $\dot{\gamma}$ runs through all speckled graphs in $\Sigma$.    
\label{df-Lambda}
\end{df}  

\begin{lm}
For every speckled graph $\dot{\gamma}$ in $\Sigma$ there exists $\hat{F}\in\hat{\mathbf{F}}$ such that $(\hat{F},K_{\dot{\gamma}})\in\Lambda$.
\label{every-g}
\end{lm}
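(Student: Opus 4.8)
The plan is to build $\hat F$ separately from region operators $\hat\varphi^V_I$ and flux operators $\hat\varphi^S_J$, exploiting a block structure of the matrix \eqref{matr-G}. Write $\dot\gamma=(u,\gamma)$ with $u=\{y_1,\ldots,y_M\}$ and $\gamma=\{e_1,\ldots,e_n\}$, so that $K_{\dot\gamma}$ consists of the $N=3M+3n$ d.o.f. $\{\kappa^I_{y_a},\kappa^J_{e_k}\}$. The key observation is that, because $\zeta_I$ is conjugate to $\xi^I$ and $r_J$ to $\theta^J$ under the Poisson bracket \eqref{poiss}, the cross-pairings vanish: $\hat\varphi^V_I\kappa^L_e=0$ and $\hat\varphi^S_J\kappa^L_y=0$ for all regions $V$, faces $S$, points $y$ and edges $e$ (the first because the bracket \eqref{hat-zeta-0} pairs $\zeta$ only with $\xi$, the second because the flux operator \eqref{hphi_S} differentiates only along the edge coordinates $x^J_j$). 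Hence if $\hat F$ is spanned by region operators and flux operators, the matrix $G$ of \eqref{matr-G} is block-diagonal, one block built from the pairings $\hat\varphi^V_I\kappa^L_{y}$ and the other from $\hat\varphi^S_J\kappa^L_{e}$, and it suffices to make each block non-degenerate on its own.

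For the point block I would, for each $a=1,\ldots,M$, choose a region $V_a$ (for instance a small coordinate ball) containing $y_a$ but none of the remaining points of $u$; this is possible since the points are distinct. By \eqref{ze-xi-const} and \eqref{Vy} the $3M$ operators $\{\hat\varphi^{V_a}_I\}$ then satisfy $\hat\varphi^{V_a}_I\kappa^L_{y_b}=-\delta^L{}_I\delta_{ab}$, so this block equals $-\,\id$ and is non-degenerate.

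For the edge block I would, for each $k=1,\ldots,n$, pick an interior point $x_k$ of $e_k$. Since distinct edges of a graph meet only at endpoints, $x_k$ lies on no other edge, and as the union of the remaining edges is compact it has positive distance from $x_k$; I can therefore take a small analytic oriented face $S_k$ through $x_k$, transversal to $e_k$ and contained in a ball disjoint from every $e_l$ with $l\neq k$. Then $\eps(S_k,e_l)=0$ for $l\neq k$, while adapting $e_k$ to $S_k$ yields exactly two sub-edges meeting $S_k$ at $x_k$, one incoming and one outgoing, whose contributions add up to $\eps(S_k,e_k)=\pm1\neq0$. By \eqref{hphiS-ke} the $3n$ operators $\{\hat\varphi^{S_k}_J\}$ give a diagonal, non-degenerate edge block.

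Finally I would set $\hat F:=\spn_\R\{\hat\varphi^{V_a}_I,\hat\varphi^{S_k}_J\}\in\hat{\mathbf F}$. The associated matrix $G$ is block-diagonal with the two non-degenerate blocks above, hence non-degenerate; this in turn forces the $N$ generating operators to be linearly independent, so that $\dim\hat F=N$. Thus $(\hat F,K_{\dot\gamma})$ is a non-degenerate pair and, by Definition \ref{df-Lambda}, belongs to $\Lambda$. The one genuinely technical point is to ensure the $S_k$ are bona fide faces in the sense of Section 2---analytic, oriented, of compact closure, and such that every edge of ${\cal E}$ can be adapted to them---but small analytic transversal disks of this kind are standard in the LQG framework and present no real obstacle.
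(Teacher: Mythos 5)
Your proposal is correct and follows essentially the same route as the paper's own proof: separating regions $V_a$ around the points of $u$, faces $S_k$ meeting only $e_k$ transversally at an interior point, vanishing cross-pairings giving a block-diagonal matrix $G$, and non-degeneracy of each block. The only cosmetic difference is that the paper adjusts signs (using $-\hat{\varphi}^{V_i}_I$ and choosing face orientations) so that $G$ becomes literally the unit matrix, whereas you settle for diagonal blocks with entries $\pm1$, which is equally sufficient.
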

\begin{proof}
Let $\dot{\gamma}=(u,\gamma)$, where $u=\{y_1,\ldots,y_M\}$ and $\gamma=\{e_1,\ldots,e_N\}$ ($M\leq N$). There exist regions $\{V_1,\ldots,V_M\}$ such that $V_{j}\cap u=y_{j}$. Consequently,
\[
\varepsilon(V_{j},y_{i})=-\delta_{ji}
\]
and introducing multi-labels $\alpha=(i,I)$ and $\beta=(j,J)$ we can write
\[
G^1_{\beta\alpha}:=-\hat{\varphi}^{V_{j}}_J\kappa^I_{y_{i}}=\delta^{I}{}_J\delta_{ji}=\delta_{\beta\alpha}.
\]
     
The independence of edges $\{e_1,\ldots,e_N\}$ of the graph $\gamma$ imply that there exists a set $\{S_1,\ldots,S_N\}$ of faces such that $e_{i}\cap S_{j}$ is empty if $i\neq j$ and consists of exactly one point distinct from the endpoints of $e_{i}$ if $i=j$. The orientations of the faces can be chosen in such a way that
\[
\epsilon(S_{j},e_{i})=\delta_{ji}
\]
Using the multi-labels $\alpha=(i,I)$ and $\beta=(j,J)$ we can write  
\[
G^2_{\beta\alpha}=\hat{\varphi}^{S_{j}}_J\kappa^I_{e_{i}}=\delta^{I}{}_J\delta_{ji}=\delta_{\beta\alpha}.
\]
Since 
\[
-\hat{\varphi}^{V_{j}}_J\kappa^I_{e_{i}}=0=\hat{\varphi}^{S_{j}}_J\kappa^I_{y_{i}}
\] 
the matrix $G$ given by \eqref{matr-G} for $K_{\dot{\gamma}}$ and 
\[
F_0:=\{\ -\hat{\varphi}^{V_{i}}_I,\hat{\varphi}^{S_{j}}_J\ | \ I,J=1,2,3;\, i=1,\ldots,M;\, j=1,\ldots,N\ \}
\]
is of the following form
\[
G=
\begin{pmatrix}
G^1 & \mathbf{0}\\
\mathbf{0} & G^2
\end{pmatrix}=\mathbf{1}
\] 
and, being the unit $(M+N)\times(M+N)$ matrix, is obviously non-degenerate. Thus if
\[
\hat{F}={\rm span}_{\R}\,F_0
\]
then elements of $F_0$ constitute a basis of $\hat{F}$ and $(\hat{F},K_{\dot{\gamma}})\in\Lambda$. 
\end{proof}

Now let us define  a relation $\geq$ on $\Lambda$: 
\begin{df}
Let $(\hat{F}',K_{\dot{\gamma}'}),(\hat{F},K_{\dot{\gamma}})\in\Lambda$. Then $(\hat{F}',K_{\dot{\gamma}'})\geq (\hat{F},K_{\dot{\gamma}})$ if and only if
\begin{align*}
&\hat{F}'\supset\hat{F} && \text{and} && \dot{\gamma}'\geq\dot{\gamma}.         
\end{align*}
\label{df-Lambda->}
\end{df}

\begin{lm}
$(\Lambda,\geq)$ is a directed set.
\label{Lambda-dir}
\end{lm}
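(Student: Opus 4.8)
The plan is to check that $\geq$, as given by Definition~\ref{df-Lambda->}, is a preorder on $\Lambda$ and that any two elements of $\Lambda$ possess a common upper bound in $\Lambda$. Reflexivity and transitivity are immediate: the relation is the conjunction of the subspace inclusion $\hat F'\supset\hat F$ and the order $\dot\gamma'\geq\dot\gamma$ on speckled graphs, and each of these two relations is reflexive and transitive---the latter because the speckled graphs in $\Sigma$ form a directed set. Thus the whole substance of the lemma is the upper bound property, which is where I expect essentially all the difficulty to lie.

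So let $(\hat F_1,K_{\dot\gamma_1}),(\hat F_2,K_{\dot\gamma_2})\in\Lambda$ be given and put $W:=\hat F_1+\hat F_2$, a finite-dimensional subspace of $\hat{\F}$. My first step is to produce a single speckled graph $\dot\gamma$ that simultaneously dominates $\dot\gamma_1,\dot\gamma_2$ and is fine enough to ``see'' every operator in $W$. For the latter I would use \eqref{hphi-Psi}: if some $\hat\varphi\in W$ satisfied $\hat\varphi\kappa=0$ for all $\kappa\in\K$, then $\hat\varphi$ would annihilate every cylindrical function and hence be the zero operator. Consequently the pairing $W\times\spn_\R\K\ni(\hat\varphi,\kappa)\mapsto\hat\varphi\kappa\in\R$ is non-degenerate in its first slot, so, $W$ being finite-dimensional, there is a finite collection $\kappa_1,\dots,\kappa_d\in\K$ such that $\hat\varphi\mapsto(\hat\varphi\kappa_1,\dots,\hat\varphi\kappa_d)$ is injective on $W$. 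These $\kappa_r$ are carried by finitely many points and edges of $\Sigma$. Combining $\dot\gamma_1$ and $\dot\gamma_2$ into a common dominating speckled graph (possible since the speckled graphs form a directed set), adjoining those finitely many points and edges, and applying Corollary~\ref{ug-dg} to restore the speckled property, I obtain a speckled graph $\dot\gamma=(u,\gamma)$ with $\dot\gamma\geq\dot\gamma_1,\dot\gamma_2$; by \eqref{keke} and Lemma~\ref{E-gamma} each $\kappa_r$ lies in $\spn_\R K_{\dot\gamma}$. Hence the linear map $j\colon W\to(\spn_\R K_{\dot\gamma})^*$, $j(\hat\varphi)=(\hat\varphi\kappa)_{\kappa\in K_{\dot\gamma}}$, is injective, and in particular $\dim W=:N'\leq N$, where $N$ is the number of elements of $K_{\dot\gamma}$.

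The second step is to enlarge $W$ to a subspace $\hat F$ realizing non-degeneracy. By Lemma~\ref{every-g} there is $\hat F_0$ with $(\hat F_0,K_{\dot\gamma})\in\Lambda$, and non-degeneracy of the matrix \eqref{matr-G} for $(\hat F_0,K_{\dot\gamma})$ says precisely that the analogous map $\hat F_0\to(\spn_\R K_{\dot\gamma})^*$ is an isomorphism; in particular the pairing is onto $(\spn_\R K_{\dot\gamma})^*$. I would therefore take a basis $w_1,\dots,w_{N'}$ of $W$ (so that $j(w_1),\dots,j(w_{N'})$ are independent), complete these to a basis of $(\spn_\R K_{\dot\gamma})^*$ by vectors $f_{N'+1},\dots,f_N$, and let $\hat\psi_{N'+1},\dots,\hat\psi_N\in\hat F_0$ be their unique preimages. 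Setting $\hat F:=\spn_\R\{w_1,\dots,w_{N'},\hat\psi_{N'+1},\dots,\hat\psi_N\}$ gives a subspace whose chosen generators map under $j$ to a basis of $(\spn_\R K_{\dot\gamma})^*$; hence they are linearly independent, $\dim\hat F=N$, and the matrix \eqref{matr-G} built from them is non-degenerate. Thus $(\hat F,K_{\dot\gamma})\in\Lambda$, and since $\hat F\supset W\supset\hat F_1,\hat F_2$ while $\dot\gamma\geq\dot\gamma_1,\dot\gamma_2$, Definition~\ref{df-Lambda->} yields $(\hat F,K_{\dot\gamma})\geq(\hat F_i,K_{\dot\gamma_i})$ for $i=1,2$, as required.

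I expect the main obstacle to be the simultaneous demands placed on $\hat F$: it must contain both $\hat F_1$ and $\hat F_2$, yet have dimension exactly $|K_{\dot\gamma}|$ with a non-degenerate matrix \eqref{matr-G}. The tension is that a priori $\hat F_1+\hat F_2$ could fail to pair non-degenerately with $K_{\dot\gamma}$, or even exceed dimension $N$. The resolution above---first refine $\dot\gamma$ until it detects all of $W$ (forcing $\dim W\leq N$ and injectivity of $j$), then complete within the isomorphic image supplied by Lemma~\ref{every-g}---is designed exactly to reconcile these requirements; verifying that this refinement can always be carried out while keeping $\dot\gamma\geq\dot\gamma_1,\dot\gamma_2$ and $\dot\gamma$ speckled is the one place where the geometric input (Corollary~\ref{ug-dg}, \eqref{keke}, Lemma~\ref{E-gamma}) is essential.
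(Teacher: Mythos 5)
Your proof is correct, and its skeleton coincides with the paper's: both arguments reduce the upper-bound property to $(i)$ producing a speckled graph $\dot\gamma$ dominating $\dot\gamma_1,\dot\gamma_2$ whose d.o.f. pair faithfully with $W=\hat F_1+\hat F_2$, and $(ii)$ enlarging $W$, by means of the operators constructed in the proof of Lemma~\ref{every-g}, to a subspace of dimension equal to the number of elements of $K_{\dot\gamma}$ whose matrix \eqref{matr-G} is non-degenerate. Where you genuinely diverge is in how the two steps are implemented. For $(i)$ the paper invokes two results imported from \cite{q-stat}, namely Proposition~\ref{Lambda-pr} (existence of a speckled graph on which a basis of $W$ is linearly independent) and Lemma~\ref{cyl-K} (transfer of that independence to a finer $K_{\dot\gamma''}$, chosen so that $3N>\dim W$ and $\dot\gamma''\geq\dot\gamma_0,\dot\gamma_1,\dot\gamma_2$), while you reprove the required separation property directly: by \eqref{hphi-Psi} an element of $\hat{\F}$ annihilating every elementary d.o.f. is the zero operator, so finitely many $\kappa_r\in\K$ detect $W$; once these are arranged (via \eqref{keke}, Lemma~\ref{E-gamma} and Corollary~\ref{ug-dg}) to lie in $\spn_\R K_{\dot\gamma}$, injectivity of the evaluation map $j$ on $W$ follows from bilinearity alone, no transfer lemma is needed, and the bound $\dim W\leq N$ comes for free rather than being imposed by choosing the refining graph large enough. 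For $(ii)$ the paper row-reduces the rectangular matrix $G^0$ to the form $\begin{pmatrix}\mathbf{1} & G'\end{pmatrix}$ and appends the rows $\hat\varphi^0_{M+1},\ldots,\hat\varphi^0_{3N}$ coming from the Lemma~\ref{every-g} construction, whereas you perform the same completion coordinate-freely, extending $j(W)$ to a basis of the dual of $\spn_\R K_{\dot\gamma}$ and pulling back through the isomorphism $j|_{\hat F_0}$ supplied by Lemma~\ref{every-g}. What your route buys is self-containedness (it in effect reproves the needed instance of Proposition~\ref{Lambda-pr} inside this paper) and cleaner linear algebra; what the paper's route buys is reuse of machinery from \cite{q-stat} that belongs to its general framework, plus the fact that its preparatory step verifies in passing that $\Lambda$ satisfies Assumptions \ref{k-Lambda} and \ref{comp-f}, facts the paper needs again when checking the remaining Assumptions.
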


Regarding a proof of the lemma it would be perhaps enough to refer to the proof of an analogous lemma in \cite{q-stat} concerning a set $\Lambda$ constructed for DPG saying that a proof of Lemma \ref{Lambda-dir} is a modification of the proof of that lemma in \cite{q-stat}. But yet taking into account the importance of Lemma \ref{Lambda-dir} to avoid any doubt we decided to present the proof explicitely. 

Before we will prove the lemma let us state some facts which will be used in the proof. Let $\bld{\Psi}$ be a subset of $\Cyl$. Then operators in $\hat{\F}$ restricted to $\bld{\Psi}$ are maps from $\bld{\Psi}$ into $\Cyl$. Since both $\Cyl$ and $\hat{\F}$ are linear spaces the restricted operators are maps valued in a linear space and the space of all the restricted operators is a linear space. Consequently, the notion of linear independence of the restricted operators is well defined---below this notion will be called {\em linear independence of the operators on} $\bld{\Psi}$.     

\begin{lm}
Let $\Cyl_K$ be a set of all cylindrical functions compatible with a set $K$ of independent d.o.f..  Assume that operators $\{\hat{\varphi}_1,\ldots,\hat{\varphi}_M\}\subset \hat{\F}$ act on elements of $\Cyl_K$ according to the formula in Assumption \ref{comp-f}. If $\{\hat{\varphi}_1,\ldots,\hat{\varphi}_M\}\subset \hat{\F}$ are linearly independent on a subset $\bld{\Psi}$ of $\Cyl_K$ then they are linearly independent on $K$.       
\label{cyl-K}
\end{lm}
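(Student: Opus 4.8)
The plan is to prove the contrapositive. First I would pin down the meaning of ``linearly independent on $K$''. By Assumption \ref{const} each $\hat{\varphi}_m\kappa_\alpha$ is a real constant, so to every operator $\hat{\varphi}_m$ one can associate the vector
\[
v_m:=(\hat{\varphi}_m\kappa_1,\ldots,\hat{\varphi}_m\kappa_N)\in\R^N,
\]
and the restricted operators are linearly independent on $K=\{\kappa_1,\ldots,\kappa_N\}$ precisely when the vectors $v_1,\ldots,v_M$ are linearly independent in $\R^N$. The statement to be established is therefore equivalent to its contrapositive: if $v_1,\ldots,v_M$ are linearly \emph{dependent}, then $\{\hat{\varphi}_1,\ldots,\hat{\varphi}_M\}$ are linearly dependent on $\bld{\Psi}$.

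So I would assume there exist reals $c_1,\ldots,c_M$, not all zero, with $\sum_m c_m v_m=0$, i.e.\ $\sum_{m=1}^M c_m\,\hat{\varphi}_m\kappa_\alpha=0$ for every $\alpha=1,\ldots,N$. Set $\hat{\varphi}:=\sum_m c_m\hat{\varphi}_m\in\hat{\F}$. The crucial point is that the action formula of Assumption \ref{comp-f} is linear in the operator and carries the coefficients $\hat{\varphi}_m\kappa_\alpha$ through unchanged, while the expansion factors $\pr^*_K\partial_{x_\alpha}\psi$ depend only on $\psi$ and $K$, not on which operator acts. Hence for an arbitrary $\Psi=\pr^*_K\psi\in\Cyl_K$ one computes
\[
\hat{\varphi}\Psi=\sum_{\alpha=1}^N\big(\pr^*_K\partial_{x_\alpha}\psi\big)\Big(\sum_{m=1}^M c_m\,\hat{\varphi}_m\kappa_\alpha\Big)=0.
\]
Since this vanishes for \emph{every} element of $\Cyl_K$, it vanishes in particular on the subset $\bld{\Psi}\subset\Cyl_K$; as the $c_m$ are not all zero, this is exactly linear dependence of $\{\hat{\varphi}_1,\ldots,\hat{\varphi}_M\}$ on $\bld{\Psi}$. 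This establishes the contrapositive and hence the lemma.

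The computation is short, and I do not expect a genuine obstacle: the only step requiring care is the initial identification of the two notions of independence and the observation, guaranteed by Assumption \ref{comp-f}, that the factors $\pr^*_K\partial_{x_\alpha}\psi$ are common to all the $\hat{\varphi}_m\Psi$ so that the constants $\hat{\varphi}_m\kappa_\alpha$ are the only operator-dependent data. Once Assumption \ref{comp-f} is invoked, the vanishing of the linear combination on all of $\Cyl_K$ (and a fortiori on $\bld{\Psi}$) is immediate.
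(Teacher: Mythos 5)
Your proof is correct and complete: the contrapositive argument, in which linear dependence of $\sum_m c_m\hat{\varphi}_m$ on $K$ is propagated via the formula of Assumption 3(a) to vanishing of $\sum_m c_m\hat{\varphi}_m\Psi$ on all of $\Cyl_K$ and hence on $\bld{\Psi}$, is exactly the natural argument here (note that the constancy of the $\hat{\varphi}_m\kappa_\alpha$, while true in this setting, is only needed for your $\R^N$ picture, not for the core implication). The paper itself does not spell out a proof of this lemma---it defers to the reference [q-stat] where the general framework is developed---so there is nothing in the text to diverge from; your argument fills that role correctly.
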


\begin{pro}
Let $\Lambda$ be a subset of $\hat{\mathbf{F}}\times\mathbf{K}$ which satisfies Assumptions \ref{k-Lambda} and \ref{comp-f}. Then for every finite set $\{\hat{\varphi}_1,\ldots,\hat{\varphi}_M\}\subset\hat{\cal F}$ of linearly independent operators there exists a set $(\hat{F},K)\in\Lambda$ such that the operators are linearly independent on $K$.
\label{Lambda-pr}   
\end{pro}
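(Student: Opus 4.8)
The plan is to reduce the claim, by finite-dimensional linear algebra, to the situation already handled by Lemma~\ref{cyl-K}. Write $V:=\spn_\R\{\hat{\varphi}_1,\ldots,\hat{\varphi}_M\}\subset\hat{\F}$; by hypothesis $\dim V=M$, and linear independence of the $\hat{\varphi}_i$ as elements of $\hat{\F}$ means precisely that no nonzero $\hat{\varphi}\in V$ annihilates all of $\Cyl$. First I would produce a \emph{finite} family $\bld{\Psi}=\{\Psi_1,\ldots,\Psi_L\}\subset\Cyl$ of cylindrical functions on which the operators already remain independent, then find a single $(\hat{F},K)\in\Lambda$ with $\bld{\Psi}\subset\Cyl_K$, and finally invoke Lemma~\ref{cyl-K} to pass from independence on $\bld{\Psi}$ to independence on $K$.

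For the first step, for each $\Psi\in\Cyl$ set $W_\Psi:=\{\hat{\varphi}\in V\mid\hat{\varphi}\Psi=0\}$, a linear subspace of $V$. Since every nonzero element of $V$ acts nontrivially on some cylindrical function, $\bigcap_{\Psi\in\Cyl}W_\Psi=\{0\}$; as $V$ is finite dimensional, a finite subfamily already has trivial intersection, i.e. there exist $\Psi_1,\ldots,\Psi_L\in\Cyl$ with $\bigcap_{k=1}^L W_{\Psi_k}=\{0\}$. By construction this says that if $\hat{\varphi}\in V$ satisfies $\hat{\varphi}\Psi_k=0$ for all $k$ then $\hat{\varphi}=0$, which is exactly linear independence of $\hat{\varphi}_1,\ldots,\hat{\varphi}_M$ on the subset $\bld{\Psi}=\{\Psi_1,\ldots,\Psi_L\}$.

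Next I would arrange that all the $\Psi_k$ live in a single $\Cyl_K$ with $(\hat{F},K)\in\Lambda$. By Assertion~3 of Proposition~\ref{big-pro} each $\Psi_k$ is compatible with some set $K_k$ of independent d.o.f., so $\Psi_k=\pr^*_{K_k}\psi_k$ with $\psi_k$ smooth; let $K_0:=\bigcup_k K_k$, a finite set of configurational elementary d.o.f. Applying Assumption~\ref{k-Lambda} to $K_0$ yields $(\hat{F},K)\in\Lambda$ such that every $\kappa\in K_0$ is compatible with $K$. The point to check here---and the place where a little care is needed---is that compatibility of the \emph{defining} d.o.f. of $\Psi_k$ with $K$ forces $\Psi_k$ itself to be compatible with $K$: writing each $\kappa\in K_k$ as $\kappa=\pr^*_K f_\kappa$ with $f_\kappa$ smooth on $\Theta_K$, and recalling that by \eqref{lin-coor-0} the $\kappa$'s are exactly the natural coordinates on $\Theta_{K_k}$, one gets $\Psi_k=\pr^*_K\big(\psi_k\circ(f_\kappa)_{\kappa\in K_k}\big)$ with the inner function smooth, so $\Psi_k\in\Cyl_K$. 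Hence $\bld{\Psi}\subset\Cyl_K$.

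Finally, since $(\hat{F},K)\in\Lambda$ and $\Lambda$ satisfies Assumption~\ref{comp-f}, the operators $\hat{\varphi}_1,\ldots,\hat{\varphi}_M\in\hat{\F}$ act on $\Cyl_K$ by the very formula required in Lemma~\ref{cyl-K}; they are linearly independent on the subset $\bld{\Psi}\subset\Cyl_K$ by the first step, so Lemma~\ref{cyl-K} yields that they are linearly independent on $K$, which is the assertion. I expect the only genuinely nontrivial point to be the smoothness/composition argument of the third step (ensuring $\Psi_k\in\Cyl_K$); the finiteness of the witnessing family and the final invocation of Lemma~\ref{cyl-K} are then routine.
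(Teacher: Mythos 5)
Your proof is correct, and it has to stand on its own, because the paper never proves Proposition \ref{Lambda-pr} itself: it only records that the lemma and the proposition ``are proven in \cite{q-stat}'', so there is no internal argument to compare against. Your route is exactly the one this section is set up for---Lemma \ref{cyl-K} is stated immediately beforehand precisely so that linear independence on some subset of $\Cyl_K$ can be traded for linear independence on $K$, and the whole task is to manufacture such a subset inside a single $\Cyl_K$ with $(\hat{F},K)\in\Lambda$. The two steps carrying real content both check out. First, the finite witnessing family $\bld{\Psi}$ exists by the standard finite-dimensionality argument you give: the subspaces $W_\Psi\subset V$ have trivial total intersection because a linear combination of the $\hat{\varphi}_i$ vanishes in $\hat{\F}$ only if it annihilates every element of $\Cyl$, and in an $M$-dimensional space a finite intersection of minimal dimension is contained in every $W_\Psi$, hence is $\{0\}$. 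Second, the compatibility-transfer step is sound, though you should state explicitly that it rests on Assertion 1 of Proposition \ref{big-pro} (available here, since the paper verifies the hypothesis on $\mathbf{K}'$): that assertion makes $\tilde{K}_k$ a bijection onto $\R^{N_k}$ and a diffeomorphism for the differential structure of $\Theta_{K_k}$, so that from $\kappa=\pr_K^* f_\kappa$ for each $\kappa\in K_k$ (with $f_\kappa$ smooth, by Assumption \ref{k-Lambda}) one gets $\pr_{K_k}=\tilde{K}_k^{-1}\circ(f_\kappa)_{\kappa\in K_k}\circ\pr_K$, hence $\Psi_k=\pr_K^*\bigl(\psi_k\circ\tilde{K}_k^{-1}\circ(f_\kappa)_{\kappa\in K_k}\bigr)\in\Cyl_K$ with a smooth inner function. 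With $\bld{\Psi}\subset\Cyl_K$ and Assumption \ref{comp-f} guaranteeing that all operators in $\hat{\F}$ act on $\Cyl_K$ by the required formula, Lemma \ref{cyl-K} finishes the argument as you say; this is almost certainly the same mechanism as in the reference, since those are precisely the ingredients the framework provides.
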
 

\noindent Both the lemma and the proposition are proven in \cite{q-stat}.

\begin{proof}[Proof of Lemma \ref{Lambda-dir}]

 The transitivity of the relation $\geq$ is obvious. Thus we have to prove  only that for any two elements $\lambda',\lambda\in\Lambda$ there exists $\lambda''\in\Lambda$ such that $\lambda''\geq\lambda'$ and $\lambda''\geq\lambda$. To achieve this we will refer to Lemma \ref{cyl-K} and Proposition \ref{Lambda-pr}. Therefore first we have to show that we are allowed to use them.

By virtue of Lemmas \ref{K-Kdg} and \ref{every-g} the set $\Lambda$ satisfies Assumption \ref{k-Lambda}. On the other hand, Equation \eqref{hphi-Psi} guarantees that every $\hat{\varphi}\in \hat{\F}$ acts on cylindrical functions compatible with $K_{\dot{\gamma}}$ according to the formula in Assumption \ref{comp-f} hence $\Lambda$ meets the assumption.    

Let us fix $\lambda'=(\hat{F}',K_{\dot{\gamma}'})$ and $\lambda=(\hat{F},K_{\dot{\gamma}})$. We define $\hat{F}_0$ as a linear subspace of $\hat{\F}$ spanned by elements of $\hat{F}'\cup\hat{F}$ and choose a basis $(\hat{\varphi}_1,\ldots,\hat{\varphi}_M)$ of $\hat{F}_0$. Proposition \ref{Lambda-pr} and Definition \ref{df-Lambda} of $\Lambda$ guarantee that there exists a speckled graph $\dot{\gamma}_0$ such that the operators $(\hat{\varphi}_1,\ldots,\hat{\varphi}_M)$ remain  linearly independent when restricted to $K_{\dot{\gamma}_0}$. Let $\dot{\gamma}''=(u'',\gamma'')$ be a speckled graph such that $(i)$ the number $3N$ of elements of $K_{\dot{\gamma}''}$ is greater than $\dim \hat{F}_0=M$ and $(ii)$ $\dot{\gamma}''\geq \dot{\gamma}_0,\dot{\gamma}',\dot{\gamma}$. By virtue of Lemma \ref{g'g-lin} d.o.f. in $K_{\dot{\gamma}_0}$ are cylindrical functions compatible with $K_{\dot{\gamma}''}$ and, according to Lemma \ref{cyl-K}, the operators $(\hat{\varphi}_1,\ldots,\hat{\varphi}_M)$ are linearly independent on $K_{\dot{\gamma}''}$.     

Consider now a matrix $G^0$ of components
\[
G^0_{\beta\alpha}:=\hat{\varphi}_{\beta}\kappa_{\alpha},
\]          
where $\{\kappa_1,\ldots,\kappa_{3N}\}=K_{\dot{\gamma}''}$. Clearly, the matrix has $M$ rows and $3N$ columns and because $(\hat{\varphi}_1,\ldots,\hat{\varphi}_M)$ are linearly independent on $K_{\dot{\gamma}''}$ its rank is equal $M<3N$. Using the following operations $(i)$ multiplying a row of by a non-zero number, $(ii)$ adding to a row a linear combination of other rows $(iii)$ reordering the rows and $(iv)$ reordering the columns we can transform the matrix $G^0$ to a matrix $G^1$ of the following form
\[
G^1=
\begin{pmatrix}
\mathbf{1} & G' 
\end{pmatrix},
\]             
where $\mathbf{1}$ is $M\times M$ unit matrix and $G'$ is a $M\times(3N-M)$ matrix. Note that the first three operations used to transform $G^0$ to $G^1$  correspond to a transformation of the basis $(\hat{\varphi}_1,\ldots,\hat{\varphi}_M)$ to an other basis $(\hat{\varphi}'_1,\ldots,\hat{\varphi}'_M)$ of $\hat{F}_0$, while the fourth operation corresponds to renumbering the d.o.f. in $K_{\dot{\gamma}''}$: $\kappa_{\alpha}\mapsto \kappa'_{\alpha}:=\kappa_{\sigma(\alpha)}$, where $\sigma$ is a permutation of the sequence $(1,\ldots,3N)$. Thus  
\[
G^1_{\beta\alpha}=\hat{\varphi}'_{\beta}\kappa'_{\alpha}.
\]

Let $\{\hat{\varphi}^0_1,\ldots,\hat{\varphi}^0_{3N}\}$ be operators constructed  with respect to $K_{\dot{\gamma}''}$ exactly as in the proof of Lemma \ref{every-g}. Then
\[
\hat{\varphi}^0_{\beta}\kappa'_{\alpha}=\delta_{\beta\alpha}.
\]   
Thus if
\[
\Big(\hat{\varphi}''_1,\ldots,\hat{\varphi}''_{3N}\Big):=\Big(\hat{\varphi}'_1,\ldots,\hat{\varphi}'_{M},\hat{\varphi}^0_{M+1},\ldots,\hat{\varphi}^0_{3N}\Big)
\]
then a $3N\times3N$ matrix $G=(G_{\beta\alpha})$ of components
\[
G_{\beta\alpha}:=\hat{\varphi}''_{\beta}\kappa'_{\alpha}
\] 
is of the following form
\[
G=
\begin{pmatrix}
\mathbf{1} & G'\\
\mathbf{0} & \mathbf{1}' 
\end{pmatrix},
\]  
where $\mathbf{0}$ is a zero $(3N-M)\times M$ matrix, and $\mathbf{1}'$ is a unit $(3N-M)\times(3N-M)$ matrix. The matrix $G$ is obviously non-degenerate which means in particular that the operators $(\hat{\varphi}''_1,\ldots,\hat{\varphi}''_{3N})$ are linearly independent.

To finish the proof it is enough to define 
\[
\hat{F}'':={\rm span}_{\mathbb{R}} \{\hat{\varphi}''_1,\ldots,\hat{\varphi}''_{3N}\}
\]
and $\lambda'':=(\hat{F}'',K_{\dot{\gamma}''})$.       

\end{proof}

\subsubsection{Checking Assumptions}

Now we have to check whether the directed set $(\Lambda,\geq)$ just constructed satisfies all Assumptions listed in Section \ref{ad-as}.

Proving Lemma \ref{Lambda-dir} we showed that $\Lambda$ satisfies Assumption \ref{k-Lambda}. Regarding Assumption \ref{f-Lambda} consider a set  $F_0=\{\varphi_1,\ldots,\varphi_N\}$ of momentum elementary d.o.f.. Let us fix $\varphi_i\in F_0$. Suppose that it is of the sort \eqref{phi-V} i.e. $\varphi_i=\varphi^{V_i}_{I_i}$ for some region $V_i$ and some $I_i\in\{1,2,3\}$. Then using a construction similar to that applied in the proof of Lemma \ref{every-g} one can find configurational d.o.f. $\{\kappa^{I}_{y_i}\}$ such that
\[
\hat{\varphi}^{V_i}_I\kappa^J_{y_i}=-\delta^J{}_I
\]    
for every $I,J=1,2,3$. Let $e_i$ be an edge such that $y_i\in e_i$. Then $\dot{\gamma}_i:=(\{y_i\},\{e_i\})$ is a speckled graph. As in the proof of Lemma \ref{every-g} one can find a face $S_i$ such that
\[
\hat{\varphi}^{S_i}_J\kappa^L_{e_i}=\delta^L{}_J.
\]     
for every $J,L=1,2,3$. Let 
\[
\hat{F}_i:={\rm span}\{\ \hat{\varphi}^{V_i}_I, \hat{\varphi}^{S_i}_J \ |\ I,J=1,2,3 \ \}. 
\]
Then $\hat{\varphi}_i\in\hat{F}_i$ and $(\hat{F}_i,K_{\dot{\gamma}_i})\in\Lambda$. Suppose now that $\varphi_i$ is of the sort \eqref{phi-S}. Then in a similar way one can construct an element  $(\hat{F}_i,K_{\dot{\gamma}_i})$ of $\Lambda$ such that  $\hat{\varphi}_i\in\hat{F}_i$. Since $\Lambda$ is a directed set there exists $(\hat{F},K_{\dot{\gamma}})\in \Lambda$ such that $(\hat{F},K_{\dot{\gamma}})\geq(\hat{F}_i,K_{\dot{\gamma}_i})$ for every $i=1,\ldots,N$. Taking into account Definition \ref{df-Lambda->} of the relation $\geq$ on $\Lambda$ we see that $\hat{F}$ contains all the operators $\{\hat{\varphi}_1,\ldots,\hat{\varphi}_N\}$. Thus Assumption \ref{f-Lambda} is satisfied.

Assumption \ref{RN} is satisfied by virtue of Lemma \ref{lm-Kug-xi}. We already concluded (proving Lemma \ref{Lambda-dir}) that $\Lambda$ meets Assumption \ref{comp-f}. Equations \eqref{ze-xi-const} and \eqref{hphiS-ke} guarantee that Assumption \ref{const} is satisfied and Definition \ref{df-Lambda} of $\Lambda$ ensures  that Assumption \ref{non-deg} holds. 

Consider now Assumption \ref{Q'=Q}. Let $(\hat{F},K_{\dot{\gamma}'}),(\hat{F},K_{\dot{\gamma}})$ be elements of $\Lambda$.  Recall that by virtue of Lemma \ref{K-Kdg} there exists $K_{\dot{\gamma}''}$ such that each d.o.f. in $K_{\dot{\gamma}}\cup K_{\dot{\gamma}'}$ is a linear combination of d.o.f. in $K_{\dot{\gamma}''}$. Suppose that $\Theta_{K_{\dot{\gamma}'}}=\Theta_{K_{\dot{\gamma}}}$. Then Lemma \ref{lm-Kug-xi} applied to $K_{\dot{\gamma}''}$ allows us set $\bar{K}=K_{\dot{\gamma}''}$ and $K=K_{\dot{\gamma}}$, $K'=K_{\dot{\gamma}'}$ in the following proposition \cite{q-stat}:  
\begin{pro}
Let $K,K'$ be sets of independent d.o.f. of $N$ and $N'$ elements respectively such that $\Theta_K=\Theta_{K'}$. Suppose that there exists a set $\bar{K}$ of independent d.o.f. of $\bar{N}$ elements such that the image of $\tilde{\bar{K}}$ is $\R^{\bar{N}}$ and  each d.o.f. in $K\cup K'$ is a linear combination of d.o.f. in $\bar{K}$. Then each d.o.f. in $K$ is a linear combination of d.o.f. in $K'$.  
\end{pro}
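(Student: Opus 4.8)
The plan is to reduce the statement to a fact of finite-dimensional linear algebra about the coefficient matrices that express $K$ and $K'$ in terms of $\bar K$. Write $\bar K=\{\bar\kappa_1,\ldots,\bar\kappa_{\bar N}\}$. Since by hypothesis each d.o.f. in $K\cup K'$ is a linear combination of d.o.f. in $\bar K$, I would collect the coefficients into two matrices: $\kappa_\alpha=\sum_{a}M_{\alpha a}\bar\kappa_a$ defines an $N\times\bar N$ matrix $M$, and $\kappa'_\beta=\sum_a M'_{\beta a}\bar\kappa_a$ defines an $N'\times\bar N$ matrix $M'$. The conclusion we want, namely that each $\kappa_\alpha$ is a linear combination of the $\kappa'_\beta$, follows as soon as each row $M_\alpha$ of $M$ lies in the row space of $M'$: indeed, if $M_\alpha=\sum_\beta c_{\alpha\beta}M'_\beta$ for reals $c_{\alpha\beta}$, then substituting the expansions gives $\sum_\beta c_{\alpha\beta}\kappa'_\beta=\sum_a\big(\sum_\beta c_{\alpha\beta}M'_{\beta a}\big)\bar\kappa_a=\sum_a M_{\alpha a}\bar\kappa_a=\kappa_\alpha$, an identity of functions on $\Theta$. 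Thus the whole task is to prove $\mathrm{rowspace}(M)\subset\mathrm{rowspace}(M')$.

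The key input is the assumption that the image of $\tilde{\bar K}$ is all of $\R^{\bar N}$. Since $\tilde{\bar K}$ is injective by construction, this makes it a bijection, so the map $\bar\kappa:\Theta\to\R^{\bar N}$, $\bar\kappa(\theta):=(\bar\kappa_1(\theta),\ldots,\bar\kappa_{\bar N}(\theta))$, which factors as $\tilde{\bar K}\circ\pr_{\bar K}$, is surjective. Next I would translate the hypothesis $\Theta_K=\Theta_{K'}$, i.e. the coincidence of the equivalence relations $\sim_K$ and $\sim_{K'}$. For $\theta,\theta'\in\Theta$ one has $\theta\sim_K\theta'$ iff $M\bar\kappa(\theta)=M\bar\kappa(\theta')$, and likewise $\theta\sim_{K'}\theta'$ iff $M'\bar\kappa(\theta)=M'\bar\kappa(\theta')$. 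Using surjectivity to realise arbitrary $v,w\in\R^{\bar N}$ as $\bar\kappa(\theta),\bar\kappa(\theta')$, the coincidence of the two relations forces $v-w\in\ker M\Leftrightarrow v-w\in\ker M'$ for all such $v,w$; letting $v-w$ range over all of $\R^{\bar N}$ yields $\ker M=\ker M'$ as subspaces of $\R^{\bar N}$.

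Finally I would invoke the elementary identity $\mathrm{rowspace}(A)=(\ker A)^{\perp}$ valid for any real matrix $A$. From $\ker M=\ker M'$ it gives $\mathrm{rowspace}(M)=(\ker M)^{\perp}=(\ker M')^{\perp}=\mathrm{rowspace}(M')$, so in particular every row of $M$ lies in the row space of $M'$, which by the first paragraph completes the proof. Note that this argument uses neither the independence of $K$ nor of $K'$, only the bijectivity of $\tilde{\bar K}$.

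I expect the only genuinely delicate step to be the passage $\Theta_K=\Theta_{K'}\Rightarrow\ker M=\ker M'$. It is precisely here that the assumption that $\tilde{\bar K}$ maps \emph{onto} the whole of $\R^{\bar N}$ is indispensable: without it the equality of the two equivalence relations would only be known on the (possibly proper) image of $\bar\kappa$, and could not be upgraded to an equality of kernels on all of $\R^{\bar N}$. Everything surrounding this step is routine linear algebra.
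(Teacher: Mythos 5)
Your proof is correct, but note that there is nothing in this paper to compare it against: the proposition is only \emph{quoted} here, with its proof deferred to the reference \cite{q-stat}, so your argument has to be judged on its own merits. Judged so, it is complete and gap-free. The reduction to the coefficient matrices $M$ and $M'$ is legitimate (uniqueness of the coefficients is never needed); the hypothesis that the image of $\tilde{\bar{K}}$ is all of $\R^{\bar{N}}$ is invoked exactly where it is indispensable, namely to upgrade the coincidence of the relations $\sim_{K}$ and $\sim_{K'}$ --- a priori a statement only about points in the image of $\theta\mapsto(\bar{\kappa}_1(\theta),\ldots,\bar{\kappa}_{\bar{N}}(\theta))$ --- to the equality $\ker M=\ker M'$ of subspaces of $\R^{\bar{N}}$; and the final step, that the row space of a real matrix $A$ equals $(\ker A)^{\perp}$, is standard finite-dimensional linear algebra, giving in fact equality of the two row spaces rather than the mere inclusion you need. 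Your closing observation is also accurate and is a genuine (if mild) strengthening: the independence of $K$ and of $K'$ plays no role, so the conclusion holds for arbitrary finite sets of configurational d.o.f.\ expressible in $\bar{K}$, with only the surjectivity of $\tilde{\bar{K}}$ doing real work. What this kernel/row-space duality route buys is a short, self-contained argument that uses none of the machinery of the projective framework in which the proposition is embedded.
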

\noindent Thus each d.o.f. in $K_{\dot{\gamma}}$ is a linear combination of d.o.f. in $K_{\dot{\gamma}'}$. Then, as stated by Lemma \ref{g'g-lin}, $\dot{\gamma}'\geq \dot{\gamma}$ and, taking into account Definition \ref{df-Lambda->}, Assumption \ref{Q'=Q} follows.  

Assumption \ref{lin-comb} holds by virtue of Definition \ref{df-Lambda->} of the relation $\geq$ on $\Lambda$ and Lemma \ref{g'g-lin}, while Assumption \ref{FF'} is satisfied due to the definition. 

Thus the set $(\Lambda,\geq)$ satisfies all Assumptions. Consequently, it generates the space $\D$ of quantum states.

\subsection{The space $\D$ of quantum states for TEGR \label{D}}

Consider $\lambda=(\hat{F},K_{\dot{\gamma}})\in\Lambda$. The natural coordinates \eqref{lin-coor-0} define on the reduced configuration space $\Theta_{K_{\dot{\gamma}}}$ a measure
\begin{equation}
d\mu_\lambda:=dx_1\ldots dx_N.
\label{dmu-la}
\end{equation}
The measure provides a Hilbert space
\begin{equation}
\h_\lambda:=L^2(\Theta_{K_{\dot{\gamma}}},d\mu_\lambda)
\label{H-la}
\end{equation}
together with a set $\D_\lambda$ of all density operators (i.e. positive operators of trace equal $1$) on $\h_\lambda$. It was shown in \cite{q-stat} that given two elements $\la',\la$ of $\Lambda$ such that $\la'\geq\la$ there exists a distinguished projection $\pi_{\la\la'}$ from $\D_{\la'}$ onto $\D_\la$. The projection is defined as follows. 

If $\lambda'=(\hat{F}',K')\geq\lambda=(\hat{F},K)$ then every $\kappa_\alpha\in K$ is a linear combination of d.o.f. $\{\kappa'_1,\ldots,\kappa'_{N'}\}=K'$ (see Lemma \ref{g'g-lin}):
\begin{equation}
\kappa_\alpha=B^\beta{}_\alpha\kappa'_\beta,
\label{k-Bk'}
\end{equation}
where $(B^\beta{}_\alpha)$ are real numbers. This relation defines a linear projection $\pr_{KK'}:\Theta_{K'}\mapsto\Theta_K$:
\begin{equation}
\pr_{KK'}:=\tilde{K}^{-1}\circ (B\tilde{K}'),
\label{pr-KK}
\end{equation}
where $B\tilde{K}'$ means the action of the matrix $B=(B^\beta{}_\alpha)$ on the function $\tilde{K}'$ valued in the corresponding $\R^{N'}$. On the other hand, by virtue of Assumption \ref{comp-f} and \ref{const} every $\hat{\varphi}\in\hat{\cal F}$ defines a constant vector field 
\begin{equation}
\sum_\beta(\hat{\varphi}\kappa'_\beta)\partial_{x'_\beta}
\label{v-const}
\end{equation}
on $\Theta_{K'}$, where $(x'_\beta)$ are the natural coordinates on $\Theta_{K'}$. Since there is a natural one-to-one linear correspondence between constant vector fields on $\Theta_{K'}$ and points of this space every $\hat{\varphi}\in\hat{\cal F}$ distinguishes a point in $\Theta_{K'}$ which will be denoted by $[\hat{\varphi}]'$. The map $\hat{\varphi}\mapsto[\hat{\varphi}]'$ is linear and due to non-degeneracy of $(\hat{F}',K')$ its restriction to $\hat{F}'$ is invertible. Since $\hat{F}\subset \hat{F}'$ the image $[\hat{F}]'$ is a linear subspace of $\Theta_{K'}$ such that $\dim [\hat{F}]'=\dim\Theta_K$. It turns out that $\ker\pr_{KK'}\cap[\hat{F}]'=\varnothing$ hence
\[
\Theta_{K'}=\ker\pr_{KK'}\oplus[\hat{F}]'
\]                 
and 
\[
\omega_{\lambda'\lambda}:=\big(\pr_{KK'}\big|_{[\hat{F}]'}\big)^{-1}
\]
is a well defined linear isomorphism from $\Theta_K$ onto $[\hat{F}]'$. Using $\omega_{\lambda'\lambda}$ one pushes forward the measure $d\mu_\lambda$ obtaining a measure $d\mu_{\lambda'\lambda}$ on $[\hat{F}]'$ which allows to define a Hilbert space $\h_{\lambda'\lambda}$ over $[\hat{F}]'$---this Hilbert space is naturally isomorphic to $\h_\lambda$. There exists a unique measure $d\tilde{\mu}_{\lambda'\lambda}$ on $\ker\pr_{KK'}$ such that $d\mu_{\lambda'}=d\tilde{\mu}_{\lambda'\lambda}\times d\mu_{\lambda'\lambda}$---this measure provides a Hilbert space $\tilde{\h}_{\lambda'\lambda}$ such that
\[
\h_{\lambda'}=\tilde{\h}_{\lambda'\lambda}\ot\h_{\lambda'\lambda}.
\]           
Acting on $\rho'\in\D_{\lambda'}$ by the partial trace with respect to the Hilbert space $\tilde{\h}_{\lambda'\lambda}$ one gets a density operator on $\h_{\lambda'\lambda}$ which can be naturally mapped to an element $\rho\in\D_\lambda$---by definition 
\[
\pi_{\lambda\lambda'}\rho':=\rho.
\]
An important observation is that the projection $\pi_{\la\la'}$ is fully determined by the projection $\pr_{KK'}$ and the subspace $[\hat{F}]'$.   

It turns out that for every triplet $\la'',\la',\la\in\Lambda$ such that $\la''\geq\la'\geq\la$ the corresponding projections satisfy the following consistency condition
\begin{equation}
\pi_{\la\la''}=\pi_{\la\la'}\circ\pi_{\la'\la''},
\label{pipipi}
\end{equation}
which means that $\{\D_\la,\pi_{\la\la'}\}_{\la\in\Lambda}$ is a {\em projective family}. The space $\D$ of quantum states for a theory of the phase space $P\times\Theta$ is the {\em projective limit} of the family:
\[
\D:=\underleftarrow{\lim} \,\D_\lambda.
\] 

\subsection{$C^*$-algebra of quantum observables}

Let us recall briefly a construction of a $C^*$-algebra of quantum observables \cite{kpt} associated with the space $\D$. Denote by ${\cal B}_\lambda$ the $C^*$-algebra of bounded linear operators on the Hilbert space $\h_\lambda$ given by \eqref{H-la}. Each density operator  $\rho\in \D_\la$ defines an algebraic state (that is, linear $\C$-valued positive normed functional) on the algebra ${\cal B}_\la$ via a trace: 
\[
{\cal B}_\la\ni a \mapsto \tr(a\rho)\in\C.
\]          
This fact guarantees that for every pair $\lambda'\geq\lambda$ of elements of $\Lambda$ there exists a unique injective $*$-homomorphism $\pi^*_{\lambda'\lambda}:{\cal B}_\la\to{\cal B}_{\la'}$ dual to the projection $\pi_{\lambda\lambda'}:\D_{\lambda'}\to\D_\lambda$ in the following sense: for every $a\in{\cal B}_\la$ and every $\rho'\in \D_{\la'}$
\[
\tr(\pi^*_{\la'\la}(a)\rho')=\tr(a\,\pi_{\la\la'}(\rho')),
\]   
By virtue of  \eqref{pipipi} for every triplet $\la''\geq\la'\geq\la$ 
\[
\pi^*_{\la''\la}=\pi^*_{\la''\la'}\circ\pi^*_{\la'\la},
\]  
which means that $\{{\cal B},\pi^*_{\la'\la}\}$ is an inductive family of $C^*$-algebras associated with the projective family $\{\D_\lambda,\pi_{\lambda\lambda'}\}$.  Its inductive limit
\[
{\cal B}:= \underrightarrow{\lim}\,{\cal B}_\la
\]   
is naturally a unital $C^*$-algebra which can be interpreted as an algebra of quantum observables. It can be shown that each element $\rho$ of the space $\D$ defines an algebraic state on $\cal B$.    
     
\section{Action of spatial diffeomorphisms on $\D$ \label{diff-D}}

Since we would like to quantize TEGR in a background independent manner it is natural to follow LQG methods (see e.g. \cite{cq-diff,rev,rev-1}) to define an action of diffeomorphisms of the manifold $\Sigma$ on the space $\D$. Since $\Sigma$ represents a spatial slice of the original space-time the diffeomorphisms of $\Sigma$ can be regarded as spatial diffeomorphisms.    

\subsection{Action of diffeomorphisms on elementary d.o.f}

Let $\diff$ be a group of all analytic diffeomorphisms of $\Sigma$ which preserve the orientation of the manifold.  Consider an element $\tau$ of $\diff$. Since the fields $(\zeta_I,r_J,\xi^K,\theta^L)$ are differential forms on $\Sigma$ the diffeomorphism acts on them naturally as the pull-back $\tau^*$. Thus the pull-back define the action of the diffeomorphism on the phase space $P\times\Theta$. Since elementary d.o.f. in $\cal K$ and $\cal F$ are functions on the phase space it is natural to define an action of $\tau$ on  $\cal K$ and $\cal F$ as follows. Given $\kappa\in{\cal K}$, the result $\tau \kappa$ of the action of $\tau$ on $\kappa$ is a function on $\Theta$ such that
\begin{equation}
(\tau \kappa)(\theta)=\kappa(\tau^*\theta).
\label{tau-kappa}
\end{equation}
Similarly, given $\varphi\in{\cal F}$, the result $\tau \varphi$ of the action of $\tau$ on $\varphi$ is a function on $P$ such that
\[
(\tau \varphi)(p)=\kappa(\tau^*p).
\]
Obviously,
\begin{equation}
\begin{aligned}
\tau \kappa^I_y&=\kappa^I_{\tau(y)},&\tau \kappa^J_e&=\kappa^I_{\tau(e)},\\
\tau \varphi^V_I&=\varphi^{\tau(V)}_I,&\tau \varphi^S_J&=\varphi^{\tau(S)}_J,
\end{aligned}
\label{tau-kf}
\end{equation}
which mean that both sets $\cal K$ and $\cal F$ are preserved by the action of $\tau $. 


\subsection{Action of diffeomorphisms on reduced configuration spaces}

In the next step let us define an action of diffeomorphisms on reduced configuration spaces. Let us fix a finite set $K=\{\kappa_1,\ldots,\kappa_N\}$ of independent d.o.f. and a diffeomorphism $\tau$. Denote by $\tau  K$ the set $\{\tau \kappa_1,\ldots,\tau \kappa_N\}$. Moreover, let $\sim$ and $\sim_\tau$ be the equivalence relations on $\Theta$ defined by, respectively, $K$ and $\tau K$ (see Section \ref{fin-sets}) and let $[\theta]$ and $[\theta]_{\tau}$ denote equivalence classes of $\theta\in\Theta$ given by the corresponding relations. By definition $\theta\sim\theta'$ if and only if   $\kappa_\alpha(\theta)=\kappa_\alpha(\theta')$ for every $\kappa_\alpha\in K$. By virtue of \eqref{tau-kappa} the latter condition is satisfied if and only if for every $\tau \kappa_\alpha\in\tau K$
\[
\tau \kappa_\alpha(\tau^{-1*}\theta)=\tau \kappa_\alpha(\tau^{-1*}\theta').
\] 
This means that $\theta\sim\theta'$ if and only if $(\tau^{-1*}\theta)\sim_\tau(\tau^{-1*}\theta')$. Consequently, the following map
\[
\Theta_K\ni [\theta]\mapsto T_\tau([\theta]):=[\tau^{-1*}\theta]_\tau\in\Theta_{\tau K}
\]   
is well defined and is a bijection. 

Consider now the projections $\pr_K$ and $\pr_{\tau K}$ defined by \eqref{pr-K} and  the maps $\tilde{K}$ and $\widetilde{\tau K}$ defined by \eqref{k-inj}. We have
\[
\pr_{\tau K}(\tau^{-1*}\theta)=[\tau^{-1*}\theta]_\tau=T_\tau([\theta])=T_\tau(\pr_K(\theta)),
\]
hence
\begin{equation}
T^{-1}_\tau\circ\pr_{\tau K}=\pr_K\circ\tau^{*}.
\label{prt-Tpr}
\end{equation}
On the other hand,
\begin{multline*}
\tilde{K}([\theta])=(\kappa_1(\theta),\ldots,\kappa_N(\theta))=(\tau \kappa_1(\tau^{-1*}\theta),\ldots,\tau \kappa_N(\tau^{-1*}\theta))=\widetilde{\tau K}([\tau^{-1*}\theta]_\tau)=\\=\widetilde{\tau K}(T_\tau([\theta])),
\end{multline*}
hence
\begin{equation}
\tilde{K}=\widetilde{\tau K}\circ T_\tau.
\label{K-tKT}
\end{equation}
It follows from the this result that $\widetilde{\tau K}$ is a bijection onto $\R^N$ which means that $\tau K$ is a set of independent d.o.f. and $\Theta_{\tau K}$ is a reduced configuration space.

Let $(x_\alpha)$ be the natural coordinate frame \eqref{lin-coor-0} on $\Theta_K$  and let $(\bar{x}_\alpha)$ be the natural coordinate frame on $\Theta_{\tau K}$. Denote by $\{\partial_{x_\alpha}\}$ and $\{\partial_{\bar{x}_\alpha}\}$ vector fields defined by the coordinate frames on, respectively, $\Theta_K$ and $\Theta_{\tau K}$.  Equations \eqref{x-tilK} and \eqref{K-tKT} imply that the map $T_\tau$ when expressed in the coordinate frames is an identity map. Hence $T_{\tau *}\partial_{x_\alpha}=\partial_{\bar{x}_\alpha}$ and consequently
\begin{equation}
T^{-1*}_\tau(\partial_{x_\alpha}\psi)=\partial_{\bar{x}_\alpha}(T^{-1 *}_\tau\psi),
\label{Tx-xT}
\end{equation}
where $\psi$ is a function on $\Theta_K$.     

\subsection{Action of diffeomorphisms on cylindrical functions and momentum operators }

Now we will extend the action $\tau $ from $\cal K$ onto $\Cyl$: given $\Psi\in\Cyl$ we define
\[
(\tau \Psi)(\theta):=\Psi(\tau^*\theta)  
\]   
---this definition guarantees that $\tau $ acts linearly on $\Cyl$. Assume that $\Psi\in\Cyl$ is compatible with a set $K$ of independent d.o.f., that is, $\Psi=\pr_{K}\psi$. Then by virtue of \eqref{prt-Tpr}
\[
(\tau \Psi)(\theta)=\psi(\pr_K(\tau^*\theta))=\psi(T^{-1}_\tau(\pr_{\tau K}(\theta)))=[\pr^*_{\tau K}(T^{-1*}_\tau\psi)](\theta).
\]       
\begin{cor}
If $\Psi=\pr_K^*\psi$ then $\tau \Psi=\pr^*_{\tau K}(T^{-1*}_\tau\psi)$. If $\Psi$ is compatible with $K$ then $\tau \Psi$ is compatible with $\tau K$. 
\label{cor-tau-cyl}
\end{cor}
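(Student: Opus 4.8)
The plan is to read the first assertion directly off the chain of equalities displayed immediately above the corollary, and then to promote it to a statement about compatibility by checking a single smoothness point. All the structural work has already been done in the preceding subsection, so the corollary is essentially a bookkeeping consequence.

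For the first assertion I would record, for arbitrary $\theta\in\Theta$, the chain $(\tau\Psi)(\theta)=\Psi(\tau^*\theta)=\psi(\pr_K(\tau^*\theta))=\psi(T^{-1}_\tau(\pr_{\tau K}(\theta)))=[\pr^*_{\tau K}(T^{-1*}_\tau\psi)](\theta)$. Here the first equality is the definition of the action $\tau$ on $\Cyl$, the second is the assumed compatibility $\Psi=\pr^*_K\psi$, the third uses \eqref{prt-Tpr} rewritten as $\pr_K\circ\tau^*=T^{-1}_\tau\circ\pr_{\tau K}$, and the last is the definition of the pull-back $T^{-1*}_\tau$ followed by that of $\pr^*_{\tau K}$. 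Since $\theta$ is arbitrary this yields $\tau\Psi=\pr^*_{\tau K}(T^{-1*}_\tau\psi)$ as an identity of functions on $\Theta$.

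For the second assertion I would invoke the definition of a cylindrical function compatible with $\tau K$: it suffices to exhibit a \emph{smooth} function on $\Theta_{\tau K}$ whose pull-back by $\pr_{\tau K}$ equals $\tau\Psi$. The first part already supplies the candidate $T^{-1*}_\tau\psi$, so the only remaining point is to verify that this function is smooth. I would settle this using the fact established just above \eqref{Tx-xT}, namely that in the natural coordinate frames $T_\tau$ is the identity map; hence $T_\tau$ is a diffeomorphism of reduced configuration spaces, $T^{-1}_\tau$ is smooth, and the pull-back $T^{-1*}_\tau\psi$ of the smooth function $\psi$ is again smooth. This is exactly the condition demanded by the definition, so $\tau\Psi$ is compatible with $\tau K$.

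The computation is routine, and the only place where care is genuinely needed is the smoothness step—hence I would make explicit that $T_\tau$ is a bona fide diffeomorphism and not merely the set-theoretic bijection furnished earlier. Everything it relies on (the bijectivity of $T_\tau$, the identity \eqref{prt-Tpr}, and the coordinate description of $T_\tau$ coming from \eqref{x-tilK} and \eqref{K-tKT}) has already been proved, so no new obstacle arises.
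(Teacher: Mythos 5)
Your proof is correct and takes essentially the same route as the paper: the chain of equalities $(\tau\Psi)(\theta)=\Psi(\tau^*\theta)=\psi(\pr_K(\tau^*\theta))=\psi(T^{-1}_\tau(\pr_{\tau K}(\theta)))=[\pr^*_{\tau K}(T^{-1*}_\tau\psi)](\theta)$ based on \eqref{prt-Tpr} is exactly the paper's argument, and the compatibility claim rests on the previously established facts that $\tau K$ is a set of independent d.o.f. and that $T_\tau$ is the identity in the natural coordinate frames. Your explicit check that $T^{-1*}_\tau\psi$ is smooth (so that the definition of compatibility is genuinely met) is a point the paper leaves implicit, but it uses the same ingredients rather than a different method.
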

\noindent The corollary means that the action of $\tau$ preserves the space $\Cyl$---recall that every element of $\Cyl$ is a finite linear combination of functions such that each function is compatible with a set of independent d.o.f.. Thus $\Psi\mapsto\tau \Psi$ is a linear automorphism on $\Cyl$.

In the next step we define an action of diffeomorphisms on the linear space $\hat{\cal F}$ of the momentum operators: given $\hat{\varphi}\in\hat{\cal F}$, 
\[
(\tau \hat{\varphi})\Psi:=\tau \big(\hat{\varphi}(\tau^{-1} \Psi)).
\] 
It follows immediately from the definition that $\hat{\varphi}\mapsto\tau \hat{\varphi}$ is a linear map.

Let us now calculate $\hat{\varphi}^V_I(\tau\Psi)$. We know already that for every $\Psi\in\Cyl$ there exists a finite set $K\equiv K_{u,\gamma}$ and a complex function $\psi$ on $\Theta_K$ such that $\Psi=\pr_{K}\psi$. Obviously, 
\[
\tau K\equiv\tau K_{u,\gamma}=K_{\tau(u),\tau(\gamma)}
\]
Let $u=\{y_1,\ldots,y_N\}$ and let $(\bar{z}^{I}_i,\bar{x}^{J}_j)$ be the natural coordinates \eqref{lin-coor} on $\Theta_{\tau K}$. Then by virtue of \eqref{hat-zeta} and Corollary \ref{cor-tau-cyl}
\[
\hat{\varphi}^V_I(\tau \Psi)=\sum_{i=1}^N \varepsilon(V,\tau(y_{i}))\,\pr^*_{\tau K}(\partial_{\bar{z}^{I}_{i}}(T^{-1*}_\tau\psi))
\]
Using in turn \eqref{Tx-xT} and \eqref{prt-Tpr} we obtain
\[
\hat{\varphi}^V_I(\tau \Psi)=\tau \Big(\sum_{i=1}^N \varepsilon(V,\tau(y_{i}))\,\pr^*_{K}(\partial_{z^{I}_{i}}\psi))\Big),
\]
where $(z^I_i)$ are coordinates being a part of the natural coordinate frame on $\Theta_K$. Note now that by virtue of \eqref{Vy} 
\[
\varepsilon(V,\tau(y_i))=\varepsilon(\tau^{-1}(V),y_i)
\]
hence
\[
\hat{\varphi}^V_I(\tau \Psi)=\tau (\hat{\varphi}^{\tau^{-1}(V)}\Psi).
\]

We conclude that\footnote{Taking into account \eqref{tau-kf} we could define the action $\tau $ on $\hat{\cal F}$ requiring that $\tau \hat{\varphi}^V_I:=\hat{\varphi}^{\tau(V)}_I$ and $\tau \hat{\varphi}^S_J:=\hat{\varphi}^{\tau(S)}_J$, but then we could run into troubles with proving linearity of the action.}
\begin{align*}
\tau \hat{\varphi}^V_I&=\hat{\varphi}^{\tau(V)}_I,&\tau \hat{\varphi}^S_J&=\hat{\varphi}^{\tau(S)}_J
\end{align*}
---the latter equation can be proven similarly. The result means that the action of $\tau $ preserves the space $\hat{\cal F}$. Thus $\hat{\varphi}\mapsto\tau\hat{\varphi}$ is a linear automorphism on $\hat{\cal F}$.

\subsection{Action of diffeomorphisms on the directed set $(\Lambda,\geq)$ }

Given $\lambda=(\hat{F},K_{\dot{\gamma}})\in\Lambda$, we define
\[
\tau \lambda:=(\tau \hat{F},\tau K_{\dot{\gamma}}).
\] 
Let us prove now that this action preserves $\Lambda$ and the directing relation $\geq$ on it.

According to Definition \ref{df-Lambda} $\tau \lambda$ is an element of $\Lambda$ if and only if  $\tau K_{\dot{\gamma}}$ is a set of independent d.o.f. defined by a speckled graph and the pair $(\tau \hat{F},\tau K_{\dot{\gamma}})$ is non-degenerate. It is obvious that if $\dot{\gamma}=(u,\gamma)$ is a speckled graph then $\tau(\dot{\gamma})=(\tau(u),\tau(\gamma))$ is also a speckled graph. By virtue of \eqref{tau-kf} $\tau K_{\dot{\gamma}}=K_{\tau(\dot{\gamma})}$. On the other hand, since the action $\tau $ on $\hat{\cal F}$ is linear and invertible $\tau \hat{F}$ is a linear subspace of $\hat{\cal F}$ of the dimension equal to $\dim\hat{F}$. Therefore $\dim\tau \hat{F}$ is equal to the number of elements of $\tau K$. Let $(\hat{\varphi}_1,\ldots,\hat{\varphi}_N)$ be a basis of $\hat{F}$ and $K_{\dot{\gamma}}=\{\kappa_1,\ldots,\kappa_N\}$. Then $(\tau \hat{\varphi}_1,\ldots,\tau \hat{\varphi}_N)$ is a basis of $\hat{F}$ and $\tau K_{\dot{\gamma}}=\{\tau \kappa_1,\ldots,\tau \kappa_N\}$. We have
\[
\tilde{G}_{\beta\alpha}:= (\tau \hat{\varphi}_\beta)(\tau \kappa_\alpha)=\tau (\hat{\varphi}_\beta\kappa_\alpha)=\hat{\varphi}_\beta\kappa_\alpha=G_{\beta\alpha},
\] 
---here we used the fact that $\hat{\varphi}_\beta\kappa_\alpha$ is a constant cylindrical function. Thus non-degene\-ra\-cy of $(\tau \hat{F},\tau K_{\dot{\gamma}})$ follows from non-degeneracy of $(\hat{F},K)$. Consequently, $\tau \lambda\in\Lambda$.    

Consider now a pair $\lambda'=(\hat{F}',K_{\dot{\gamma}'})$ and $\lambda=(\hat{F},K_{\dot{\gamma}})$ such that $\lambda'\geq\lambda$. Using Definition \ref{df-Lambda->} and properties of the action of $\tau $ on $\hat{\cal F}$ we obtain  
\begin{align*}
&\tau \hat{F}'\supset \tau \hat{F},&&\tau(\dot{\gamma}')\geq\tau(\dot{\gamma}),
\end{align*}
which means that 
\[
\tau \lambda'=(\tau \hat{F}',\tau K_{\dot{\gamma}'})\geq \tau \lambda=(\tau \hat{F},\tau K_{\dot{\gamma}}).
\]
Consequently, the relation $\geq$ is preserved by the action $\tau $ on $\Lambda$.

We conclude that the directed set $(\Lambda,\geq)$ is preserved by the action of diffeomorphisms.    

\subsection{Action of diffeomorphisms on $\D$}

Consider $\lambda=(\hat{F},K_{\dot{\gamma}})\in\Lambda$. Recall that the map $T_\tau:\Theta_{K_{\dot{\gamma}}} \to\Theta_{\tau K_{\dot{\gamma}}}$ when expressed in the natural coordinate frames $(x_\alpha)$ on $\Theta_{K_{\dot{\gamma}}}$ and $(\bar{x}_\alpha)$ on $\Theta_{\tau K_{\dot{\gamma}}}$ is an identity map. This means that $T_\tau$ maps the measure $d\mu_\lambda$ on $\Theta_{K_{\dot{\gamma}}}$ defined by \eqref{dmu-la} to the measure $d\mu_{\tau \lambda}$ on $\Theta_{\tau K_{\dot{\gamma}}}$ defined analogously. Therefore the map
\begin{equation}
\h_\lambda\ni\psi\mapsto U_\tau\psi:=T^{-1*}_\tau\psi
\label{tau-Hl}
\end{equation}
is a {\em unitary} map onto $\h_{\tau \lambda}$. Consequently, 
\begin{equation}
\D_\lambda\ni\rho_\lambda\mapsto u_\tau\rho_\lambda:=U_\tau\rho_\lambda U^{-1}_\tau
\label{tau-Dl}
\end{equation}
is a map onto $\D_{\tau \lambda}$. 
     
Consider now an element $\rho$ of $\D$---by virtue of the definition of a projective limit $\rho$ is a family $\{\rho_\lambda\}_{\lambda\in\Lambda}$ such that  $\rho_\lambda\in\D_\lambda$ for every $\lambda$ and $\pi_{\lambda\lambda'}\rho_{\lambda'}=\rho_\lambda$ for every pair $\lambda'\geq\lambda$. It is natural to define an action of diffeomorphisms on $\rho$ as follows
\begin{equation}
\tau \rho:=\{u_\tau\rho_\lambda\}_{\lambda\in\Lambda},
\label{tau-rho}
\end{equation}
but is $\tau\rho$ an element of $\D$? Clearly, $\tau \rho\in\D$ if for every $\lambda'\geq\lambda$   
\[
\pi_{\bar{\lambda}\bar{\lambda}'}(u_\tau\rho_{\lambda'})=u_\tau\rho_\lambda,
\]
where we denoted $\bar{\lambda}\equiv\tau\lambda$ and $\bar{\lambda}'\equiv\tau\lambda'$  to keep the notation compact. Thus to prove that the action of diffeomorphisms on $\D$ preserves the space  we should show that
\begin{equation}
u^{-1}_\tau\circ\pi_{\bar{\lambda}\bar{\lambda}'}\circ u_\tau=\pi_{\lambda\lambda'}.
\label{upiu-pi}
\end{equation}

Assume that $\lambda'=(\hat{F}',K')\geq\lambda=(\hat{F},K)$. Recall the projection $\pi_{\lambda\lambda'}$ is determined by the projection $\pr_{KK'}$ and the subspace $[\hat{F}]'$ of $\Theta_{K'}$ (see Section \ref{D}). Similarly, the projection $\pi_{\bar{\lambda}\bar{\lambda'}}$ is constructed from $\pr_{\bar{K}\bar{K}'}$ and the subspace $\overline{[\tau\hat{F}]}{}'$ of $\Theta_{\tau K'}$, where $\bar{K}\equiv\tau K$, $\bar{K}'\equiv \tau K'$ and $\hat{\varphi}\mapsto \overline{[\hat{\varphi}]}{}'$ is the linear map from $\hat{\cal F}$ onto $\Theta_{\tau K'}$ defined in Section \ref{D}. Taking into account that the map $u_\tau$ appearing \eqref{upiu-pi} is defined by $T_\tau$ (see \eqref{tau-Dl} and \eqref{tau-Hl}) we conclude that to prove \eqref{upiu-pi} it is enough to show that
\begin{align}
T_\tau^{-1}\circ\pr_{\bar{K}\bar{K}'}\circ T_\tau&=\pr_{KK'},& T_\tau[\hat{F}]'=\overline{[\tau\hat{F}]}{}'.
\label{II}
\end{align}

Let us denote elements of $K$ and $K'$ as it was done in Section \ref{D}. It follows from \eqref{k-Bk'} that
\[
\tau\kappa_\alpha=B^\beta{}_\alpha\tau\kappa'_\beta,
\] 
hence by virtue of \eqref{pr-KK}
\[
\pr_{\bar{K}\bar{K}'}=\widetilde{\tau K}^{-1}\circ(B\widetilde{\tau K'}).
\]
Using \eqref{K-tKT} we obtain the first equation in \eqref{II}
\[
T_\tau^{-1}\circ\pr_{\bar{K}\bar{K}'}\circ T_\tau=T_\tau^{-1}\circ\widetilde{\tau K}^{-1}\circ (B\widetilde{\tau K'})\circ T_\tau=\tilde{K}^{-1}\circ(B\tilde{K}')=\pr_{KK'}.
\]

An operator $\hat{\varphi}\in\hat{F}$ defines on $\Theta_{K'}$ the constant vector field \eqref{v-const} which means that in the natural coordinate frame $(x'_\beta)$ the point $[\hat{\varphi}]'\in\Theta_{K'}$ is represented by $(\hat{\varphi}\kappa'_\beta)$. On the other hand, the operator $\tau\hat{\varphi}$ defines on $\Theta_{\tau K'}$ a constant vector field 
\[
\sum_\beta\big((\tau\hat{\varphi})(\tau\kappa'_\beta)\big)\partial_{\bar{x}'_\beta},     
\]
where $(\bar{x}'_\beta)$ are the natural coordinates on $\Theta_{\tau K'}$. Thus  the point $\overline{[\tau\hat{\varphi}]}{}'\in\Theta_{\tau K'}$ is represented by 
\[
\big((\tau\hat{\varphi})(\tau\kappa'_\beta)\big)=\big(\tau(\hat{\varphi}\kappa'_\beta)\big)=(\hat{\varphi}\kappa'_\beta)
\]
in the frame $(\bar{x}'_\beta)$. Since the map $T_\tau$ expressed in the coordinates $(x'_\beta)$ and $(\bar{x}'_\beta)$ is an identity we conclude that
\[
T_\tau([\hat{\varphi}]')=\overline{[\tau\hat{\varphi}]}{}'
\]   
which means that the second equation in \eqref{II} is true.

In this way we showed that $\tau\rho\in\D$, that is, that the action \eqref{tau-rho} of diffeomorphisms preserves the space $\D$.

\section{Other spaces of quantum states for a theory of the phase space $P\times \Theta$ \label{other}}

\subsection{Spaces built from other variables on the phase space}

A space $\bar{\D}$ of quantum states similar to $\D$ can be constructed by applying the natural description of the phase space \cite{q-suit}, that is, the description in terms of fields $(\theta^A,p_B)$ (see Section \ref{phsp}). In this case elementary d.o.f. are given by integrals of one-forms $(\theta^A)$ over edges and by integrals of two-forms $(p_B)$ over faces. To define a directed set $(\bar{\Lambda},\geq)$ which underlies the construction of $\bar{\D}$ it is enough to use the directed set of all usual (non-speckled) graphs. In other words, this construction is fully analogous to the construction of quantum states for DPG presented in \cite{q-stat}---the only difference between these two constructions is that in the case of $\bar{\D}$  the canonical variables are four one-forms $(\theta^A)$ and four two-forms $(p_B)$ while in \cite{q-stat} the canonical variables are one one-form and one two-form.    

Thus the construction of $\bar{\D}$ is simpler than that of $\D$. Unfortunately, the space $\bar{\D}$ possesses an undesirable property: as shown in \cite{q-suit} quantum states in $\bar{\D}$ correspond not only to elements of $\Theta$ by also to all quadruplets $(\theta^A)$ which define via \eqref{q} non-Riemannian metrics on $\Sigma$. Since we do not see any workable method which could distinguish in $\bar{\D}$ states corresponding only to elements of $\Theta$ we prefer to base the quantization of TEGR on the space $\D$.  

Let us emphasize that constructing the space $\D$ we never applied the fact that the variables $(\zeta_I,r_J,\xi^K,\theta^L)$ are defined by $\iota=\sgn$---this fact was used in the discussion in Section \ref{speckl}, but the only goal of this discussion was to show that the impossibility to approximate $\sgn(\theta^I)$ by means of  functions on $\Theta_{K_{\dot{\gamma}}}$ is not an obstacle for defining quantum geometry operators and quantum constraints as counterparts of classical constraints of TEGR and YMTM. In other words, the discussion concerned not the very construction of $\D$ but rather further applications of $\D$. This means that a space $\D_\iota$ of quantum states for TEGR can be built in the same way starting from any variables $(\zeta_{\iota I},r_J,\xi_\iota^K,\theta^L)$, however, as shown in \cite{ham-nv} the constraints of TEGR and YMTM derived in \cite{oko-tegr} and \cite{os} cannot be imposed on $\D_\iota$ unless $\iota=\sgn$ or $\iota=-\sgn$.  

In particular, the variables $(\zeta_{-s I},r_J,\xi_{-s}^K,\theta^L)$ given by $\iota=-\sgn$ can be used to construct a space $\D_{-s}$. By virtue of \eqref{new-old}
\begin{align}
\zeta_{sI}&=-\zeta_{-sI}, & \xi^K_s&=-\xi^K_{-s},
\label{s=--s}
\end{align}
where we used the original notation for the variables $(\zeta_{I},r_J,\xi^K,\theta^L)$ (see \eqref{simp-n}). These simple relations imply that the space $\D_{-s}$ and $\D$ are the same: $\D\equiv \D_{-s}$---a proof of this statement can be found in Appendix \ref{DD-s}.

\subsection{Hilbert spaces built from some almost periodic functions}

It was shown in \cite{q-stat} that for every theory for which it is possible to apply the general method presented in that paper to obtain a convex set of quantum states there exists another space of quantum states. This space is a Hilbert space built from almost periodic functions defined on those reduced configuration spaces which are isomorphic to $\R^N$. Thus in the case of TEGR there exist Hilbert spaces $\{\h_\iota\}$ and $\bar{\h}$: the former ones associated with the spaces $\{\D_\iota\}$ and the latter one with $\bar{\D}$. However, in order to proceed with the second step of the Dirac strategy we would have to define on such a Hilbert space operators corresponding to the constraints and we  expect this to be quite difficult. The source of the difficulty is the fact that on a Hilbert space of almost periodic functions on $\R^N$ the standard quantum operator of position is ill defined because an almost periodic function multiplied by a Cartesian coordinate on $\R^N$ is no longer an element of this Hilbert space. Since configurational elementary d.o.f. define Cartesian coordinates on configuration spaces we see that we would not be able to represent the configurational d.o.f. on $\h_\iota$ and $\bar{\h}$ by usual multiplication. To define an operator on $\h_\iota$ or $\bar{\h}$  corresponding to such a d.o.f. we would have to multiply the d.o.f. by a purely imaginary number and exponentiate the product. But taking into account the form of the constraints of TEGR \cite{oko-tegr,ham-nv} it is hard to expect that such ``exponentiated position operators'' can be used to represent the constraints. Thus the spaces $\{\h_\iota\}$ and $\bar{\h}$ do not seem to be very promising for canonical quantization of TEGR.                

\section{Discussion}

\subsection{General remarks}

The main results of this paper is the space $\D$ of quantum states and the related $C^*$-algebra $\cal B$ of quantum observables. The space $\D$ is not a Hilbert space but a convex set and each element of it naturally defines an algebraic state on the algebra, hence a Hilbert space can be obtained \cite{kpt} from any state in $\D$ and the algebra via the GNS construction. Although for every $\lambda\in\Lambda$ the space $\D_\lambda$ is a set of all density operators on the Hilbert space $\h_\lambda$ we do not expect that there exists a Hilbert space such that $\D$ is a set of density operators on it.    

The construction of $\D$ and $\cal B$ is based on the phase space $P\times \Theta$ described in Section \ref{phsp}. The elementary d.o.f. \eqref{k-y}, \eqref{k-e}, \eqref{phi-V} and \eqref{phi-S} used in the construction are defined as natural integrals of the canonical variables $(\zeta_I,r_J,\xi^K,\theta^L)$ being differential forms on the manifold $\Sigma$. Recall that the natural variables $(\theta^A,p_B)$ on the phase space are functions \eqref{old-new} of $(\zeta_I,r_J,\xi^K,\theta^L)$ involving the factor $\sgn(\theta^I)$ defined by \eqref{sgn-th}. Since the factor cannot be expressed or even approximated by the elementary d.o.f. (see Lemma \ref{theta-x3}) the spaces $\D$ and $\cal B$ may be useful only for a class of theories: the Hamiltonian (and possible constraints) of a theory belonging to this class when expressed in terms of the variables $(\zeta_I,r_J,\xi^K,\theta^L)$ may not depend on the factor. As shown in \cite{ham-nv}, both TEGR and YMTM belong to this class.         

\subsection{Diffeomorphism invariant states}

Since $\D$ is a space of kinematic quantum states to proceed further with the canonical quantization of TEGR we have to find a procedure by means of which we could single out physical quantum states for TEGR---an outline of such procedure was presented in \cite{q-stat}. Because TEGR is a diffeomorphism invariant theory it is reasonable to require that each physical state is invariant with respect to the natural action of the spatial diffeomorphisms on the space $\D$ defined in Section \ref{diff-D} as it is required in the case of LQG \cite{cq-diff,rev,rev-1}. Existence of such states in $\D$ and possible uniqueness are open questions---at this moment it is difficult to predict whether a theorem of existence and uniqueness of such a state analogous to those presented in \cite{lost,fl} can be proven; let us only note that in the case of a space of quantum states for DPG constructed in \cite{oko-ncomp} there are plenty of diffeomorphism invariant states, however that construction does not follow the general pattern described in \cite{q-stat} and differs significantly from both the present construction of $\D$ and the construction of the space of quantum states for DPG described in \cite{q-stat}.

\subsection{The space $\D$ versus the kinematic Hilbert space of LQG}

The space $\D$ is a space of kinematic quantum states meant to serve as an element of a background independent canonical quantization of general relativity (GR) in the teleparallel formulation. Let us compare the space with its counterpart in LQG since LQG is a result of a background independent canonical quantization of an other formulation of GR. 

The counterpart is the Hilbert space $\h_{\rm LQG}$ defined as a space of some wave functions. These wave functions are defined on a space $\Abar$ of so called generalized $SU(2)$-connections \cite{proj} over a three-dimensional manifold $\Sigma$ and the scalar product on $\h_{\rm LQG}$ is defined by an integral with respect to the Ashtekar-Lewandowski (AL) measure $d\mu_{\rm AL}$ \cite{al-hoop} on $\Abar$:
\[
\h_{\rm LQG}:=L^2(\Abar,d\mu_{\rm AL}). 
\]

Alternatively, the space $\h_{\rm LQG}$ can be seen as the inductive limit of an inductive family of Hilbert spaces $\{\h_\gamma,p_{\gamma'\gamma}\}$ labeled by the directed set of (usual) graphs in $\Sigma$ (for some details of this alternative description see \cite{oko-ncomp}). Each Hilbert space $\h_\gamma$ is defined as follows: given graph $\gamma$, one reduces the Hamiltonian configuration space $\A$ of LQG being the space of all $SU(2)$-connections over $\Sigma$ obtaining a reduced configuration space $\A_\gamma$ isomorphic to $SU(2)^N$, where $N$ is the number of edges of $\gamma$. Next, one defines
\[
\h_\gamma:=L^2(\A_\gamma,d\mu_{\gamma}),
\]              
where $d\mu_{\gamma}$ is a measure on $\A_\gamma$ given uniquely by the normed Haar measure on $SU(2)^N$.

It is easy to find some close similarities between elements of the construction of $\h_{\rm LQG}$ and those of the construction of $\D$: $\A$ corresponds to the Hamiltonian configuration space $\Theta$, the spaces $\{A_\gamma\}$ are counterparts of  the reduced configuration spaces $\{\Theta_{K_{\dot{\gamma}}}\}$, likewise the Hilbert spaces $\{\h_\gamma\}$ are counterparts of the spaces $\{\h_\lambda\}$. Note also that the measure $d\mu_\lambda$ given by \eqref{dmu-la} which defines $\h_\la$ via \eqref{H-la} is in fact a Haar measure on $\Theta_{K_{\dot{\gamma}}}$ (the latter space being a real linear space is naturally a Lie group). Moreover, as shown in \cite{q-stat} for the space $\Theta$ there exists a space $\bar{\Theta}$ related to $\Theta$ in the same way as $\Abar$ is related to $\A$.                   

One may ask now why we did not define a Hilbert space for TEGR in the same way as the space $\h_{\rm LQG}$ is defined? The answer is very simple: each space $\A_\gamma$ is {\em compact} and this fact enables to define the AL measure on $\Abar$ and, alternatively, it enables to define the embeddings $\{p_{\gamma'\gamma}:\h_\gamma\to\h_{\gamma'}\}$ which allow to ``glue'' the Hilbert spaces $\{\h_\gamma\}$ into $\h_{\rm LQG}$ via the inductive limit. On the other hand, every space $\Theta_{K_{\dot{\gamma}}}$ is {\em non-compact} and this fact turns out to be an obstacle for defining a measure on $\bar{\Theta}$ as a counterpart of the AL measure and, alternatively, it turns out to be an obstacle for defining embeddings $p_{\lambda'\lambda}:\h_\lambda\to\h_{\lambda'}$ which would allows us to ``glue'' the spaces  $\{\h_\lambda\}$  into a larger one by means of an inductive limit. In other words, non-compactness of the spaces $\{\Theta_{K_{\dot{\gamma}}}\}$ precludes the use of the inductive techniques but on the other hand linearity of the spaces allows us to apply the projective techniques according to the original idea by Kijowski \cite{kpt}.        

Note however that the compactness of the spaces $\{\A_\gamma\}$ is in fact obtained by means of a reduction of the natural Lorentz symmetry of GR done at the level of the classical theory---this symmetry is reduced to its ``sub-symmetry'' described by the group of three-dimensional rotations. Technically it is achieved by a passage from the complex Ashtekar-Sen connections \cite{a-var-1,a-var-2} of the non-compact structure group $SL(2,\C)$ to the real Ashtekar-Barbero connections \cite{barb} of the compact structure group $SU(2)$. Let us emphasize that the construction of $\D$ does not require any reduction of the Lorentz symmetry of the classical theory, however it is still to early to claim that there are no obstacles for defining local Lorentz transformations on $\D$---this issue needs to be analyzed carefully.

Let us finally mention an important difference between the spaces $\D$ and $\h_{\rm LGQ}$ (for a similar discussion see \cite{oko-ncomp}). Both spaces $\D$ and $\h_{\rm LGQ}$  are built from some spaces associated with (speckled or usual) graphs in $\Sigma$: in the former case these spaces are $\{\D_\lambda\}$ ($\lambda=(F,K_{\dot{\gamma}})$, but $\D_\lambda$ does not depend actually on the space $F$), in the latter one these spaces are $\{\h_\gamma\}$. Since $\D$ is the projective limit of $\{\D_\lambda\}$ each state $\rho\in\D$ is a collection $\{\rho_\lambda\}$ of states such that $\rho_\lambda\in\D_\lambda$. This means that, given $\lambda$, the state $\rho_\lambda$ contains only a partial information about $\rho$ and therefore it can be treated merely as {\em an approximation} of $\rho$ \cite{kpt}. On the other hand, in the case of $\h_{\rm LQG}$ defined as the inductive limit of $\{\h_\gamma\}$ for every graph $\gamma$ there exists a canonical embedding $p_\gamma:\h_\gamma\to\h_{\rm LQG}$ and consequently each element of $\h_\gamma$ can be treated as a rightful element of $\h_{\rm LQG}$.                       

\paragraph{Acknowledgments} This work was partially supported by the grant N N202 104838 of Polish Ministerstwo Nauki i Szkolnictwa Wy\.zszego.
                                                   
\appendix

\section{The spaces $\D$ and $\D_{-s}$ are the same \label{DD-s}}

The space $\D_{-s}$ is built exactly in the same way as the space $\D$ is, the only difference is that the starting point of the construction of $\D_{-s}$ are the variables $(\zeta_{-s I},r_J,\xi_{-s}^K,\theta^L)$. Let us then trace all steps of both constructions noting differences and similarities between them. In what follows the variable $(\zeta_{I},r_J,\xi^K,\theta^L)$ will be called {\em  first variables} while $(\zeta_{-s I},r_J,\xi_{-s}^K,\theta^L)$ will be called {\em second variables}.  

Elementary d.o.f. $\bar{\kappa}^I_y,\bar{\kappa}^J_e,\bar{\varphi}^V_K,\bar{\varphi}^S_L$ defined in an obvious way by the second variables  are related to the d.o.f. originating from the first ones as follows 
\begin{equation}
\begin{aligned}
\bar{\kappa}^I_y&=-\kappa^I_y,&\bar{\kappa}^J_e&=\kappa^J_e,
&\bar{\varphi}^V_K&=-\varphi^V_K,&\bar{\varphi}^S_L&=\varphi^S_L
\end{aligned}    
\label{bk-k}
\end{equation}
---see \eqref{s=--s}.

Let $\bar{K}_{u,\gamma}$ be a set of d.o.f. $\{\bar{\kappa}^I_y,\bar{\kappa}^J_e\}$ distinguished by the finite set $u\subset\Sigma$ and the graph $\gamma$. Using the formulae \eqref{bk-k} it is easy to realize that although $K_{u,\gamma}\neq\bar{K}_{u,\gamma}$ the equivalence relations $\sim_{K_{u,\gamma}}$ and $\sim_{\bar{K}_{u,\gamma}}$ coincide hence
\begin{align}
\Theta_{K_{u,\gamma}}&=\Theta_{\bar{K}_{u,\gamma}}, &\pr_{K_{u,\gamma}}&=\pr_{\bar{K}_{u,\gamma}}
\label{T-bT}
\end{align}
and both maps $\tilde{K}_{u,\gamma}$ and $\tilde{\bar{K}}_{u,\gamma}$ are bijections onto the same $\R^{3(N+M)}$, where $N$ is the number of points of $u$ and $M$ is the number of edges of $\gamma$. Note that this $\R^{3(N+M)}$ can be naturally decomposed into a direct sum $\R^{3N}\oplus\R^{3M}$---the first term in the sum is constituted by values of d.o.f. defined by points of $u$, while the second one by d.o.f. given by edges of $\gamma$. Consider now a $3(N+M)\times 3(N+M)$ matrix 
\begin{equation}
\mathbf{I}=
\begin{pmatrix}
-\mathbf{1}&\mathbf{0}\\
\mathbf{0}&\mathbf{1}
\end{pmatrix}
\label{matr-I}
\end{equation}
being a block matrix with respect to the decomposition $\R^{3N}\oplus\R^{3M}$, where $\mathbf{1}$ is a unit matrix. It follows from  \eqref{bk-k} that if the order of elements $\{\kappa_1,\ldots,\kappa_N\}$ of ${K}_{u,\gamma}$ corresponds naturally\footnote{The order of elements of ${K}_{u,\gamma}$  corresponds naturally to the order of elements of $\bar{K}_{u,\gamma}$ if for every $\alpha\in\{1,\ldots,N\}$ either $(i)$ $\kappa_\alpha=\kappa^I_y$ and $\bar{\kappa}_\alpha=\bar{\kappa}^I_y$ for $y\in u$ or $(ii)$ $\kappa_\alpha=\kappa^I_e$ and $\bar{\kappa}_\alpha=\bar{\kappa}^I_e$ for an edge $e$ of $\gamma$.} to the order of elements $\{\bar{\kappa}_1,\ldots,\bar{\kappa}_N\}$ of $\bar{K}_{u,\gamma}$ and if the first $3N$ elements of both sets are defined by points of $u$  then
\begin{equation}
\tilde{K}_{u,\gamma}=\mathbf{I}\tilde{\bar{K}}_{u,\gamma},
\label{K-IK}
\end{equation}
which means that both linear structures defined on $\Theta_{K_{u,\gamma}}$ by $\tilde{K}_{u,\gamma}$ and $\tilde{\bar{K}}_{u,\gamma}$ coincide. Let $(z^I_{i},x^J_{j})$ be the natural coordinates \eqref{lin-coor} on $\Theta_{K_{u,\gamma}}$ and $(\bar{z}^I_{i},\bar{x}^J_{j})$ the natural coordinates on $\Theta_{\bar{K}_{u,\gamma}}$. Obviously,
\begin{align}
z^I_{i}&=-\bar{z}^I_{i}, & x^J_{j}&=\bar{x}^J_{j}.
\label{z--bz}
\end{align}

Taking into account \eqref{T-bT} and \eqref{z--bz} we conclude that each cylindrical function $\Psi$ compatible with $K_{u,\gamma}$ is compatible with $\bar{K}_{u,\gamma}$ and vice versa. Consequently, the space $\Cyl$ defined by the first variables coincides with that defined by the second ones.            

Let $\hat{\bar{\varphi}}^V_I$ be a momentum operator defined by the d.o.f. $\bar{\varphi}^V_I$. Consider a cylindrical function $\Psi=\pr^*_{\bar{K}_{u,\gamma}}\psi$. By virtue of \eqref{hat-zeta-1}  
\begin{equation}
\hat{\bar{\varphi}}^V_I\Psi=\sum_{L=1}^3\sum_{l=1}^M \pr^*_{\bar{K}_{u,\gamma}}(\partial_{\bar{z}^L_{l}}\psi)\{\bar{\varphi}^V_I,\bar{\kappa}^L_{y_{l}}\}
\label{hat-bar-zeta}
\end{equation}
It follows from \eqref{z--bz} that $\partial_{\bar{z}^I_i}=-\partial_{z^I_i}$. Using this fact, \eqref{bk-k} and \eqref{T-bT} we obtain 
\[
\hat{\bar{\varphi}}^V_I\Psi=\sum_{L=1}^3\sum_{l=1}^M \pr^*_{{K}_{u,\gamma}}(-\partial_{{z}^L_{l}}\psi)\{{\varphi}^V_I,{\kappa}^L_{y_{l}}\}=-\hat{\varphi}^V_I\Psi
\]
since $\Psi=\pr^*_{K_{u,\gamma}}\psi$. Consequently,
\[
\hat{\bar{\varphi}}^V_I=-\hat{\varphi}^V_I.
\]
This result allows us to conclude that the linear space $\hat{\cal F}$ defined by the first variables coincides with that defined by the second ones. 

Consider now the directed sets $(\Lambda,\geq)$ and $(\bar{\Lambda},\geq)$ given by, respectively, the first and the second variables. Let $\hat{F}$ be a finite dimensional subspace of $\hat{\cal F}$ and let $(\hat{\varphi}_1,\ldots,\hat{\varphi}_N)$ be a basis of $\hat{F}$. Moreover, let $K_{\dot{\gamma}}=\{\kappa_1,\ldots,\kappa_N\}$ and $\bar{K}_{\dot{\gamma}}=\{\bar{\kappa}_1,\ldots,\bar{\kappa}_N\}$ be sets of independent d.o.f. Assume that the order of elements of $\bar{K}_{\dot{\gamma}}$ corresponds naturally to the order of elements of ${K}_{\dot{\gamma}}$ and the first $n<N$ elements of both sets correspond to points of $u$, where $(u,\gamma)=\dot{\gamma}$. Then
\[
\bar{G}_{\beta\alpha}:=\hat{\varphi}_\beta\bar{\kappa}_\alpha=
\begin{cases}
-\hat{\varphi}_\beta{\kappa}_\alpha&\text{if $\alpha\leq n$},\\
\hat{\varphi}_\beta {\kappa}_\alpha&\text{otherwise} 
\end{cases},
\]
hence
\[
\bar{G}=\mathbf{I} G,
\]
where $\bar{G}=(\bar{G}_{\beta\alpha})$, ${G}=({G}_{\beta\alpha})$ and $\mathbf{I}$ is the matrix \eqref{matr-I}. This means that the matrix $\bar{G}$ is non-degenerate if and only if  $G$ is non-degenerate. Consequently, the pair $(\hat{F},\bar{K}_{\dot{\gamma}})\in\bar{\Lambda}$ if and only if $(\hat{F},{K}_{\dot{\gamma}})\in{\Lambda}$ and the map
\[
{\Lambda}\ni{\lambda}\equiv(\hat{F},{K}_{\dot{\gamma}})\mapsto r(\lambda):=(\hat{F},\bar{K}_{\dot{\gamma}})\in\bar{\Lambda}
\]      
is a bijection. It is easy to check that the bijection preserves the directing relation.

Thus although ${\Lambda}\neq\bar{\Lambda}$ the directed sets are naturally isomorphic. 

Let $\bar{\lambda}=r(\lambda)$, where $\la=(\hat{F},K_{\dot{\gamma}})$. Because of \eqref{z--bz} the measures $d\mu_\lambda$ and $d\mu_{\bar{\la}}$ on $\Theta_{K_{\dot{\gamma}}}$ coincide hence
\begin{align}
\h_\lambda&=\h_{\bar{\lambda}}, & \D_{\la}&=\D_{\bar{\la}}.
\label{d-d-s}
\end{align}

Consider now $\lambda$ and $\bar{\lambda}$ as above and $\bar{\la}'=r(\la')$, where $\la'=(\hat{F}',K_{\dot{\gamma}'})$ and assume that $\la'\geq\la$; then $\bar{\la}'\geq\bar{\la}$. Our goal now is to show that 
\begin{equation}
\pi_{\la\la'}=\pi_{\bar{\la}\bar{\la'}}.
\label{pi-pi-s}
\end{equation}
To reach the goal it is enough to prove that
\begin{align}
\pr_{K_{\dot{\gamma}}K_{\dot{\gamma}'}}&=\pr_{\bar{K}_{\dot{\gamma}}\bar{K}_{\dot{\gamma}'}}, & [\hat{F}]'&=\overline{[\hat{F}]}{}',
\label{pr-pr-s}
\end{align}
where $\hat{\varphi}\mapsto\overline{[\hat{\varphi}]}{}'$ is the linear map from $\hat{\cal F}$ onto $\Theta_{\bar{K}_{\dot{\gamma}'}}$ defined by the second variables (see Section \ref{D}).   

To prove the first equation in \eqref{pr-pr-s} assume that the order of elements $\{\kappa_1,\ldots,\kappa_N\}$ of $K_{\dot{\gamma}}$ corresponds to the order of elements $\{\bar{\kappa}_1,\ldots,\bar{\kappa}_N\}$ of $\bar{K}_{\dot{\gamma}}$ and that the first $n<N$ elements of both sets are given by points of $u$, where $(u,\gamma)=\dot{\gamma}$. We impose an analogous requirement on the order of elements $\{\kappa'_1,\ldots,\kappa'_{N'}\}$ of $K_{\dot{\gamma}'}$ and $\{\bar{\kappa}'_1,\ldots,\bar{\kappa}'_{N'}\}$ of $\bar{K}_{\dot{\gamma}'}$ assuming that the first $n'$ elements of both sets are defined by points of $u'$. Since $\dot{\gamma}'\geq\dot{\gamma}$ by virtue of Lemma \ref{g'g-lin}  
\begin{align*}
\kappa_\alpha&=B^\beta{}_\alpha\kappa'_\beta,&\bar{\kappa}_\alpha&=\bar{B}^\beta{}_\alpha\bar{\kappa}'_\beta,
\end{align*}
where ${B}^\beta{}_\alpha,\bar{B}^\beta{}_\alpha$ are real numbers. Note, that d.o.f. of the sort \eqref{k-y} are linearly independent of d.o.f. of the sort \eqref{k-e} and vice versa. This means that
${B}^\beta{}_\alpha=0$ if $(i)$ $\beta>n'$ and $\alpha \leq n$ or $(ii)$ $\beta\leq n'$ and $\alpha > n$. Of course, $\bar{B}^\beta{}_\alpha=0$ exactly in the same cases. These facts together with \eqref{bk-k} mean that
\[
{B}^\beta{}_\alpha=\bar{B}^\beta{}_\alpha.
\]     

Let $\mathbf{I}'$ be an $N'\times N'$ matrix constructed analogously to the matrix \eqref{matr-I}. Then using \eqref{pr-KK} and \eqref{K-IK} we obtain
\[
\pr_{K_{\dot{\gamma}}K_{\dot{\gamma}'}}=\tilde{K}_{\dot{\gamma}}^{-1}\circ (B\tilde{K}_{\dot{\gamma}'})=\tilde{\bar{K}}_{\dot{\gamma}}^{-1}\mathbf{I}\circ (B\mathbf{I}'\tilde{\bar{K}}_{\dot{\gamma}'})=\tilde{\bar{K}}_{\dot{\gamma}}^{-1}\circ (\bar{B}\tilde{\bar{K}}_{\dot{\gamma}'})= \pr_{\bar{K}_{\dot{\gamma}}\bar{K}_{\dot{\gamma}'}}.
\]     

To prove the second equation in \eqref{pr-pr-s} we assume that elements of $K_{\dot{\gamma}'}$ and $\bar{K}_{\dot{\gamma}'}$ are ordered as above and  note that due to \eqref{bk-k} and \eqref{z--bz} each $\hat{\varphi}\in\hat{\cal F}$ defines the same constant vector field on $\Theta_{K_{\dot{\gamma}'}}=\Theta_{\bar{K}_{\dot{\gamma}'}}$ regardless we use the first or the second variables:
\[
\sum_{\beta=1}^{N'}(\hat{\varphi}\kappa'_\beta)\partial_{x'_\beta}=\sum_{\beta=1}^{n'}\big(\hat{\varphi}(-\kappa'_\beta)\big)(-\partial_{x'_\beta})+\sum_{\beta=n'+1}^{N'}(\hat{\varphi}\kappa'_\beta)\partial_{x'_\beta}=\sum_{\beta=1}^{N'}(\hat{\varphi}\bar{\kappa}'_\beta)\partial_{\bar{x}'_\beta}
\]   
---here $(x'_\alpha)$ are the natural coordinates \eqref{lin-coor-0} on $\Theta_{K_{\dot{\gamma}'}}$, and $(\bar{x}'_\alpha)$ are the natural coordinates on $\Theta_{\bar{K}_{\dot{\gamma}'}}$. Therefore $[\hat{\varphi}]'=\overline{[\hat{\varphi}]}{}'$ and the second equation in \eqref{pr-pr-s} follows. 

The fact that $\Lambda$ and $\bar{\Lambda}$ are isomorphic, the second equation \eqref{d-d-s} and Equation \eqref{pi-pi-s} just proven mean that $\D=\D_{-s}$.



\end{document}